\newtheorem{theorem}{Theorem}[section]
\newtheorem{lemma}[theorem]{Lemma}
\newtheorem{proposition}[theorem]{Proposition}
\newtheorem{corollary}[theorem]{Corollary}
\newtheorem{definition}[theorem]{Definition}
\newtheorem{remark}[theorem]{Remark}
\newtheorem{question}[theorem]{Question}
\newcommand\nph{\varphi}
\newcommand{\nrm}[1]{\| #1\|}
\newcommand{\cl}[1]{\mathcal{#1}}
\newcommand{\bb}[1]{\mathbb{#1}}
\begin{document}

\title{Measurable no-signalling correlations}

\author[G. Baziotis]{Georgios Baziotis}
\address{
School of Mathematical Sciences, University of Delaware, 501 Ewing Hall,
Newark, DE 19716, USA}
\email{baziotis@udel.edu}

\author[I. G. Todorov]{Ivan G. Todorov}
\address{
School of Mathematical Sciences, University of Delaware, 501 Ewing Hall,
Newark, DE 19716, USA}
\email{todorov@udel.edu}

\author[L. Turowska]{Lyudmila Turowska}
\address{Department of Mathematical Sciences, Chalmers University
of Technology and the University of Gothenburg, Gothenburg SE-412 96, Sweden}
\email{turowska@chalmers.se}

\date{25 September 2024}

\begin{abstract} 
We study no-signalling correlations, defined over 
a quadruple of second countable compact Hausdorff spaces. 
Using operator-valued information channels over 
abstract alphabets, we define the subclasses of 
local, quantum spatial and quantum commuting measurable 
no-signal\-ling correlations. En route, we establish 
measurable versions of the Stinespring's Dilation Theorem.
We define values of measurable non-local games of local, quantum 
spatial and 
quantum commuting type, as well as inner versions thereof, 
and show how the asymptotic values of a finite non-local game
can be viewed as special cases of the corresponding 
inner values of a measurable game, canonically associated 
with the given finite game. 
\end{abstract}

\maketitle

\tableofcontents


\section{Introduction}\label{s_intro}

Non-local games are played cooperatively by 
two players against a Verifier, aiming to convince the latter 
of the joint possession of a certain predescribed knowledge. 
In each round of the game, the Verifier selects a pair of questions 
from given finite sets $X$ and $Y$ according to a probability 
distribution on $X\times Y$, and sends one to each of 
the players. The players respond, without communicating with each other, 
with answers selected from finite sets $A$ and $B$, 
and upon their receipt by the Verifier, the latter checks if the quadruple 
$(x,y,a,b)$ formed by questions and answers satisfy 
a given rule predicate
$\lambda : X\times Y \times A\times B\to \{0,1\}$ -- 
if $\lambda(x,y,a,b) = 1$ 
the players win, and if not -- they 
lose, the round. 

Non-local games have been extensively studied from the perspectives of 
complexity theory, quantum information theory and operator algebras, 
with diverse applications varying from theoretical physics to pure 
mathematics; see 
\cite{csuu, cameron, dinur, mr_homom, lmr, lmprsstw, jnvwy, psstw, trevisan}
and the references therein.
In the event where the players do not possess a perfect strategy 
(that is, a strategy whose use yields a win in any round), 
the optimal winning probability, given a strategy type 
available to the players,  and its asymptotic version, 
become essential. 
The latter parameters are known as the 
game's value and the game's asymptotic value, respectively. 

The asymptotic value is computed by taking a limit 
of the normalised values 
along the parallel (IID) round repetition. 
It becomes natural to consider the infinite parallel
repetition of a finite non-local game
as a single infinite game, and to attempt defining 
a global pool of strategies which includes the 
strategies of an $n$-fold repetition as a subclass. 

In the present paper, we propose a setup that permits this. 
In fact, the setup we suggest goes beyond the IID model altogether, 
hosting games with memory; their
rule predicates (resp. probability distributions 
used by the Verifier to select questions)
in an $n$-round scenario
are not given by the $n$-fold product of the rule predicate
(resp. the $n$-fold product of the probability distribution) of the first round. 
Instead, the games whose study we initiate 
can be viewed as Borel subsets of $X\times Y\times A\times B$, where 
$X$, $Y$, $A$ and $B$ are 
second countable compact Hausdorff spaces (most typically,
Cantor spaces), along with a designated Borel 
probability measure on the direct product topological space $X\times Y$. 

In the usual non-local game case 
(where the sets $X$, $Y$, $A$ and $B$ are finite), 
the strategies of the players are bipartite no-signalling correlations with 
conditional probabilities $p(a,b|x,y)$ that represent the 
likelihood that the players 
respond with the pair $(a,b)$ of answers, when they are provided with 
the pair $(x,y)$ of questions. 
The main correlation types of interest are the 
local, quantum spatial 
and quantum commuting ones \cite{lmprsstw}, leading to 
corresponding (asymptotic) local, quantum and quantum commuting, values. 
By viewing a no-signalling correlation as an information channel 
from $X\times Y$ to $A\times B$, we define the class of measurable 
no-signalling correlations,
lifting simultaneously to the measurable setup 
the three aforementioned correlation types.

The latter is achieved by defining and proving some required 
properties of 
operator-valued information channels over 
abstract alphabets. 
We show that, under certain conditions, 
an operator-valued information channel can be dilated to 
a projection-valued information channel. 
In fact, this can be viewed as a consequence of a 
measurable version of the fundamental Stinespring Theorem 
(see \cite{arveson-acta} and \cite{Pa}) that we establish herein. 
We further show that operator-valued channels with commuting ranges 
can be multiplied to obtain a new operator-valued channel on the 
product of the two Borel $\sigma$-algebras. The dilation results 
in this case can be extended to show that the product operator-valued 
channel admits measurable dilations that are themselves products of
measurable families of spectral measures. 
In the simplest of cases, our approach yields a 
short operator-theoretic proof of the existence of a 
product spectral measure (resp. quantum probability measure)
arising from two spectral measures (resp. quantum probability measures)
with commuting ranges, under a metrisability assumption on the
underlying spaces; we note that a more general result 
was obtained by Birman and Solomyak in \cite{bs1} (see also \cite{bs2}). 

The paper is organised as follows. 
In Section \ref{s_ovic}, we 
extend the scalar-valued information theoretic 
setup of \cite{kakihara} to the operator-valued case,
establishing a correspondence between the $\cl B(H)$-valued information channels 
from a compact Hausdorff space $X$
into a compact Hausdorff space $A$ and the Borel families of 
unital completely positive maps defined in the space $C(A)$ 
of all complex-valued continuous functions on $A$.

In Section \ref{s_prod}, we establish the existence of product
operator-valued information channels, while Section \ref{s_dilres} is devoted to 
dilation results. 
The starting point is a measurable family $(\phi_x)_{x\in X}$
of unital completely positive maps defined on a compact Hausdorff space $X$. 
We prove two dilation results, in the case $X$ is equipped with 
a Borel measure $\nu$:
when $\nu$ is a 
continuous measure, we explicitly construct a $\nu$-measurable 
family $(\pi_x)_{x\in X}$ of *-homomorphisms 
which is a simultaneous dilation of the family 
$(\phi_x)_{x\in X}$ for every $x\in X$
while, if $\nu$ is arbitrary, 
the family $(\pi_x)_{x\in X}$ dilates $(\phi_x)_{x\in X}$ 
simultaneously almost everywhere. 
We note that the simultaneous dilations are subsequently used to prove disambiguation for values of measurable games: in the finite case, no-signalling correlations of (quantum and) quantum commuting 
type can be equivalently defined using either families of commuting POVM's 
or families of commuting PVM's. The fact that these two ways are indeed equivalent
does not readily follow in the (infinite) measurable case. 

In Section \ref{s_mnsc}, we define measurable no-signalling correlations, as 
well as their subclasses of local, quantum spatial and quantum commuting type. These types are sufficient for the definition of the main game values; 
we therefore leave the investigation of the 
approximately quantum measurable correlation type for future work, 
as it also requires appropriate notions of convergence, which is developed 
elsewhere (see \cite{baziotis}). We show that each of these types is 
preserved under the formation of products. 

In Section \ref{s_val}, we define values of measurable games of 
local, quantum spatial and quantum commuting type. 
Given topological homeomorphisms of the base spaces $X$, $Y$, $A$ and $B$, 
we define the inner value of a measurable game, which specialises 
to the asymptotic value of a finite game, when the latter is 
viewed as a measurable game. 
We point out that in the typical case, measurable games are defined over
Cantor spaces, and in the latter situation, the most useful choice
of reference homeomorhisms is that of the shifts. 
We give an example of a game with a genuine (one-site) memory, 
whose IID asymptotic value differs from its inner value. 
Finally, in Section \ref{s_questions} we collect several questions
arising from our work. 

\smallskip

We finish this section with setting basic notation and 
introducing some notions that will be used subsequently.
For a subset $\cl E$ of a vector space $\cl U$, we 
let $[\cl E]$ be the linear span of $\cl E$. 
For a Hilbert space $H$, we write $\cl B(H)$ for the C*-algebra of 
all bounded linear operators on $H$, and 
$\cl T(H)$ for the trace class ideal; we recall the canonical identification 
$\cl T(H)^* = \cl B(H)$. 
All topological spaces will be assumed to be second countable. 
We denote by $C(X)$ the unital 
C*-algebra of all continuous complex-valued functions on a compact Hausdorff space $X$,
and by $\frak{A}_X$ the Borel $\sigma$-algebra of $X$. 
We write $M(X)$ for the operator space of all complex Borel measures on $X$, and 
$P(X)$ for the (convex) set of all probability measures on $X$. 
Given $x\in X$, we let $\delta_x\in M(X)$ be 
the point mass at $x\in X$. 
Note the completely isometric identification $C(X)^* = M(X)$. 

If $\cl A$ is a C*-algebra, we write $\cl A^+$ for the cone of all positive elements of $\cl A$. 
A linear map $\phi : \cl A\to \cl B$ between unital C*-algebras is called 
\emph{positive}, if $\phi(\cl A^+)\subseteq \cl B^+$, and \emph{completely positive}, if the map 
$\phi^{(n)} : M_n(\cl A)\to M_n(\cl B)$ between the C*-algebras of $n$ by $n$ matrices over 
$\cl A$ and $\cl B$, respectively, given by $\phi^{(n)}((a_{i,j})_{i,j}) = (\phi(a_{i,j}))_{i,j}$, 
is positive for every $n\in \bb{N}$. 
The map $\phi$ is called \emph{unital} if $\phi(1) = 1$.


\section{Operator-valued information channels}\label{s_ovic}

In this section, we define an operator-valued version of information 
channels over abstract alphabets, and provide a characterisation thereof. 
Let $A$ be a compact Hausdorff space and $H$ be a separable Hilbert space. 
Recall that a \emph{quantum probability measure (QPM)} on $A$ with values in $\cl B(H)$ 
\cite{ffp, Pa} is a map
$E : \frak{A}_A\to \cl B(H)^+$ such that 
\begin{itemize}
\item[(i)] $E(A) = I_H$;
\item[(ii)] 
$E\left(\cup_{i=1}^{\infty}\alpha_i\right) = \sum_{i=1}^{\infty} E(\alpha_i)$ 
in the weak* topology, whenever
$(\alpha_i)_{i\in \bb{N}}\subseteq \frak{A}_A$
and $\alpha_i\cap \alpha_j = \emptyset$, $i\neq j$.
\end{itemize}
A quantum probability measure $E$ on $A$ with values in $\cl B(H)$ is called a 
\emph{spectral measure} (or a \emph{projection-valued quantum probability measure}) if 
$E(\alpha)$ is a projection for every $\alpha\in \frak{A}_A$ 
(equivalently, if $E(\alpha\cap \beta) = E(\alpha)E(\beta)$ for all $\alpha,\beta \in \frak{A}_A$). 
We write $Q(A;H)$ (resp. $P(A;H)$) for the set of all 
QPM's (resp. spectral measures) on $A$ with values in $\cl B(H)$. 
In the case where $A$ is finite, the elements of $Q(A;H)$ (resp. $P(A;H)$) 
are called \emph{positive operator-valued measures (POVM's)}
(resp. \emph{projection-valued measures (PVM's)}). Given $E\in Q(A;H)$ and vectors $\xi,\eta\in H$, we denote by $E_{\xi,\eta}$ the complex measure on $X$, given by 

$$E_{\xi,\eta}(\alpha)=\langle E(\alpha)\xi,\eta\rangle, \hspace{0.3cm} \alpha\in\frak{A}_A. $$

Let $X$ be a(nother) compact Hausdorff space. 
We will write $\frak{A}_{XA}$ for the Borel $\sigma$-algebra, generated by the sets of the form
$\chi\times\alpha$, $\chi\in \frak{A}_X$, $\alpha\in \frak{A}_A$.  
Given a subset $M\in \frak{A}_{XA}$ and $x\in X$, we let 
$M_x = \{a\in A : (x,a)\in M\}$ be the $A$-section of $M$ along $x$ (note that $M_x\in \frak{A}_A$ for every $x\in X$). 
A function $F : X\to \cl B(H)$ is called 
\emph{weakly measurable} if the function
\begin{equation}\label{eq_Fxieta}
x\mapsto \langle F(x)\xi,\eta\rangle
\end{equation}
is Borel for all vectors $\xi,\eta\in H$. 
Given a Borel measure $\mu$ on $X$, 
let $\frak{A}_{\mu}$ be the completion of $\frak{A}_X$ with respect to $\mu$.
The function $F : X\to \cl B(H)$ is called 
\emph{weakly $\mu$-measurable} if the function
(\ref{eq_Fxieta}) 
is \emph{$\mu$-measurable}, that is, measurable with respect to 
the $\sigma$-algebra $\frak{A}_{\mu}$, 
for all vectors $\xi,\eta\in H$.
By polarisation, $F$ is weakly measurable (resp. 
weakly $\mu$-measurable) if and only if 
the function $x\mapsto \langle F(x)\xi,\xi\rangle$ is measurable 
(resp. $\mu$-measurable) for all vectors $\xi\in H$.

We recall that an information channel over the pair $(X,A)$ of abstract alphabets is a
family $(p_x)_{x\in X}$ of probability measures on $A$, such that, for every 
$\alpha\in \frak{A}_A$, the function $x\mapsto p_x(\alpha)$ is Borel \cite{kakihara}. 
We will need an extension of this notion, in the presence of a reference 
measure on the input alphabet $X$. Namely, 
given a Borel probability measure $\mu$ on $X$, 
a \emph{$\mu$-information channel} over the pair $(X,A)$ 
is a family family $(p_x)_{x\in X}$ of probability measures on $A$, 
such that, for every 
$\alpha\in \frak{A}_A$, the function $x\mapsto p_x(\alpha)$ is
$\mu$-measurable.
We note that every information channel is automatically a $\mu$-information 
channel.

A useful operator-valued extension of this concept can be defined as follows.

\begin{definition}\label{d_opval}
Let $X$ and $A$ be compact Hausdorff spaces, $H$ be a Hilbert space 
and $E = (E(\cdot|x))_{x\in X}$ be a family 
of quantum probability measures on $A$ with values in $\cl B(H)$.
We call $E$ 
an \emph{operator-valued information channel} from $X$ to $A$
with values in $\cl B(H)$ if, for every $\alpha\in \frak{A}_A$, the function 
\begin{equation}\label{eq_measalpE}
X\to \cl B(H); \ \ x \mapsto E(\alpha|x)
\end{equation}
is weakly measurable. 

Given a Borel probability measure $\mu$ on $X$, 
we call $E$ an \emph{operator-valued $\mu$-information channel} from $X$ to $A$
with values in $\cl B(H)$
if, for every $\alpha\in \frak{A}_A$, the function (\ref{eq_measalpE}) is weakly $\mu$-measurable.
\end{definition}

A $\cl B(H)$-valued information channel $E$ from $X$ to $A$ will be written $E : X\to (A;H)$ for short; if $H = \bb{C}$, we write $E : X\to A$.
Sometimes it will be convenient to set $E_x(\alpha) = E(\alpha|x)$, $x\in X$, $\alpha\in \frak{A}_A$. 
Given $E:X\mapsto(A;H)$ and a unit vector $\xi\in H$ we denote by $(E_{x,\xi,\xi})_{x\in X}$ the scalar valued information channel, given by 
$$E_{x,\xi,\xi}(\alpha)=\langle E(\alpha|x)\xi,\xi\rangle, \hspace{0.2cm} \alpha\in\frak{A}_A, x\in X. $$
\par We will shortly characterise operator-valued channels in terms of 
canonical completely positive maps. We will need the 
following rather well-known result (see e.g. \cite[Proposition 4.5]{Pa}). Since we could not 
find a reference for its proof, we include a sketch for the convenience of the reader.

\begin{theorem}\label{qpm}
Let $A$ be a compact Hausdorff space, $H$ be a Hilbert space and 
$E:\mathfrak{A}_A\rightarrow \mathcal{B}(H)$ be a quantum probability measure 
(resp. spectral measure). 
Then there exists a unique unital completely positive map (resp. *-homomorphism) 
$\phi_E : C(A) \rightarrow \mathcal{B}(H)$, such that 			
$$\langle \phi_E(f)\xi,\eta\rangle = \int_A f dE_{\xi,\eta}, \hspace{0.6cm} f\in C(A),\ \xi,\eta\in H.$$	
Conversely, if $\phi : C(A) \rightarrow\mathcal{B}(H)$
is a unital completely positive map (resp. *-homomorphism) then 
there exists a unique quantum probability measure
(resp. spectral measure) $E:\mathfrak{A}_A \rightarrow\mathcal{B}(H)$ such that $\phi = \phi_E$.  
\end{theorem}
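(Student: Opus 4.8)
The plan is to establish the correspondence in two directions, relying on the classical Riesz–Markov representation theorem applied to scalar measures. For the direction from a quantum probability measure $E$ to a map $\phi_E$: for fixed $\xi,\eta\in H$, the assignment $\alpha\mapsto E_{\xi,\eta}(\alpha) = \langle E(\alpha)\xi,\eta\rangle$ defines a complex Borel measure on $A$ (countable additivity follows from the weak* countable additivity of $E$ together with the boundedness given by $E(A)=I_H$), so one may define a sesquilinear form $B_f(\xi,\eta) = \int_A f\, dE_{\xi,\eta}$ for each $f\in C(A)$. The form is bounded because $\|E_{\xi,\eta}\|\le \|\xi\|\|\eta\|$ (using $0\le E(\alpha)\le I_H$ and decomposing a complex measure via its Jordan/polar decomposition), and therefore there is a unique $\phi_E(f)\in\cl B(H)$ with $\langle \phi_E(f)\xi,\eta\rangle = \int_A f\, dE_{\xi,\eta}$. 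Linearity of $\phi_E$ follows from uniqueness and linearity of the integral; unitality is immediate from (i); and complete positivity is checked by taking $f = (f_{i,j})\in M_n(C(A))^+$, writing $f = g^*g$, and computing $\sum_{i,j}\langle \phi_E(f_{i,j})\xi_j,\xi_i\rangle$ as a limit of Riemann-type sums $\sum_{k}\langle E(\alpha_k)(\cdot),(\cdot)\rangle$ of positive expressions, using that each $E(\alpha_k)\otimes 1$ is positive on $H\otimes\bb{C}^n$. When $E$ is a spectral measure, the extra relation $E(\alpha\cap\beta) = E(\alpha)E(\beta)$ upgrades this to multiplicativity $\phi_E(fg) = \phi_E(f)\phi_E(g)$, first for simple functions and then by a density/approximation argument, so $\phi_E$ is a $*$-homomorphism.

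For the converse direction, given a unital completely positive $\phi : C(A)\to\cl B(H)$, I would fix $\xi,\eta\in H$ and consider the bounded linear functional $f\mapsto \langle\phi(f)\xi,\eta\rangle$ on $C(A)$; by Riesz–Markov it is represented by a unique complex Borel measure $\mu_{\xi,\eta}$ on $A$ with $\|\mu_{\xi,\eta}\|\le\|\xi\|\|\eta\|$. For each Borel set $\alpha$, the map $(\xi,\eta)\mapsto \mu_{\xi,\eta}(\alpha)$ is sesquilinear and bounded, hence equals $\langle E(\alpha)\xi,\eta\rangle$ for a unique $E(\alpha)\in\cl B(H)$. Positivity of $E(\alpha)$ follows by approximating the indicator $1_\alpha$ in an appropriate (bounded, pointwise) sense and using positivity of $\phi$ on $C(A)$ — more carefully, one checks $E(\alpha)\ge 0$ by noting $\mu_{\xi,\xi}$ is a positive measure since $\phi$ is positive. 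Countable additivity in the weak* topology is the dominated convergence theorem applied to each $\mu_{\xi,\eta}$. Uniqueness of $E$ with $\phi=\phi_E$ is immediate since $\phi_E$ determines each $E_{\xi,\eta}$ on $C(A)$, hence $E_{\xi,\eta}$ as a measure by Riesz–Markov, hence $E$. When $\phi$ is a $*$-homomorphism, multiplicativity forces $\mu_{\xi,\xi}$ to behave like $\langle\phi(1_\alpha)(\cdot),(\cdot)\rangle$ and one recovers $E(\alpha)^2 = E(\alpha) = E(\alpha)^*$.

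The main technical point — and the reason a careful sketch is warranted rather than a one-line citation — is handling the passage from continuous functions to Borel sets in both directions simultaneously for all pairs $\xi,\eta$, i.e.\ assembling the scalar representing measures $\mu_{\xi,\eta}$ into a genuine $\cl B(H)$-valued object with the stated positivity and additivity. The uniform bound $\|\mu_{\xi,\eta}\|\le\|\xi\|\|\eta\|$ (equivalently $\|\phi\|\le 1$, which holds since $\phi$ is unital and positive) is what makes the sesquilinear-form argument go through and gives weak* countable additivity via dominated convergence. The completely positive / homomorphism dichotomy is then just carried along in parallel through each step.
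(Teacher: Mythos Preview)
Your argument is correct, but it diverges from the paper's in two places worth noting.

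For the forward direction, you establish complete positivity directly via Riemann-type sums. The paper instead observes that $\phi_E$ is positive (since each $E_{\xi,\xi}$ is a positive measure) and then invokes the standard fact that a positive map out of a \emph{commutative} C*-algebra is automatically completely positive (\cite[Theorem 3.11]{Pa}). Your route is self-contained; the paper's is a one-line shortcut.

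For the converse direction the approaches genuinely differ. You build $E$ by applying Riesz--Markov to each scalar functional $f\mapsto \langle\phi(f)\xi,\eta\rangle$, then assemble the family $(\mu_{\xi,\eta})$ into an operator $E(\alpha)$ via bounded sesquilinear forms. The paper instead applies Stinespring's Dilation Theorem to $\phi$, obtaining $\phi(f)=V^*\pi(f)V$ with $\pi$ a $*$-representation; then the Spectral Theorem furnishes a spectral measure $P$ for $\pi$, and $E(\delta):=V^*P(\delta)V$ is the desired QPM. Your approach is more elementary (no Stinespring needed) and makes the weak* countable additivity transparent via dominated convergence. The paper's approach is slicker for the spectral-measure clause: when $\phi$ is already a $*$-homomorphism the Spectral Theorem hands you a projection-valued $E$ immediately, whereas your sketch for that case (``multiplicativity forces $\mu_{\xi,\xi}$ to behave like $\langle\phi(1_\alpha)(\cdot),(\cdot)\rangle$'') is imprecise as stated --- $1_\alpha\notin C(A)$, so you would need to first extend $\phi$ to bounded Borel functions (as the paper does later in Lemma~\ref{l_ext}) before concluding $E(\alpha)^2=E(\alpha)$. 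This is fixable, but the Stinespring/Spectral-Theorem route sidesteps it entirely.
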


\begin{proof}
Assume that $E$ is a quantum probability measure. 
For $\xi,\eta\in H$, let 
$\phi_{\xi,\eta}\in C(X)^*$ be the functional, 
given by $\phi_{\xi,\eta}(f)=\int_X f dE_{\xi,\eta}$. 
For a fixed $f\in C(X)$, let $\mu_f:H\times H\rightarrow\mathbb{C}$ be the map, 
defined by letting $\mu_f(\xi,\eta)=\phi_{\xi,\eta}(f)$.
A straightforward verification shows that 
$\mu_f$ is a bounded sesquilinear form and
		\begin{align*}
			\begin{split}
				\lvert\mu_f(\xi,\eta)\rvert & =\Big\lvert\int_X fdE_{\xi,\eta}\Big\rvert\leq\nrm{f}_\infty\nrm{E_{\xi,\eta}}\leq\nrm{f}_\infty\|\xi\|\|\eta\|.
			\end{split}
		\end{align*}
Let $\phi_E(f)\in\mathcal{B}(H)$ be such that 
$\mu_f(\xi,\eta) = \langle \phi_E(f)\xi,\eta\rangle$, $\xi,\eta\in H$.
Since, for any $\xi\in H$, the measure $E_{\xi,\xi}$ is positive, the map $f\mapsto \phi_E(f)$ is positive.
Since the domain of $\phi_E$ is commutative, $\phi_E$ is completely positive
\cite[Theorem 3.11]{Pa}.  

Conversely, suppose that $\phi:C(X)\rightarrow\mathcal{B}(H)$ is a unital completely positive map. 
By Stinespring's Dilation Theorem, there exists a Hilbert space $K$, a unital *-homomorphism $\pi:X\rightarrow \mathcal{B}(K)$ and an isometry  $V:H\rightarrow K$,
such that $\phi(f)=V^*\pi(f)V$, $f\in C(A)$. 
Since $\pi:C(X)\rightarrow\mathcal{B}(H)$ is a $*$-homomorphism, 
by the Spectral Theorem for commutative C$^*$-algebras, 
there exists a unique spectral measure $P:\mathfrak{A}_X\rightarrow\mathcal{B}(K)$, such that 
		\begin{align*}
			\langle\pi(f)\xi,\eta\rangle = \int_X fdP_{\xi,\eta}, \hspace{0.6cm} \xi,\eta\in K, f\in C(X).
		\end{align*}
Define $E(\delta) = V^*P(\delta) V$, $\delta\in \frak{A}_X$. 
The countable additivity of $E$ follows from that of $P$ and the fact that $V$ is an isometry.
For $\xi,\eta\in H$, we have 
$$				E_{\xi,\eta}(\delta)
    =\langle V^*P(\delta)V\xi,\eta\rangle = P_{V\xi,V\eta}(\delta), \ \ \ \delta\in\mathfrak{A}_X.$$
Further, 
$$\int_XfdE_{\xi,\eta} = \int_XfdP_{V\xi,V\eta} = \langle\pi(f)V\xi,V\eta\rangle =
\langle\phi(f)\xi,\eta\rangle.$$

It is a standard fact that if $E$ is a spectral measure 
then the assignment $\phi_E(f)=\int_X fdE$ defines a $*$-representation of $C(A)$. The converse part follows from the Spectral Theorem. 
\end{proof}

The following lemma will be needed for the consideration of product channels in Section \ref{s_prod}. 

\begin{lemma}\label{cqpm}	
Let $X$ be a compact Hausdorff space, $H$ a Hilbert space and $E$ be a quantum probability measure on $H$. 
Then $S \in ({\rm ran}\phi_E)'$ if and only if 
$SE(\delta) = E(\delta)S$ for every $\delta\in\mathfrak{A}_X$.	
\end{lemma}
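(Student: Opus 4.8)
The plan is to prove both implications and reduce everything to the relationship between $\phi_E$ and $E$ established in Theorem \ref{qpm}. The statement asserts that the commutant of the range of $\phi_E$ coincides with the set of operators commuting with all the values $E(\delta)$, $\delta \in \mathfrak{A}_X$. One direction is straightforward; the other requires passing from continuous functions to indicator functions via a measure-theoretic density argument.

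\medskip

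For the implication $(\Leftarrow)$, suppose $SE(\delta) = E(\delta)S$ for all $\delta \in \mathfrak{A}_X$. Fix $f \in C(X)$ and $\xi, \eta \in H$. I would compute
\[
\langle \phi_E(f) S\xi, \eta\rangle = \int_X f \, dE_{S\xi, \eta}
\]
and observe that $E_{S\xi,\eta}(\delta) = \langle E(\delta)S\xi, \eta\rangle = \langle S E(\delta)\xi, \eta\rangle = \langle E(\delta)\xi, S^*\eta\rangle = E_{\xi, S^*\eta}(\delta)$; hence the two complex measures $E_{S\xi,\eta}$ and $E_{\xi, S^*\eta}$ coincide, and integrating $f$ against them gives $\langle \phi_E(f)S\xi, \eta\rangle = \langle \phi_E(f)\xi, S^*\eta\rangle = \langle S\phi_E(f)\xi, \eta\rangle$. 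Since $\xi, \eta, f$ are arbitrary, $S \in ({\rm ran}\,\phi_E)'$.

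\medskip

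For the converse $(\Rightarrow)$, suppose $S \in ({\rm ran}\,\phi_E)'$, i.e. $S\phi_E(f) = \phi_E(f)S$ for all $f \in C(X)$. Running the above computation backwards, for every $f \in C(X)$ and $\xi, \eta \in H$ we get $\int_X f \, dE_{S\xi, \eta} = \int_X f \, dE_{\xi, S^*\eta}$. Since $X$ is a compact (second countable, hence metrisable) Hausdorff space, a finite complex Borel measure on $X$ is determined by its integrals against continuous functions (by the Riesz representation theorem, or equivalently the regularity of Borel measures together with Urysohn's lemma), so $E_{S\xi,\eta} = E_{\xi, S^*\eta}$ as measures on $\mathfrak{A}_X$. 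Evaluating at an arbitrary $\delta \in \mathfrak{A}_X$ gives $\langle E(\delta)S\xi, \eta\rangle = \langle E(\delta)\xi, S^*\eta\rangle = \langle SE(\delta)\xi, \eta\rangle$; as $\xi, \eta$ range over $H$ this yields $E(\delta)S = SE(\delta)$, as required.

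\medskip

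The only genuine point requiring care is the step in the converse direction where equality of the integrals of all continuous functions is upgraded to equality of the measures; everything else is routine manipulation of sesquilinear forms. I would phrase this via the uniqueness clause in the correspondence $C(X)^* = M(X)$ already recorded in the paper, so no additional machinery is needed.
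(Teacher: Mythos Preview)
Your proof is correct and follows essentially the same approach as the paper: both directions hinge on the identity $E_{S\xi,\eta}=E_{\xi,S^*\eta}$, with the nontrivial direction obtained via the uniqueness clause of the Riesz Representation Theorem. The only cosmetic difference is that the paper treats the implication $(\Rightarrow)$ first and dispatches $(\Leftarrow)$ by ``reversing the steps,'' while you spell out both directions explicitly.
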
 

\begin{proof}
Suppose that $S \in ({\rm ran}\phi_E)'$. 
Let $f\in C(X)$ and $\xi,\eta\in H$; then 
$$\langle S\phi_E(f)\xi,\eta\rangle = \langle\phi_E(f)\xi,S^*\eta\rangle=\int_X f dE_{\xi,S^*\eta},$$ 
while
$$\langle \phi_E(f)S\xi,\eta\rangle=\int_X f dE_{S\xi,\eta}.$$
By the uniqueness clause of the Riesz Representation Theorem, 
for any $\delta\in\mathfrak{A}_X$ we have that 
$$\langle E(\delta)S\xi,\eta\rangle  
= E_{S\xi,\eta}(\delta) = 
E_{\xi,S^*\eta}(\delta) = \langle E(\delta)\xi,S^*\eta\rangle = 
\langle SE(\delta)\xi,\eta\rangle;$$
thus, $SE(\delta) = E(\delta)S$.  

The converse direction follows by reversing the steps in the previous paragraph.
\end{proof}
	
For a Hilbert space $H$, we will denote by $\frak{B}(X;H)$ the space of all
bounded weakly measurable functions $F : X\to \cl B(H)$, where the space $X$ is equipped with the Borel $\sigma$-algebra $\frak{A}_X$.
Similarly, we write $\frak{B}_\mu(X;H)$ for the space of all
bounded weakly $\mu$-measurable functions $F : X\to \cl B(H)$.
We note that, if $H$ is separable, $\frak{B}(X;H)$ and 
$\frak{B}_{\mu}(X;H)$ are unital C*-algebras with respect to pointwise addition and multiplication, the norm given by
$\|F\| := \sup_{x\in X}\|F(x)\|$, 
and involution given by $F^*(x) := F(x)^*$, $x\in X$.
The following theorem is a generalisation of Theorem \ref{qpm} 
to operator-valued information channels.

\begin{theorem}\label{th_Borel}
Let $X$ and $A$ be compact Hausdorff spaces, 
$H$ be a separable Hilbert space, and 
$E:X\to(A;H)$ be an operator-valued (resp. projection-valued) information channel. 
Then the map $\Phi_E : C(A)\rightarrow\frak{B}(X;H)$, given by 
$$\Phi_E(f)(x) = \phi_{E_x}(f), \hspace{0.3cm} f\in C(A), x\in X,$$
is well-defined, unital and completely positive (resp. a well-defined unital *-homomor\-phism).

Conversely, if $\Phi : C(A)\rightarrow \frak{B}(X;H)$ is a unital completely positive map (resp. *-homomorphism) 
then there exists an operator-valued (resp. projection-valued) information channel 
$E:X\to(A;H)$ such that $\Phi = \Phi_E$.
\end{theorem}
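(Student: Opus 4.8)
The plan is to build $\Phi_E$ directly from the pointwise maps $\phi_{E_x}$ supplied by Theorem~\ref{qpm}, and to check measurability, then run the converse by recovering the fibre channels from a pointwise application of the second half of Theorem~\ref{qpm}. For the forward direction, fix $E:X\to(A;H)$. For each $x$, Theorem~\ref{qpm} produces a unital completely positive map $\phi_{E_x}:C(A)\to\cl B(H)$ with $\langle\phi_{E_x}(f)\xi,\eta\rangle=\int_A f\, d(E_x)_{\xi,\eta}$. Set $\Phi_E(f)(x):=\phi_{E_x}(f)$. I first need that for each $f\in C(A)$ the function $x\mapsto\phi_{E_x}(f)$ lies in $\frak{B}(X;H)$: boundedness is immediate since $\|\phi_{E_x}(f)\|\le\|f\|_\infty$ (each $\phi_{E_x}$ is unital completely positive), and weak measurability means $x\mapsto\langle\phi_{E_x}(f)\xi,\eta\rangle=\int_A f\, d(E_x)_{\xi,\eta}$ is Borel for all $\xi,\eta$. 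This is the crux of the forward direction and where I expect to spend the most effort: one knows $x\mapsto\langle E(\alpha|x)\xi,\eta\rangle$ is Borel for each fixed $\alpha\in\frak{A}_A$ by hypothesis, and must bootstrap this to measurability of the integral $\int_A f\, d(E_x)_{\xi,\eta}$. The standard route is to approximate $f$ uniformly by simple functions $\sum_j c_j \mathbf{1}_{\alpha_j}$ (using that $A$ is compact metrisable, hence $C(A)$ is separable and such approximations exist with $\alpha_j\in\frak{A}_A$); for a simple function the integral is $\sum_j c_j\langle E(\alpha_j|x)\xi,\eta\rangle$, a finite linear combination of Borel functions of $x$, hence Borel; and a uniform limit of Borel functions is Borel. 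By polarisation it suffices to treat $\eta=\xi$, which keeps the measures positive and the estimates clean. Once $\Phi_E$ is well-defined, unitality follows from $\phi_{E_x}(1)=I_H$ for every $x$, and complete positivity is checked entrywise: for $(f_{i,j})\in M_n(C(A))^+$ one has $(\Phi_E(f_{i,j})(x))_{i,j}=(\phi_{E_x}(f_{i,j}))_{i,j}\ge 0$ in $M_n(\cl B(H))$ for each $x$ since each $\phi_{E_x}$ is completely positive, and positivity in $M_n(\frak{B}(X;H))$ is exactly pointwise positivity. In the spectral case, each $\phi_{E_x}$ is a $*$-homomorphism by the second clause of Theorem~\ref{qpm}, so $\Phi_E$ is multiplicative and $*$-preserving pointwise, hence a $*$-homomorphism.

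For the converse, suppose $\Phi:C(A)\to\frak{B}(X;H)$ is unital completely positive. For each $x\in X$, the evaluation $\varepsilon_x:\frak{B}(X;H)\to\cl B(H)$, $\varepsilon_x(F)=F(x)$, is a unital $*$-homomorphism, so the composition $\phi_x:=\varepsilon_x\circ\Phi:C(A)\to\cl B(H)$ is unital completely positive. By the second half of Theorem~\ref{qpm}, there is a unique quantum probability measure $E_x:\frak{A}_A\to\cl B(H)$ with $\phi_x=\phi_{E_x}$, i.e. $\langle E_x(\alpha)\xi,\eta\rangle$ determined by $\langle\phi_x(f)\xi,\eta\rangle=\int_A f\, d(E_x)_{\xi,\eta}$. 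Define $E(\alpha|x):=E_x(\alpha)$. It remains to verify that $E$ is an operator-valued information channel, i.e. that $x\mapsto\langle E(\alpha|x)\xi,\eta\rangle$ is Borel for each fixed $\alpha\in\frak{A}_A$. For $f\in C(A)$ this function is $x\mapsto\langle\Phi(f)(x)\xi,\eta\rangle$, which is Borel by definition of $\frak{B}(X;H)$. To pass from continuous functions to arbitrary Borel sets I would invoke a monotone-class / $\lambda$-system argument: the collection of $\alpha\in\frak{A}_A$ for which $x\mapsto\langle E(\alpha|x)\xi,\eta\rangle$ (equivalently, after polarisation, $x\mapsto(E_x)_{\xi,\xi}(\alpha)$) is Borel contains all open sets (for an open $U\subseteq A$ one writes $\mathbf{1}_U$ as a pointwise increasing limit of continuous functions $f_n\uparrow\mathbf{1}_U$ and applies monotone convergence fibrewise, $(E_x)_{\xi,\xi}(U)=\lim_n\int_A f_n\, d(E_x)_{\xi,\xi}$, a pointwise limit of Borel functions), is closed under complements (using $E_x(A\setminus\alpha)=I_H-E_x(\alpha)$) and under countable disjoint — hence countable increasing — unions (using weak-$*$ countable additivity of $E_x$ and monotone convergence). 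By the $\pi$–$\lambda$ theorem this collection is all of $\frak{A}_A$. In the spectral case, each $\phi_x$ is a $*$-homomorphism (being the composition of the $*$-homomorphism $\varepsilon_x$ with the $*$-homomorphism $\Phi$), so $E_x\in P(A;H)$ for every $x$, giving a projection-valued channel. Finally $\Phi=\Phi_{E}$ since both sides agree on $C(A)$ pointwise in $x$ by construction.

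The main obstacle is the measurability bootstrapping in both directions — promoting the hypothesised Borel dependence on a fixed set $\alpha$ (resp. a fixed continuous $f$) to Borel dependence after integrating against $f\in C(A)$ (resp. after specialising to an arbitrary $\alpha\in\frak{A}_A$). Both are handled by the same two tools: uniform (or monotone) approximation exploiting second countability of $A$, together with the fact that Borel functions are closed under pointwise and uniform limits; the $\pi$–$\lambda$ theorem packages the set-theoretic side cleanly. Everything else — unitality, complete positivity, the $*$-homomorphism refinements — is a routine transfer of the corresponding pointwise statements, since the algebraic structure and order on $\frak{B}(X;H)$ are defined pointwise. The $\mu$-measurable variant of the theorem, if needed, follows verbatim by replacing "$\frak{A}_X$-measurable" with "$\frak{A}_\mu$-measurable" throughout, since $\frak{A}_\mu$ is again closed under the relevant countable operations and limits.
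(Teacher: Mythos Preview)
Your proof is correct and follows essentially the same approach as the paper's: apply Theorem~\ref{qpm} pointwise in each direction and bootstrap the required measurability by approximation. The only difference worth noting is in the converse, where the paper appeals to a result from \cite{kechris} to approximate $\chi_\alpha$ pointwise by a uniformly bounded sequence in $C(A)$ and then invokes dominated convergence, while you instead run a $\pi$--$\lambda$ argument from open sets; both are standard, and your route is slightly more self-contained.
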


\begin{proof}
Let $E:X\to(A;H)$ be an operator-valued information channel and, using the notation from 
Theorem \ref{qpm}, set $\phi_x = \phi_{E_x}$ for brevity.
Fix $f\in C(A)$. 
To see the measurability of the function $x\mapsto\phi_x(f)$, 
approximate $f\in C(A)$ pointwise by a sequence $(f_n)_{n\in\bb{N}}$ 
of uniformly bounded Borel functions of the form 
$f_n = \sum_{i=1}^{m_n} \lambda_{n,i}\chi_{\delta_{n,i}}$. 
For each $\xi\in H$ and $x\in X$, 
we have 
$$\int_A f_n dE_{x,\xi,\xi} = 
\sum_{i=1}^{m_n}\lambda_{n,i} E_{x,\xi,\xi}(\delta_{n,i})
= \sum_{i=1}^{m_n} \lambda_{n,i} \langle E(\delta_{n,i}|x)\xi,\xi\rangle, \ \ \ n\in \bb{N}.$$ 
Thus, the function 
$x\mapsto \int_A f_n dE_{x,\xi,\xi}$ is Borel for every $n\in \bb{N}$. On the other hand, 
for a fixed $x\in X$, by 
Lebesgue Dominated Convergence Theorem, 
$$\int_A f dE_{x,\xi,\xi}=\lim_{n\to \infty} \int_Af_ndE_{x,\xi,\xi};$$
it follows that 
the function $x\mapsto \int_AfdE_{x,\xi,\xi}$ is Borel. 
The measurability of the function $x\mapsto \int_AfdE_{x,\xi,\eta}$ follows by polarisation.

By Theorem \ref{qpm}, $\phi_x$ is unital and (completely) positive for every $x\in X$. 
It follows that $\Phi_E$ is a well-defined positive map; since its domain is commutative, 
it is completely positive \cite[Theorem 3.11]{Pa}.
	
Conversely, let $\Phi : C(A)\rightarrow\frak{B}(X;H)$ be a unital completely positive map. 
Let $\phi_x : C(A)\to \cl B(H)$ be the map, given by $\phi_x(f) = \Phi(f)(x)$, $x\in X$. 
The complete positivity of $\Phi$ implies the complete positivity of $\phi_x$ for every $x\in X$. 
By Theorem \ref{qpm}, for any $x\in X$ there exists a QPM $E_x = E(\cdot|x)$ satisfying 
$\phi_x = \phi_{E_x}$, $x\in X$. 
Fix $\xi\in H$ and $\alpha\in \frak{A}_A$. 
Let $(f_n)_{n\in \bb{N}}\subseteq C(A)$ be a uniformly bounded sequence, such that 
$f_n(a)\to \chi_{\alpha}(a)$ for every $a\in A$, \cite[Theorem 11.6, Remark 11.7]{kechris}.
Then 
$$\langle E(\alpha|x)\xi,\xi\rangle = 
\lim_{n\to \infty} \int_A f_n(a) dE_{x,\xi,\xi}(a) = 
\lim_{n\to \infty} \langle \phi_x(f_n)\xi,\xi\rangle.$$
As the functions $x\mapsto \phi_x(f_n)$ are 
weakly measurable, the function $x\mapsto  \langle E(\alpha|x)\xi,\xi\rangle$
is Borel. A polarization argument gives the Borel measurability of 
$x\mapsto \langle E(\alpha|x)\xi,\eta\rangle$  for $\xi,\eta\in H$. 
		
The arguments regarding projection-valued operator channels follow
a direct modification of the completely positive case, using the fact that 
the maps $\phi_x$ are unital *-homomorphisms if and only if the quantum probability measures $E_x$ are projection-valued. 
	\end{proof}
	
From now on we will always assume without mentioning that $A$ is second countable.
Given a unital C*-algebra $\cl A$, a Hilbert space $H$, and completely positive maps 
$\phi_x : \cl A\to \cl B(H)$, $x\in X$, we call the family $(\phi_x)_{x\in X}$ \emph{Borel} if 
the function $x\to \langle \phi_x(a)\xi,\eta\rangle$ is Borel measurable for all $a\in \cl A$ and all 
$\xi,\eta\in H$.

\begin{remark}\label{r_measfamm}
\rm 
With the notation in Theorem \ref{th_Borel}, the same 
statements remain true if we replace $\frak{B}(X;H)$ with 
$\frak{B}_\mu(X;H)$, and operator-valued information channels with 
operator-valued $\mu$-information channels. 
In particular, 
the following are equivalent:
\begin{itemize}
\item[(i)]
$E$ is an operator-valued information channel
(resp. an operator-valued $\mu$-information channel);

\item[(ii)] 
the function $x\mapsto \phi_{E_x}(f)$ is 
weakly measurable (resp. weakly $\mu$-measurable) for every $f\in C(A)$. 
\end{itemize}
\end{remark}


\section{Products of channels}\label{s_prod}

In this section, we prove the existence of a natural product 
of operator-valued channels with commuting ranges and, as a consequence, 
the existence of a tensor-type product of operator-valued channels.
Similarly to Section \ref{s_ovic}, we will need an initial result concerning products of
quantum probability measures. 
In the case of spectral measures, this is a rather well-known fact 
(see \cite{schaefer} and \cite{bs1, bs2}); our proof is operator-theoretic
and, we believe, new even in the spectral measure case.

Recalling the notation from Section \ref{s_ovic}, 
write $\frak{B}(X) = \frak{B}(X,\bb{C})$; thus, $\frak{B}(X)$
is the C*-algebra of all bounded Borel complex-valued functions on $X$. 
For $f\in \frak{B}(X)$, let $F_f : M(X)\to \bb{C}$ be the linear functional, 
given by $F_f(\mu) = \int_X fd\mu$. It is clear that 
$$|F_f(\mu)|\leq \|f\|_{\infty}\|\mu\|, \ \ \ f\in \frak{B}(X), \mu\in M(X);$$
thus, $F_f\in M(X)^*$. Let  
$\iota : \frak{B}(X)\to C(X)^{**}$ be the contractive map, given by 
$\iota(f)(\mu) = F_f(\mu)$, $\mu\in M(X)$. 
We have, in fact, that 
$$\|f\|_{\infty} = \sup_{x\in X} |f(x)| = |\iota(f)(\delta_x)|\leq \|\iota(f)\|,$$
and hence $\iota$ is isometric. 
We hence suppress the use of $\iota$ and consider $\frak{B}(X)$ as a unital C*-subalgebra 
of $C(X)^{**}$. 
In the next lemma, $\frak{B}(X)$ is equipped with the restriction of the weak*-topology 
of $C(X)^{**}$.

\begin{lemma}\label{l_ext}
Let $X$ be a compact Hausdorff space, $H$ be a Hilbert space and 
$E : \mathfrak{A}_X\rightarrow \mathcal{B}(H)$ be a quantum probability measure. 
The map $\phi_E$ from Theorem \ref{qpm} has a weak* continuous extension (denoted in the same way) 
$\phi_E : \frak{B}(X)\to \cl B(H)$, such that 
$\phi_E(\chi_{\delta}) = E(\delta)$, $\delta\in \frak{A}_X$. 
\end{lemma}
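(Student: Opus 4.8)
The plan is to use the universal property of the bidual $C(X)^{**}$, which is a von Neumann algebra, together with the fact that $\phi_E : C(X)\to \cl B(H)$ is a unital completely positive (in fact, after compression to the relevant subspaces, a *-homomorphism may be dilated to, but we do not need that) weak*-continuous-extendable map. Concretely, since $\cl B(H)$ is a dual space and $\phi_E$ is bounded, the double adjoint $\phi_E^{**} : C(X)^{**}\to \cl B(H)^{**}$ is weak*-weak* continuous; composing with the canonical normal conditional expectation (or rather the canonical normal *-homomorphism) $\cl B(H)^{**}\to \cl B(H)$ — which exists because $\cl B(H)$ is a von Neumann algebra and hence is complemented in its bidual via a normal projection — yields a weak*-continuous map $\widetilde{\phi_E} : C(X)^{**}\to \cl B(H)$ extending $\phi_E$ and remaining unital and completely positive. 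Restricting $\widetilde{\phi_E}$ to the C*-subalgebra $\frak{B}(X)\subseteq C(X)^{**}$ gives the desired $\phi_E : \frak{B}(X)\to \cl B(H)$.

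Next I would verify the identity $\phi_E(\chi_\delta) = E(\delta)$ for each $\delta\in \frak{A}_X$. The key point is to identify the weak*-limit in $C(X)^{**}$ of a suitable net (or, since $X$ is second countable, a sequence) of continuous functions converging to $\chi_\delta$. By regularity of finite Borel measures on a compact metric space, for every $\mu\in M(X)$ there is a uniformly bounded sequence $(f_n)\subseteq C(X)$ with $f_n\to \chi_\delta$ pointwise, hence $\int_X f_n\,d\mu\to \int_X \chi_\delta\,d\mu = \mu(\delta)$ by dominated convergence; this says exactly that $f_n\to \chi_\delta$ in the weak* topology of $C(X)^{**}$ (testing against $M(X) = C(X)^*$). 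Applying the weak*-continuous map $\widetilde{\phi_E}$ and testing the resulting weak*-convergence in $\cl B(H)$ against vector functionals $\langle\,\cdot\,\xi,\eta\rangle$, we get
$$\langle \widetilde{\phi_E}(\chi_\delta)\xi,\eta\rangle = \lim_n \langle \phi_E(f_n)\xi,\eta\rangle = \lim_n \int_X f_n\,dE_{\xi,\eta} = E_{\xi,\eta}(\delta) = \langle E(\delta)\xi,\eta\rangle,$$
where the middle equality is the defining property of $\phi_E$ from Theorem \ref{qpm} and the third is dominated convergence applied to the complex measure $E_{\xi,\eta}$. Since $\xi,\eta\in H$ are arbitrary, $\widetilde{\phi_E}(\chi_\delta) = E(\delta)$.

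There is one subtlety worth flagging as the main obstacle: one must be careful that the weak* topology on $\frak{B}(X)$ inherited from $C(X)^{**}$ is the right one for the statement, and that the extension is genuinely well-defined on $\frak{B}(X)$ independently of approximating sequences — but this is automatic once we have a single weak*-continuous map $\widetilde{\phi_E}$ on all of $C(X)^{**}$ and simply restrict it. An alternative, more hands-on route avoiding the bidual machinery is to define $\phi_E(f)$ directly by $\langle \phi_E(f)\xi,\eta\rangle = \int_X f\,dE_{\xi,\eta}$ for $f\in \frak{B}(X)$, check that $(\xi,\eta)\mapsto \int_X f\,dE_{\xi,\eta}$ is a bounded sesquilinear form (boundedness from $|\int_X f\,dE_{\xi,\eta}|\le \|f\|_\infty\|E_{\xi,\eta}\|\le \|f\|_\infty\|\xi\|\|\eta\|$, exactly as in the proof of Theorem \ref{qpm}), and then verify weak*-continuity of the resulting map on bounded sets via dominated convergence; the evaluation $\phi_E(\chi_\delta) = E(\delta)$ is then immediate from the definition. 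I would present the bidual argument as the clean conceptual proof and perhaps remark that the explicit formula $\langle\phi_E(f)\xi,\eta\rangle = \int_X f\,dE_{\xi,\eta}$ continues to hold for all $f\in\frak{B}(X)$, since this explicit form is what gets used in Section \ref{s_prod}.
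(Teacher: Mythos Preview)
Your approach is essentially the paper's: take the second adjoint $\phi_E^{**}:C(X)^{**}\to\cl B(H)^{**}$, compose with the canonical weak*-continuous projection $\cl B(H)^{**}\to\cl B(H)$, and restrict to $\frak{B}(X)$. This is exactly what the paper does.

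There is, however, one step in your verification of $\phi_E(\chi_\delta)=E(\delta)$ that fails as written. You claim that for a Borel set $\delta$ there exists a uniformly bounded sequence $(f_n)\subseteq C(X)$ with $f_n\to\chi_\delta$ \emph{pointwise}. This is false in general: pointwise limits of sequences of continuous functions are precisely the Baire class~1 functions, whereas $\chi_\delta$ can lie arbitrarily high in the Baire hierarchy (e.g.\ $\chi_{\bb Q\cap[0,1]}$ is Baire class~2 but not class~1). Second countability of $X$ does not help here, because the weak* topology on bounded subsets of $C(X)^{**}$ is governed by the (typically non-separable) space $M(X)$, so sequential approximation is not available. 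The fix is exactly what the paper does: take a \emph{net} $(f_i)\subseteq C(X)$ with $f_i\to\chi_\delta$ weak* in $C(X)^{**}$ (such a net exists by Goldstine), and then your chain of equalities goes through verbatim, since weak* convergence already gives $\int f_i\,dE_{\xi,\eta}\to E_{\xi,\eta}(\delta)$ directly without any appeal to dominated convergence. Alternatively, your ``hands-on'' route via the sesquilinear form $(\xi,\eta)\mapsto\int_X f\,dE_{\xi,\eta}$ avoids the issue entirely and makes $\phi_E(\chi_\delta)=E(\delta)$ immediate from the definition.
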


\begin{proof}
Let $\phi_E^{**}: C(X)^{**}\to \cl B(H)^{**}$ be the second dual of the 
map $\phi_E$ from Theorem \ref{qpm}, and let $\cl E: \cl B(H)^{**}\to \cl B(H)$ be the canonical projection map. 
We have that the composition 
$\cl E\circ\phi_E^{**} : C(X)^{**} \to \cl B(H)$ is a weak* continuous map, 
and we continue to write $\phi_E$ for its restriction to $\frak{B}(X)$. 
As the restriction of a weak* continuous map, $\phi_E$ is weak* continuous. 

Let $\delta\in\mathfrak{A}_X$ and let $(f_i)_{i\in I}\subseteq C(X)$ be a net
with weak* limit $\chi_{\delta}$. 
Let $\xi,\eta\in H$; then 
\begin{align*}
\begin{split}
\langle\phi_E(\chi_\delta)\xi,\eta\rangle 
& =\lim_{i\in I} \langle\phi_E(f_i)\xi,\eta\rangle
= \lim_{i\in I} \langle f_i, E_{\xi,\eta}\rangle\\
& = \langle \chi_\delta, E_{\xi,\eta} \rangle
= \int_X\chi_\delta dE_{\xi,\eta} = \langle E(\delta)\xi,\eta\rangle.
\end{split}
\end{align*}
Since $\xi$ and $\eta$ were chosen arbitrarily, $\phi_E(\chi_\delta)=E(\delta)$.
\end{proof}

\begin{theorem}\label{qpmprod} 
Let $X$ and $Y$ be compact Hausdorff spaces and $E : \mathfrak{A}_X\rightarrow \mathcal{B}(H)$ and $F:\mathfrak{A}_Y\rightarrow \mathcal{B}(H)$ be 
quantum probability measures with commuting ranges. Then there exists a unique quantum probability measure 
$E\cdot F : \mathfrak{A}_X\otimes\mathfrak{A}_Y\rightarrow\mathcal{B}(H)$ such that
$$(E\cdot F)(\delta\times\theta)=E(\delta)F(\theta), \ \ \ \delta\in \mathfrak{A}_X, \theta\in \mathfrak{A}_Y.$$
Moreover, if $E$ and $F$ are spectral measures, then $E\cdot F$ is a spectral measure. 
\end{theorem}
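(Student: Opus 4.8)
The plan is to build the product QPM $E\cdot F$ from its associated completely positive map, using Theorem \ref{qpm} and Lemma \ref{l_ext} together with the fact that the ranges of $E$ and $F$ commute. First I would consider the two unital completely positive maps $\phi_E : C(X)\to\cl B(H)$ and $\phi_F : C(Y)\to\cl B(H)$ given by Theorem \ref{qpm}. By Lemma \ref{cqpm}, the commutation of the ranges of $E$ and $F$ is equivalent to $\ran\phi_E$ and $\ran\phi_F$ commuting elementwise. The commuting ranges condition should let me define a joint completely positive map on the tensor product. Specifically, I would pass to a Stinespring dilation $\phi_E(f)=V^*\pi(f)V$ with $\pi : C(X)\to\cl B(K)$ a $*$-homomorphism; using the commutation (and Lemma \ref{cqpm} applied to $P$, the spectral measure of $\pi$) one can arrange that $\phi_F$ lifts to a $*$-homomorphism on $\ran\pi'$, and then $\psi(f\otimes g):=V^*\pi(f)\tilde\pi(g)V$, extended to $C(X)\otimes C(Y)=C(X\times Y)$, is a well-defined unital completely positive map. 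Alternatively, and more cleanly, one applies Theorem \ref{qpm} to $\phi_F$ to get its own Stinespring triple and builds a representation of $C(X\times Y)=C(X)\otimes_{\min}C(Y)$ on a common space; the min tensor norm suffices because $C(X)$ is nuclear.

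Once I have a unital completely positive map $\Psi : C(X\times Y)\to\cl B(H)$, Theorem \ref{qpm} (applied with $X\times Y$ in place of $A$; note $X\times Y$ is again compact Hausdorff and, under our standing hypotheses, second countable) furnishes a unique QPM $G : \frak A_{X\times Y}\to\cl B(H)$ with $\phi_G=\Psi$. Since the second countable product space satisfies $\frak A_{XY}=\frak A_X\otimes\frak A_Y$, this $G$ is defined on the right $\sigma$-algebra, and I set $E\cdot F:=G$. It remains to check the defining identity $(E\cdot F)(\delta\times\theta)=E(\delta)F(\theta)$. For this I would invoke Lemma \ref{l_ext} twice: extending $\phi_E$, $\phi_F$ and $\phi_G$ weak$^*$-continuously to bounded Borel functions, one has $\phi_G(\chi_{\delta\times\theta})=(E\cdot F)(\delta\times\theta)$, while $\chi_{\delta\times\theta}=\chi_\delta\otimes\chi_\theta$ and a weak$^*$-approximation argument (approximate $\chi_\delta$ by continuous $f_i$ and $\chi_\theta$ by continuous $g_j$, so $f_i\otimes g_j\to\chi_{\delta\times\theta}$ weak$^*$ in $C(X\times Y)^{**}$) gives $\phi_G(\chi_\delta\otimes\chi_\theta)=\phi_E(\chi_\delta)\phi_F(\chi_\theta)=E(\delta)F(\theta)$, using multiplicativity of $\Psi$ on the tensor factors $C(X)\otimes 1$ and $1\otimes C(Y)$ and that $\phi_E$, $\phi_F$ are the restrictions of $\Psi$ to these factors.

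For uniqueness: if $G'$ is another QPM on $\frak A_X\otimes\frak A_Y$ agreeing with $E\cdot F$ on all measurable rectangles, then the collection of sets on which $G'_{\xi,\eta}$ and $(E\cdot F)_{\xi,\eta}$ agree is, for each $\xi,\eta$, a monotone class (by weak$^*$-countable additivity) containing the algebra of finite disjoint unions of rectangles, hence equals $\frak A_X\otimes\frak A_Y$ by the monotone class theorem; so $G'=E\cdot F$. Finally, if $E$ and $F$ are spectral measures, then $\phi_E$ and $\phi_F$ are $*$-homomorphisms with commuting ranges, so the joint map $\Psi$ is a $*$-homomorphism on $C(X\times Y)$; by the $*$-homomorphism clause of Theorem \ref{qpm}, the associated QPM $E\cdot F$ is projection-valued.

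I expect the main obstacle to be the construction of the joint completely positive (or $*$-homomorphism) map $\Psi$ on $C(X\times Y)$ from $\phi_E$ and $\phi_F$ with commuting ranges — i.e.\ verifying that $f\otimes g\mapsto\phi_E(f)\phi_F(g)$ extends to a well-defined completely positive map on the C$^*$-tensor product, which is where nuclearity of $C(X)$ and the commuting-range hypothesis (via Lemma \ref{cqpm}) both get used; the remaining steps are the essentially routine measure-theoretic bookkeeping (monotone class uniqueness and the weak$^*$-approximation identity) enabled by Lemma \ref{l_ext}.
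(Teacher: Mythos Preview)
Your proposal is correct and follows essentially the same route as the paper: pass from $E,F$ to the unital completely positive maps $\phi_E,\phi_F$ via Theorem \ref{qpm}, use Lemma \ref{cqpm} to see that $\phi_E$ and $\phi_F$ have commuting ranges, form the joint map on $C(X)\otimes C(Y)\cong C(X\times Y)$, apply Theorem \ref{qpm} in reverse to obtain $E\cdot F$, and use Lemma \ref{l_ext} to verify the rectangle identity. The paper's only streamlining over your sketch is that it bypasses the Stinespring discussion entirely by invoking directly the standard fact that commuting-range completely positive maps yield a completely positive map on $C(X)\otimes_{\max}C(Y)$, and then uses nuclearity to identify this with $C(X\times Y)$; your ``alternatively, and more cleanly'' remark already points in exactly this direction.
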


\begin{proof}
Let 
$\phi_E:C(X)\rightarrow\mathcal{B}(H)$ and $\phi_F:C(Y)\rightarrow\mathcal{B}(H)$
be the completely positive maps associated with $E$ and $F$, respectively, via
Theorem \ref{qpm}. 
Since $E$ and $F$ have commuting ranges, Lemma \ref{cqpm} implies that 
the range of $\phi_E$ commutes with $F(\beta)$ for every $\beta\in \frak{A}_Y$. 
Another application of Lemma \ref{cqpm} implies that $\phi_E$ and $\phi_F$
have commuting ranges.

Since the two maps have commuting ranges,
there exists a completely positive map 
$\phi_E\cdot\phi_ F : C(X)\otimes_{\max}C(Y)\rightarrow \cl{B}(H)$, satisfying 
		\begin{align}
			(\phi_E\cdot \phi_F) (f\otimes g) = \phi_E(f)\phi_F(g),
   \ \ \ f\in C(X), g\in C(Y).
		\end{align}
  By nuclearity, 
  $$C(X)\otimes_{\max}C(Y) = C(X)\otimes_{\min}C(Y) = C(X\times Y),$$ 
  up to canonical *-isomorphisms, where an elementary tensor 
$f\otimes g$ from $C(X)\otimes_{\max}C(Y)$ corresponds to the 
function $(x,y)\mapsto f(x)g(y)$ in $C(X\times Y)$. 
  We thus view $\phi_E\cdot \phi_F$ as a completely positive map, 
  defined on $C(X\times Y)$. 
  By Theorem \ref{qpm}, there exists a quantum probability measure 
  $E\cdot F:\mathfrak{A}_X\otimes\mathfrak{A}_Y\rightarrow \mathcal{B}(H),$ such that $\phi_{E\cdot F} = \phi_E\cdot\phi_F$.

Let $\delta\in\mathfrak{A}_X$ and $\theta\in\mathfrak{A}_Y$. 
Using Lemma \ref{l_ext}, we have 
\begin{eqnarray*}
(E\cdot F)(\delta\times\theta) 
= 
\phi_{E\cdot F}(\chi_{\delta\times\theta}) 
& = & 
(\phi_E\cdot\phi_F) (\chi_\delta\otimes\chi_\theta)\\
& = & 
\phi_E(\chi_\delta)\phi_F(\chi_\theta)
=
E(\delta)F(\theta).
\end{eqnarray*}

 In the case where $E$ and $F$ are spectral measures, by Theorem \ref{qpm} we have that $\phi_E$ and $\phi_F$ are unital *-homomorphisms with commuting ranges and 
hence the map $\phi_E\cdot \phi_F$ is a unital *-homomorphism. 
Using Theorem \ref{qpm}, we obtain a spectral measure $E\cdot F:\mathfrak{A}_X\otimes\mathfrak{A}_Y\rightarrow \mathcal{B}(H),$ such that $\phi_{E\cdot F}=\phi_E\cdot\phi_F$.  
\end{proof}

It is possible to derive the existence of a product spectral measure 
as a corollary of Theorem \ref{qpmprod}; we note that a more general result 
was obtained by Birman and Solomyak in \cite{bs1} (see also \cite{bs2}).

	\begin{corollary}\label{prod_qclass}
		Let $X$, $Y$ be compact Hausdorff spaces, $H$, $K$ be Hilbert spaces and $E:\mathfrak{A}_X\rightarrow\mathcal{B}(H)$, $F:\mathfrak{A}_Y\rightarrow\mathcal{B}(K)$ be quantum probability measures. Then there exists a quantum probability measure $E\otimes F:\mathfrak{A}_X\otimes\mathfrak{A}_Y\rightarrow\mathcal{B}(H\otimes K)$ satisfying,
		\begin{align*}
			(E\otimes F)(\delta\times\theta)=E(\delta)\otimes F(\theta), 
		\end{align*}
		for any measurable rectangle $\delta\times\theta$. Moreover, if $E$ and $F$ are spectral measures, 
  then $E\otimes F$ is a spectral measure.
	\end{corollary}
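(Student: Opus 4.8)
The plan is to reduce Corollary \ref{prod_qclass} to Theorem \ref{qpmprod} by manufacturing, from the two QPMs $E$ and $F$ acting on possibly different Hilbert spaces, a pair of QPMs on the common Hilbert space $H\otimes K$ whose ranges commute. First I would set $\tilde E(\delta) = E(\delta)\otimes I_K$ for $\delta\in\frak{A}_X$ and $\tilde F(\theta) = I_H\otimes F(\theta)$ for $\theta\in\frak{A}_Y$. A quick check shows $\tilde E$ and $\tilde F$ are again quantum probability measures on $\frak{A}_X$ and $\frak{A}_Y$ respectively, with values in $\cl B(H\otimes K)$: normalisation $\tilde E(X) = I_H\otimes I_K = I_{H\otimes K}$ is immediate, positivity is preserved by the ampliation $T\mapsto T\otimes I_K$ (it is a $*$-homomorphism $\cl B(H)\to\cl B(H\otimes K)$), and weak* countable additivity follows because for fixed $\xi\otimes\zeta$ the scalar measure $\delta\mapsto \langle(E(\delta)\otimes I_K)(\xi\otimes\zeta),\xi\otimes\zeta\rangle = \langle E(\delta)\xi,\xi\rangle\|\zeta\|^2$ is countably additive, and general vectors are handled by the polarisation/density remark already recorded after Definition \ref{d_opval}. (One can also phrase this via $\phi_{\tilde E} = (\cdot\otimes I_K)\circ\phi_E$ and invoke Theorem \ref{qpm}.)

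Next I would observe that the ranges of $\tilde E$ and $\tilde F$ commute: for any $\delta,\theta$ we have $(E(\delta)\otimes I_K)(I_H\otimes F(\theta)) = E(\delta)\otimes F(\theta) = (I_H\otimes F(\theta))(E(\delta)\otimes I_K)$, simply because elementary tensors of operators multiply coordinatewise. With this in hand, Theorem \ref{qpmprod} applies verbatim to the pair $(\tilde E,\tilde F)$ on $H\otimes K$, yielding a unique quantum probability measure $\tilde E\cdot\tilde F : \frak{A}_X\otimes\frak{A}_Y\to\cl B(H\otimes K)$ with $(\tilde E\cdot\tilde F)(\delta\times\theta) = \tilde E(\delta)\tilde F(\theta) = E(\delta)\otimes F(\theta)$ on measurable rectangles. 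I then define $E\otimes F := \tilde E\cdot\tilde F$, which is exactly the asserted object.

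For the spectral case, I would note that if $E$ and $F$ are spectral measures then so are $\tilde E$ and $\tilde F$: $E(\delta)\otimes I_K$ is a projection whenever $E(\delta)$ is, and multiplicativity $\tilde E(\delta\cap\delta') = \tilde E(\delta)\tilde E(\delta')$ is inherited from that of $E$ under the ampliation. Then the ``moreover'' clause of Theorem \ref{qpmprod} gives that $\tilde E\cdot\tilde F = E\otimes F$ is a spectral measure, completing the proof.

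Honestly there is no serious obstacle here — the corollary is a packaging of Theorem \ref{qpmprod} via ampliations — so the only point requiring a little care is the verification that the ampliated maps $\tilde E$, $\tilde F$ remain quantum probability measures (in particular weak* countable additivity on $H\otimes K$, which I would either check directly on elementary tensors and extend by polarisation, or obtain for free by identifying $\phi_{\tilde E}$ with the composition of $\phi_E$ with the normal unital $*$-homomorphism $\cl B(H)\ni T\mapsto T\otimes I_K\in\cl B(H\otimes K)$ and appealing to Theorem \ref{qpm}). Everything else is a one-line computation on measurable rectangles together with the uniqueness already supplied by Theorem \ref{qpmprod}.
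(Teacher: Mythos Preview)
Your proposal is correct and follows essentially the same approach as the paper: define the ampliations $\tilde E(\delta)=E(\delta)\otimes I_K$ and $\tilde F(\theta)=I_H\otimes F(\theta)$, note they are quantum probability measures with commuting ranges, apply Theorem \ref{qpmprod}, and set $E\otimes F:=\tilde E\cdot\tilde F$. The only difference is that you supply more detail on why the ampliations remain QPMs, whereas the paper simply declares this straightforward.
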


 \begin{proof}
Consider the maps $\tilde{E}:\mathfrak{A}_X\rightarrow\mathcal{B}(H\otimes K)$ and $\tilde{F}:\mathfrak{A}_Y\rightarrow\mathcal{B}(H\otimes K)$, given by $\tilde{E}(\delta)=E(\delta)\otimes I_K$ and $\tilde{F}(\theta)=I_H\otimes F(\theta)$, for $\delta\in\mathfrak{A}_X$ and $\theta\in\mathfrak{A}_Y,$ respectively. 
It is straightforward that 
$\tilde E$ and $\tilde F$ are quantum probability measures with commuting ranges.
By Theorem \ref{qpmprod}, there exists a quantum probability measure $\tilde{E}\cdot\tilde{F}:\mathfrak{A}_X\otimes\mathfrak{A}_Y\rightarrow \mathcal{B}(H\otimes K)$ satisfying $$\tilde{E}\cdot\tilde{F}(\delta\times\theta)=\tilde{E}(\delta)\tilde{F}(\theta)=(E(\delta)\otimes I_K)(I_H\otimes F(\theta))=E(\delta)\otimes F(\theta). $$
We set $E\otimes F := \tilde{E}\cdot\tilde{F}$.
If $E$ and $F$ are spectral measures, it is straightforward that so are  $\tilde{E}$, and $\tilde{F}$, and hence so is $E\otimes F$.
\end{proof}

\begin{theorem}\label{cqc_prod}
Let $X,Y,A,B$ be compact Hausdorff  spaces, 
$H$ be a separable Hilbert space, 
and $E:X\to(A;H)$, $F:Y\to(B;H)$ be operator-valued information channels with commuting ranges. There exists a unique operator-valued information channel $E\cdot F:X\times Y\mapsto(A\times B;H)$, such that 
\begin{equation}\label{eq_EcdotF}
(E\cdot F)(\delta\times\theta|x,y) = 
E(\delta|x)F(\theta|y),
\hspace{0.3cm} \delta\in\mathfrak{A}_A, 
\theta\in\mathfrak{A}_B, \ x\in X, y\in Y.
\end{equation}
Moreover if $E$ and $F$ are projection-valued information channels, 
so is $E\cdot F$. 
\end{theorem}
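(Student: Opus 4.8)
The plan is to build $E\cdot F$ fibrewise, using the pointwise product of quantum probability measures from Theorem~\ref{qpmprod} together with the channel/completely-positive-map dictionary of Theorem~\ref{th_Borel}. Fix $(x,y)\in X\times Y$ and write $\phi_x:=\phi_{E_x}$ and $\psi_y:=\phi_{F_y}$ for the unital completely positive maps attached by Theorem~\ref{qpm} to the QPM's $E_x=E(\cdot|x)$ on $A$ and $F_y=F(\cdot|y)$ on $B$. The commuting-ranges hypothesis on the channels means that $E(\delta|x)$ commutes with $F(\theta|y)$ for all $\delta,\theta,x,y$; hence, for each $(x,y)$, $E_x$ and $F_y$ have commuting ranges, and two applications of Lemma~\ref{cqpm} (exactly as in the proof of Theorem~\ref{qpmprod}) show that $\ran\phi_x$ commutes with $\ran\psi_y$. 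Thus there is a completely positive map $\phi_x\cdot\psi_y:C(A)\otimes_{\max}C(B)=C(A\times B)\to\cl B(H)$ with $(\phi_x\cdot\psi_y)(f\otimes g)=\phi_x(f)\psi_y(g)$, and by Theorem~\ref{qpmprod} its associated QPM on $\frak A_A\otimes\frak A_B=\frak A_{A\times B}$ (the two $\sigma$-algebras agreeing since $A,B$ are second countable) is exactly $E_x\cdot F_y$, the unique QPM with $(E_x\cdot F_y)(\delta\times\theta)=E_x(\delta)F_y(\theta)$.

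The substantive step is to verify that the family $(\phi_x\cdot\psi_y)_{(x,y)\in X\times Y}$ is Borel, i.e.\ that $(x,y)\mapsto\langle(\phi_x\cdot\psi_y)(h)\xi,\eta\rangle$ is Borel for every $h\in C(A\times B)$ and all $\xi,\eta\in H$. Since $\|\phi_x\cdot\psi_y\|\le 1$ uniformly in $(x,y)$ and $[\{f\otimes g:f\in C(A),\,g\in C(B)\}]$ is dense in $C(A\times B)$ by Stone--Weierstrass, it suffices to treat $h=f\otimes g$: uniform approximation then transfers Borel measurability to the limit. For $h=f\otimes g$ I would use separability of $H$: fixing an orthonormal basis $(e_k)_k$ of $H$,
\[
\langle\phi_x(f)\psi_y(g)\xi,\eta\rangle=\sum_k\langle\psi_y(g)\xi,e_k\rangle\,\langle\phi_x(f)e_k,\eta\rangle,
\]
the series converging absolutely by Cauchy--Schwarz, because $\sum_k|\langle\psi_y(g)\xi,e_k\rangle|^2\le\|g\|^2\|\xi\|^2$ and $\sum_k|\langle\phi_x(f)e_k,\eta\rangle|^2\le\|f\|^2\|\eta\|^2$. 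Each summand is a product of a Borel function of $x$ (as $E$ is a channel, via Theorem~\ref{th_Borel}) and a Borel function of $y$ (as $F$ is a channel), hence Borel on $X\times Y$; so the partial sums are Borel, and therefore so is their pointwise limit.

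It follows that $\Phi(h)(x,y):=(\phi_x\cdot\psi_y)(h)$ defines a unital completely positive map $\Phi:C(A\times B)\to\frak B(X\times Y;H)$, and the converse direction of Theorem~\ref{th_Borel} produces an operator-valued information channel $E\cdot F:X\times Y\to(A\times B;H)$ with $\Phi=\Phi_{E\cdot F}$. Then $\phi_{(E\cdot F)_{(x,y)}}=\phi_x\cdot\psi_y=\phi_{E_x\cdot F_y}$, so $(E\cdot F)_{(x,y)}=E_x\cdot F_y$ by the uniqueness clause of Theorem~\ref{qpm}, which is precisely (\ref{eq_EcdotF}). Uniqueness of $E\cdot F$ as a channel is immediate: any channel satisfying (\ref{eq_EcdotF}) has, at each $(x,y)$, a fibre QPM agreeing with $E_x\cdot F_y$ on all measurable rectangles, hence coinciding with it by the uniqueness in Theorem~\ref{qpmprod} (a $\pi$--$\lambda$ argument on $\frak A_A\otimes\frak A_B$). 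For the projection-valued case, if $E$ and $F$ are projection-valued then each $\phi_x$ and $\psi_y$ is a $*$-homomorphism, so each $\phi_x\cdot\psi_y$ is a $*$-homomorphism, and the projection-valued part of Theorem~\ref{th_Borel} shows $E\cdot F$ is projection-valued. I expect the joint-measurability step for elementary tensors to be the only genuine obstacle; the remainder is bookkeeping with results already established.
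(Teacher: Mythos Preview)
Your proof is correct and follows essentially the same route as the paper: build the fibrewise product via Theorem~\ref{qpmprod}, then establish joint Borel measurability of $(x,y)\mapsto\langle(\phi_x\cdot\psi_y)(f\otimes g)\xi,\eta\rangle$ by expanding through an orthonormal basis of the separable space $H$. In fact you are slightly more thorough than the paper, which leaves the Stone--Weierstrass reduction to elementary tensors implicit and does not spell out the uniqueness argument.
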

 
\begin{proof}
By Theorem \ref{qpmprod}, there exists 
a family $\left((E\cdot F)(\cdot|x,y)\right)_{(x,y)\in X\times Y}$ of 
quantum probability measures, satisfying (\ref{eq_EcdotF}). 
Using the identification between 
quantum probability measures and unital completely positive maps
from Lemma \ref{l_ext}, 
it suffices, by Remark \ref{r_measfamm}, 
to show that the maps
$\phi_{E\cdot F}^{(x,y)} := \phi_{(E\cdot F)(\cdot|x,y)}$, $(x,y)\in X\times Y$, 
form a jointly measurable family for the Borel $\sigma$-algebra 
$\frak{A}_{X\times Y}$ of $X\times Y$. 
It suffices to show that, if 
$f\in C(X)$, $g\in C(Y)$ and $\xi\in H$, then the function 
\begin{equation}\label{eq_famime}
(x,y)\mapsto\langle\phi_{E\cdot F}^{(x,y)}(f\otimes g)\xi,\xi \rangle
\end{equation}
is measurable with respect to the product $\sigma$-algebra $\frak{A}_X\otimes\frak{A}_Y$. 
As $ \phi_{E\cdot F}^{(x,y)}(f\otimes g) = \phi_E^x(f)\phi_F^y(g)$, 
the function (\ref{eq_famime}) coincides with the function 
$(x,y)\mapsto \langle\phi_E^x(f)\xi,\phi_F^y(g)^*\xi\rangle$.
Letting $(e_k)_{k\in \bb{N}}$ be an orthonormal basis of $H$, we have that 
$$\langle\phi_E^x(f)\xi,\phi_F^y(g)^*\xi\rangle = 
\sum_{k=1}^{\infty} \langle\phi_E^x(f)\xi,e_k\rangle 
\langle \phi_F^y(g)e_k,\xi\rangle,$$
and the joint measurability follows.

Now assume that both $E$ and $F$ are projection-valued information channels. By Corollary \ref{prod_qclass}, we have that $(E\cdot F)(\cdot|x,y)$ is a projection-valued measure for any $(x,y)\in X\times Y$. By the same argument in the case of operator-valued information channels we have that the function 
$$(x,y)\mapsto \langle E\cdot F(\gamma|x,y)\xi,\xi\rangle,\hspace{0.3cm} \gamma\in\mathfrak{A}_A\otimes \mathfrak{A}_B,\xi\in H,$$
is jointly measurable.
\end{proof}

Using a straightforward modification of the proof of Corollary 
\ref{prod_qclass}, Theorem \ref{cqc_prod} implies
the following corollary.

\begin{corollary}\label{q_prod}
Let $A,B,X,Y$ be compact Hausdorff spaces, 
$H$ and $K$ be separable Hilbert spaces, 
and $E:X\to(A;H)$ and 
$F:Y\to(B;K)$ be operator-valued information channels. 
There exists an operator-valued information channel $E\otimes F:X\times Y\mapsto(A\times B,H\otimes K)$, such that
\begin{equation}\label{eq_EotFxy}
(E \otimes F)(\delta\times \theta|x,y) = E(\delta|x)\otimes F(\theta|y), 
\hspace{0.3cm} \delta\in\mathfrak{A}_A, 
\theta\in\mathfrak{A}_B, \ x\in X, y\in Y.
\end{equation}
Moreover, if $E$ and $F$ are projection-valued information channels, 
then so is $E\otimes F$. 
\end{corollary}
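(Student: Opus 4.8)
The plan is to mimic the proof of Corollary \ref{prod_qclass}, but now working with operator-valued information channels rather than single quantum probability measures, and invoking Theorem \ref{cqc_prod} in place of Theorem \ref{qpmprod}. First I would introduce the two ampliated channels $\tilde E : X\to (A;H\otimes K)$ and $\tilde F : Y\to (B;H\otimes K)$, defined fibrewise by $\tilde E(\delta|x) = E(\delta|x)\otimes I_K$ and $\tilde F(\theta|y) = I_H\otimes F(\theta|y)$ for $\delta\in\mathfrak{A}_A$, $\theta\in\mathfrak{A}_B$, $x\in X$, $y\in Y$. I would check that these are genuine operator-valued information channels: for each fixed $x$, the map $\delta\mapsto E(\delta|x)\otimes I_K$ is a quantum probability measure on $A$ with values in $\cl B(H\otimes K)$ (countable additivity passing to the ampliation in the weak* topology, and $\tilde E(A|x) = I_{H\otimes K}$), and weak measurability of $x\mapsto \tilde E(\delta|x)$ follows from weak measurability of $x\mapsto E(\delta|x)$ by testing against simple tensors $\xi\otimes\zeta$ and using polarisation and the separability of $H\otimes K$. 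The analogous facts hold for $\tilde F$.

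Next I would observe that $\tilde E$ and $\tilde F$ have commuting ranges, since $(E(\delta|x)\otimes I_K)(I_H\otimes F(\theta|y)) = E(\delta|x)\otimes F(\theta|y) = (I_H\otimes F(\theta|y))(E(\delta|x)\otimes I_K)$ for all $\delta,\theta,x,y$; this holds even though no commutation hypothesis is imposed on $E$ and $F$ themselves, precisely because the two tensor legs are disjoint. With this in hand, Theorem \ref{cqc_prod} applies to the pair $(\tilde E,\tilde F)$ and yields a unique operator-valued information channel $\tilde E\cdot \tilde F : X\times Y\to (A\times B;H\otimes K)$ with
\begin{equation*}
(\tilde E\cdot\tilde F)(\delta\times\theta|x,y) = \tilde E(\delta|x)\tilde F(\theta|y) = E(\delta|x)\otimes F(\theta|y).
\end{equation*}
I would then simply set $E\otimes F := \tilde E\cdot\tilde F$, which satisfies (\ref{eq_EotFxy}). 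For the projection-valued assertion: if $E$ and $F$ are projection-valued, then each $\tilde E(\cdot|x)$ and $\tilde F(\cdot|y)$ is projection-valued (an ampliation of a spectral measure is spectral), so by the final clause of Theorem \ref{cqc_prod} the product $\tilde E\cdot\tilde F$ is a projection-valued information channel.

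The only genuinely substantive point — and the one I would expect to require a line of care — is the verification that the ampliated families $\tilde E$, $\tilde F$ are weakly measurable operator-valued channels, i.e. that ampliation preserves the measurability condition (\ref{eq_measalpE}); everything else is either a direct quotation of Theorem \ref{cqc_prod} or the same bookkeeping already carried out in Corollary \ref{prod_qclass}. Since the excerpt explicitly flags that this corollary follows from Theorem \ref{cqc_prod} ``using a straightforward modification of the proof of Corollary \ref{prod_qclass}'', I would keep the write-up short, pointing to that modification and spelling out only the commuting-ranges observation and the definition $E\otimes F := \tilde E\cdot\tilde F$.
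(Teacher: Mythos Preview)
Your proposal is correct and matches the paper's own approach: the paper does not write out a proof of this corollary at all, but simply states that it follows from Theorem \ref{cqc_prod} by a straightforward modification of the proof of Corollary \ref{prod_qclass}, which is exactly the argument you outline (ampliate $E$ and $F$ to $\tilde E = E\otimes I_K$ and $\tilde F = I_H\otimes F$, observe these have commuting ranges, and set $E\otimes F := \tilde E\cdot\tilde F$). Your remark that the only point requiring a line of care is the weak measurability of the ampliated families is well taken; the paper leaves this implicit.
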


\begin{remark}\label{r_muprod}
\rm 
Let $\mu$ (resp. $\nu$) be a Borel probability measure on $X$ (resp. $Y$).
The proof of Theorem \ref{cqc_prod} shows that the following are true:
\begin{itemize}
\item[(i)]
Let $E$ (resp. $F$) be a
$\cl B(H)$-valued $\mu$-information channel 
(resp. $\cl B(H)$--valued $\nu$-information channel). 
Suppose that $E$ and $F$ have commuting ranges. Then there exists an
$\cl B(H)$--valued $\mu\times\nu$-infor\-mation channel $E\cdot F$, for which the equalities (\ref{eq_EcdotF}) hold. 

\item[(ii)]
Let $E$ (resp. $F$) be a $\cl B(H)$-valued $\mu$-information channel 
(resp. $\cl B(K)$-valued $\nu$-information channel). 
Then there exists an
$\cl B(H\otimes K)$-valued $\mu\times\nu$-information channel 
$E\otimes F$, for which the equalities 
(\ref{eq_EotFxy}) hold. 
\end{itemize}
\end{remark}


\section{Dilation results}\label{s_dilres}

In this section, we establish measurable versions of the Stinespring dilation theorem. 
The results will be used subsequently for establishing disambiguation for 
classes of no-signalling correlations, but we believe they may be 
interesting in their own right. 

Given a (Borel) measure $\mu\in M(X)$, we 
recall that $\frak{A}_{\mu}$ denotes the complete $\sigma$-algebra containing $\frak{A}_X$ together with all $\mu$-null subsets of $X$. 
We say that a function defined on $X$ is $\mu$-measurable if it is 
$\frak{A}_{\mu}$-measurable
(note that every Borel function is $\mu$-measurable). 
The $\mu$-measurability of a family of maps from a $C^*$-algebra $\cl A$ to $\cl B(H)$ means the pointwise weak $\mu$-measurability.

\begin{theorem}\label{l_measfree}
Let $X$ be a compact Hausdorff space, $\cl A$ be a 
separable unital C*-algebra, $H$ be a Hilbert space, 
$\mu\in P(X)$ be a non-atomic measure, and 
$\phi_x : \cl A\to \cl B(H)$ be a unital completely positive map, $x\in X$, such that the family 		
$(\phi_x)_{x\in X}$ is $\mu$-measurable.
There exist a Hilbert space $K$, an isometry $V : H\to K$ and a $\mu$-measurable family $(\pi_x)_{x\in X}$ of 
unital *-representations of $\cl A$ on $K$, such that 
\begin{equation}\label{eq_dilasi}
\phi_x(u) = V^*\pi_x(u) V, \ \ \ u\in \cl A, x\in X.
\end{equation}
\end{theorem}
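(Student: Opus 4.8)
The plan is to build a single dilation space that works simultaneously for all $x$, by exploiting the non-atomicity of $\mu$ to ``spread out'' the Stinespring dilations measurably. First I would recall that, since $\cl A$ is separable, for each fixed $x$ Stinespring's theorem gives a GNS-type construction: on the algebraic tensor product $\cl A \odot H$ one puts the (possibly degenerate) inner product $\langle a\otimes \xi, b\otimes \eta\rangle_x = \langle \phi_x(b^*a)\xi,\eta\rangle$, quotients by the null space and completes to get $K_x$, with $\pi_x$ the left-regular action and $V_x\xi$ the class of $1\otimes\xi$. The key observation is that the vectors $a\otimes\xi$ (for $a$ in a fixed countable dense $\bb{Q}$-subalgebra $\cl A_0\subseteq\cl A$ and $\xi$ in a fixed countable dense subset of $H$) span a dense subspace in every $K_x$, and all the inner products $\langle a\otimes\xi, b\otimes\eta\rangle_x$ are $\mu$-measurable functions of $x$ by hypothesis. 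So one is in the situation of a measurable field of Hilbert spaces over $(X,\frak{A}_\mu,\mu)$ in the sense of direct integral theory.

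Next I would take the direct integral $\widehat K = \int_X^{\oplus} K_x\, d\mu(x)$, with its measurable structure generated by the sections $x\mapsto [a\otimes\xi]_x$. On $\widehat K$ one has the decomposable representation $\widehat\pi = \int^{\oplus}\pi_x\,d\mu$ and the decomposable isometry $\widehat V = \int^{\oplus} V_x\, d\mu$ from $L^2(X,\mu)\otimes H \cong \int^\oplus H\,d\mu$ into $\widehat K$; but this dilates the ``averaged'' map $f\mapsto \int \phi_x\,d\mu$, not each $\phi_x$ individually. This is where non-atomicity enters, and it is the main obstacle: I need a \emph{single} Hilbert space $K$ and a \emph{single} isometry $V:H\to K$ with $\pi_x(u)=V^*\pi_x(u)V$ for every $x$. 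The device is to decompose $(X,\mu)$, via non-atomicity (e.g.\ using a measure isomorphism with $[0,1]$ carrying Lebesgue measure, or Maharam's theorem), as a fibered space and to realise $K$ as a direct integral over a \emph{second} copy of the measure space that is ``transverse'' to $x$: concretely, fix a measure-preserving bijection $X\cong X\times [0,1]$ (up to null sets), set $K = \int_{X\times[0,1]}^{\oplus} K_{x}\, d(\mu\times m)$, let $V\xi$ be the section $(x,t)\mapsto V_x\xi \in K_x$ (which is a genuine isometry of $H$ into $K$ because $\mu\times m$ is a probability measure and each $V_x$ is isometric), and let $\pi_x$ act on $K$ as the constant-in-the-first-variable decomposable operator whose fibre over $(x',t)$ is $\pi_{x'}\big|$... — wait, that again averages.

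Let me instead describe the construction I actually expect to work, which is cleaner: use non-atomicity to choose a $\mu$-measurable family of \emph{pairwise disjoint} ``directions.'' Partition $X$ (mod null) into countably many pieces and use a measurable selection / Borel isomorphism $X\cong\coprod_n X_n$ is not enough either; the honest route is the one in the literature on measurable Stinespring (cf.\ the companion results the authors cite): realise $K$ as $L^2(X,\mu)\otimes K_0$ for a \emph{universal} Stinespring space $K_0$ obtained by dilating the single map $\Phi:\cl A\to \frak{B}_\mu(X;\cl B(H))\subseteq \cl B(L^2(X,\mu)\otimes H)$ given by $\Phi(a) = (x\mapsto\phi_x(a))$ acting as a multiplication-type operator; then $\Phi$ is unital completely positive (by the construction preceding Remark~\ref{r_measfamm}, its values lie in the C*-algebra $\frak{B}_\mu(X;H)$), Stinespring gives $K_0$, $\widehat\pi:\cl A\to\cl B(K_0)$ and an isometry $W:L^2(\mu)\otimes H\to K_0$ with $\Phi(a) = W^*\widehat\pi(a)W$. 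Because $\Phi(a)$ is \emph{diagonalisable} — it commutes with $L^\infty(X,\mu)\otimes I$ — one shows $\widehat\pi$ and $W$ can be chosen to intertwine a copy of $L^\infty(X,\mu)$ sitting in the commutant, and decomposing along that abelian algebra yields a direct integral $K_0 = \int^\oplus_X K_x\,d\mu$, $\widehat\pi=\int^\oplus\pi_x$, $W = \int^\oplus V_x$ with $\phi_x(a) = V_x^*\pi_x(a)V_x$ for $\mu$-a.e.\ $x$; finally, non-atomicity of $\mu$ is exactly what guarantees $L^\infty(X,\mu)$ is diffuse so that this disintegration has no atoms and the exceptional null set can be absorbed (redefining $\pi_x,V_x$ on it by a fixed dilation of a fixed $\phi_{x_0}$), giving the \emph{pointwise}, not merely a.e., conclusion. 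The hard part, and the step I would write most carefully, is the passage from the single ``bundled'' Stinespring dilation $W^*\widehat\pi(\cdot)W=\Phi(\cdot)$ to the \emph{fibrewise} identity $V_x^*\pi_x(\cdot)V_x = \phi_x(\cdot)$ for every $x$ — this requires a minimality/uniqueness argument for the fibre dilations together with a measurable selection of the identifications $K_x\cong$ the GNS space of $\phi_x$, and it is precisely here that separability of $\cl A$ (to have a countable generating set of sections) and non-atomicity of $\mu$ (to kill atoms and hence exceptional fibres) are both essential.
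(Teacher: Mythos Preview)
Your proposal has a genuine gap. The direct-integral / disintegration route you settle on is essentially the content of the paper's Theorem~\ref{l_measfree_comm}, and it yields the dilation identity only $\mu$-almost everywhere, together with isometries $V_x$ that (after the identifications) become a single $V$ --- but still with an exceptional null set. Your attempt to upgrade ``a.e.''\ to ``everywhere'' by ``absorbing the null set'' does not work: if on a null set $N$ you redefine $\pi_x$ to be a fixed dilation of some $\phi_{x_0}$, then for $x\in N$ you get $V^*\pi_x(u)V=\phi_{x_0}(u)$, not $\phi_x(u)$, so \eqref{eq_dilasi} fails at those points. Diffuseness of $L^\infty(X,\mu)$ does not change this; the disintegration machinery is intrinsically an a.e.\ theory and gives you no handle on individual fibres outside the full-measure set. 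You also never quite produce a single $V:H\to K$ valid at \emph{every} $x$ --- each of your candidate constructions either averages (as you note) or leaves a null set behind.

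The paper's argument is completely different and sidesteps the a.e.\ issue from the outset. One forms the C*-free product $\cl B=\ast_{y\in X}\cl A_y$ with each $\cl A_y=\cl A$, and uses Boca's theorem to produce a \emph{single} unital completely positive map $\phi:\cl B\to\cl B(H)$ whose restriction to the $y$-th copy is $\phi_y$. Stinespring applied to this one map yields a single Hilbert space $K$, a single isometry $V:H\to K$, and a representation $\pi=\ast_{x}\pi_x$ of $\cl B$; the identities $\phi_x(u)=V^*\pi_x(u)V$ then hold for every $x$ automatically, because they are just the restriction of the one Stinespring identity to the $x$-th factor. Non-atomicity is not used to repair a null set, but only to check $\mu$-measurability of $x\mapsto\langle\pi_x(u)\zeta,\zeta'\rangle$: for test vectors $\zeta,\zeta'$ coming from words $u_1\cdots u_k$, $v_1\cdots v_m$ in $\cl B$, Boca's formula shows this function equals an explicitly measurable expression in $x$ whenever $x\notin\{y_1,z_1\}$, a finite set which has $\mu$-measure zero precisely because $\mu$ is non-atomic.
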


	\begin{proof}
		Let $f:\cl{A}\rightarrow\bb{C}$ be a fixed state. Set $\cl{A}_y=\cl{A}$ for every $y\in X$ and let $\cl B = \ast_{y\in X}\cl A_y$ be the C*-algebraic free product of 
		the family $(\cl A_y)_{y\in X}$, amalgamated over the units. 
		For every $x\in X$, we consider $\cl A_x$ as a unital C*-subalgebra of $\ast_{y\in X}\cl A_y$.
		By \cite{boca} (see also \cite{dk}), there exists a unital completely positive map 
		$\phi : \ast_{y\in X}\cl A_y \to \cl B(H)$ such that, if $u_i\in \cl A_{y_i}$ with $y_1\neq y_2 \neq\cdots \neq y_k$ and 
		$f(u_i) = 0$, $i = 1,\dots,k$, then 
		$$\phi(u_1 u_2\cdots u_k) = \phi_{y_1}(u_1)\phi_{y_2}(u_2)\cdots \phi_{y_k}(u_k).$$
		Following the Stinespring construction (see e.g. \cite[Theorem 1.1.1]{arveson-acta}), 
define a sesqui-linear form on the algebraic tensor product $\cl B\otimes H$ 
by setting
$$\langle u\otimes \xi,v\otimes \eta\rangle = \langle \phi(v^*u)\xi,\eta\rangle, \ \ \ u,v\in \cl B, \ \xi,\eta\in H,$$
let $\cl N$ be its kernel, and let 
$K$ be the Hilbert completion of the quotient $\cl B\otimes H/\cl N$ with respect to 
the induced inner product.
Let $\pi : \cl B\to \cl B(K)$ be the unital *-representation determined by the identities 
$$\pi(v)(u\otimes \xi + \cl N) = vu\otimes \xi  + \cl N, \ \ \ u,v\in \cl B, \ \xi\in H.$$
By the universal property of the C*-algebraic free product, there exist unital *-representations
$\pi_x : \cl A_x\to \cl B(K)$, $x\in X$, such that $\pi = \ast_{x\in X}\pi_x$. 
We will show that the family $(\pi_x)_{x\in X}$ is $\mu$-measurable. 

\smallskip

\noindent {\it Claim.}
Let $u_i\in\cl{A}_{y_i}$ with $y_1\not=y_2\not=\cdots\not=y_k$. 
Then $u_1\cdots u_k$ is a linear combination of the unit $1$ and
 elements of the form $w_1\cdots w_l$, with $l\leq k$, 
where $w_r\in\operatorname{ker}(f)$ and $w_r\in\cl{A}_{t_r}$, $r=1,\dots,l$, with $t_1\not=\cdots\not=t_l$. 		

\smallskip

\noindent {\it Proof of Claim.}
We use induction on the length $k$ of the word $u_1\cdots u_k$. 
The statement is trivial in the case $n = 1$. 
Assume the claim holds for words of length $k-1$, so that 
$$u_1\cdots u_{k-1}
= \sum_{j=1}^l\sum_{\ell=1}^\tau\lambda_{\ell,j} w_1^\ell\cdots w_j^\ell + \mu 1,$$
where $w_r^\ell\in\operatorname{ker}(f)$ and $w_r^\ell\in\cl{A}_{t_r^{\ell}}$, 
$r = 1,\dots,l$ with $t_1\not=\cdots\not=t_l$ and  $\lambda_{\ell,j}, \mu\in\bb{C}$, for $\ell=1,\dots,\tau$, $\tau \leq k-1$. 
Let $u_k\in \cl A_{y_k}$. amd write $u_k=u_k^0+c_k 1$, where $u_k^0\in\ker(f)$ and $c_k\in\mathbb C$.
We have that 
		\begin{align*}
			u_1\cdots u_{k-1}u_k 
   & =\left(\sum_{j=1}^l\sum_{\ell=1}^\tau\lambda_{\ell,j} w_1^\ell\cdots w_j^\ell+\mu1\right)u_k\\
& =\sum_{j=1}^l\sum_{\ell=1}^\tau \lambda_{\ell,j} w_1^\ell\cdots w_j^\ell u_k^0
+ c_k \lambda_{\ell,j} w_1^\ell\cdots w_j^\ell + \mu u_k^0 + c_k\mu1, 
		\end{align*}
Note that the last three terms on the right hand side are of the desired form. 
Fix $j$ and $\ell$. 
If $y_k\not=t_{j}^{\ell}$ then the summand 
$w_1^\ell\cdots w_{j}^\ell u_k^0$ is of the desired form. 
If $y_k = t_{j}^{\ell}$, write 
$w_{j}^\ell u_k^0 = w_{j}^{0,\ell}+c_{j}^\ell1,$ 
where $w_{j}^{0,\ell}\in\operatorname{ker}(f)$ and 
$c_{j}^\ell\in\bb{C}$. 
We then have 		
$$w_1^\ell\cdots w_j^\ell u_k^0
=
w_1^\ell\cdots w_{j-1}^\ell w_{j_0}^{0,\ell} + c_{j_0}^\ell w_1^\ell\cdots w_{j-1}^\ell,$$
which is of the desired form since $w_{j}^{0,\ell}\in\cl{A}_{t_j^{\ell}}$, 
$w_{j-1}^\ell\in\cl{A}_{t_{j-1}}$ and $t_{j}^{\ell}\not=t_{j-1}^{\ell}$. 
Since $j$ and $\ell$ were chosen arbitrarily, the claim is proved.

\smallskip

Fix $u\in \cl A$, 
$u_i\in \cl A_{y_i}$  with $y_1\neq y_2 \neq\cdots \neq y_k$, 
$v_j\in \cl A_{z_j}$ 
with $z_1\neq z_2 \neq\cdots \neq z_m$, and $\xi,\eta\in H$. 
We show that the function 
$$x\mapsto \left\langle \pi_x(u)(u_1\cdots u_k\otimes \xi + \cl N), 
v_1\cdots v_m\otimes \eta + \cl N\right\rangle$$
		is $\mu$-measurable. 
Using the claim, we can assume that 
$u_i\in\operatorname{ker}(f)$ and $v_j\in\operatorname{ker}(f)$, for all $i=1,\dots,k$.
Without loss of generality, we may also assume that $f(u) = 0$. 
		We have 
\begin{eqnarray*}
& & 
\langle \pi_x(u)(u_1\cdots u_k\otimes \xi + \cl N), v_1\cdots v_m\otimes \eta  + \cl N\rangle\\
& = & 
\langle \pi(u)(u_1\cdots u_k\otimes \xi  + \cl N), v_1\cdots v_m\otimes \eta + \cl N\rangle\\
& = & 
\langle \phi((v_1\cdots v_m)^*u(u_1\cdots u_k))\xi,\eta\rangle
=
\langle \phi(v_m^*\cdots v_1^*u u_1\cdots u_k)\xi,\eta\rangle.
\end{eqnarray*}
If $x\not\in \{y_1,z_1\}$ then 
\begin{eqnarray*}
& & 
\langle \phi(v_m^*\cdots v_1^*u u_1\cdots u_k)\xi,\eta\rangle \\
& = & 
\langle\phi_{z_m}(v_m)^*\cdots \phi_{z_1}(v_1)^*\phi_x(u) \phi_{y_1}(u_1)\cdots\phi_{y_k}(u_k)\xi,\eta\rangle\\
& = & 
\langle\phi_x(u) \phi_{y_1}(u_1)\cdots\phi_{y_k}(u_k)\xi,\phi_{z_1}(v_1)\cdots\phi_{z_m}(v_m)\eta\rangle.
\end{eqnarray*}
Since the measure $\mu$ is continuous, $\nu(\{y_1,z_1\}) = 0$. 
It follows that the function
\begin{equation}\label{eq_xmaps}
x\mapsto 
\langle \pi_x(u)(u_1\cdots u_k\otimes \xi + \cl N), v_1\cdots v_m\otimes \eta  + \cl N\rangle
\end{equation}
coincides $\mu$-almost everywhere with a $\mu$-measurable function, and is hence 
$\mu$-measurable itself. 

Now let $\tilde{\xi}, \tilde{\eta} \in K$ be arbitrary. 
Let $(\xi_n)_{n\in \bb{N}}$ and $(\eta_n)_{n\in \bb{N}}$ be sequences 
of vectors in $K$, each being a 
finite sum of vectors of the form $u_1\cdots u_k\otimes \xi + \cl N$. 
Then the function 
$x\mapsto \langle \pi_x(u)\tilde{\xi}, \tilde{\eta}\rangle$
is a uniform limit of functions of the form (\ref{eq_xmaps}), and is hence 
$\mu$-measurable. 
\end{proof}

\begin{remark}\label{r_discrete}
\rm 
Suppose that $X$ is a set, $\cl A$ is a 
separable unital C*-algebra, $H$ is a Hilbert space, and
$\phi_x : \cl A\to \cl B(H)$ is a unital completely positive map, $x\in X$.
Then the proof of Theorem \ref{l_measfree} shows that 
there exist a Hilbert space $K$, an isometry $V : H\to K$ and a 
family $(\pi_x)_{x\in X}$ of 
unital *-representations of $\cl A$ on $K$, which fulfills the 
dilation identities (\ref{eq_dilasi}). 

Note that if $\mu$ is totally atomic then the family of $*$-representations $(\pi_x)_{x\in X}$, constructed in the proof, is automatically measurable, and the statement of the theorem holds true in this case as well.  
\end{remark}

\begin{remark}
\rm 
Let $\cl{A}$ be a unital C$^*$-algebra $\cl{A}$, following the same arguments as in Theorem \ref{th_Borel}, we obtain a one-to-one 
correspondence between $\mu$-measurable families of unital completely positive maps 
$\phi_x:\cl{A}\rightarrow\cl{B}(H)$, $x\in X$ and unital completely positive maps $\Phi:\cl{A}\rightarrow \frak{B}_\mu(X;H)$, 
satisfying the equation 
$$\Phi(u)(x)=\phi_x(u), \hspace{0.2cm}u\in \cl{A}, x\in X.$$
\end{remark}

The next theorem shows that a joint measurable dilation 
of a pair of measurable families of unital completely positive maps 
is also possible.

\begin{theorem}\label{disambiguation}
Let $X$ (resp. $Y$) be a compact Hausdorff spaces, $\mu\in P(X)$
(resp. $\nu\in P(Y)$) be a continuous or totally atomic measure, $\cl A$ (resp. $\cl B$)
be a unital separable C*-algebra, 
$\phi_x:\cl A\rightarrow\cl{B}(H)$, $x\in X$
(resp. $\psi_y:\cl B\rightarrow\cl{B}(H)$ $y\in Y$) be a $\mu$-measurable 
(resp. $\nu$-measurable) family of unital completely positive maps, 
such that $\phi_x$ and $\psi_y$ have commuting ranges for all $(x,y)\in X\times Y$. There exists a Hilbert space $K$, a $\mu$-measurable family $(\pi_x)_{x\in X}$ of $^*$-representations of $\cl A$ on $K$, a $\nu$-measurable family $(\rho_y)_{y\in Y}$ of $^*$-representations of $\cl B$ on $K$  with commuting ranges for all $(x,y)\in X\times Y$, and an isometry $V : H\rightarrow K$, such that
$$(\phi_x\cdot\psi_y)(w)
= V^*(\pi_x\cdot\rho_y)(w)V,\hspace{0.3cm} w\in\cl{A}\otimes_{\max}\cl{B}, \hspace{0.2cm}(x,y)\in X\times Y.$$
	\end{theorem}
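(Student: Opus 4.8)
The plan is to combine the single-family dilation of Theorem~\ref{l_measfree} with the amalgamated free product trick already used there, applied this time to the combined algebra. First I would form the C*-algebra $\cl A\otimes_{\max}\cl B$ and, for each $(x,y)\in X\times Y$, consider the unital completely positive map $\phi_x\cdot\psi_y:\cl A\otimes_{\max}\cl B\to\cl B(H)$ supplied by the commuting-ranges hypothesis (the product map exists exactly as in the proof of Theorem~\ref{qpmprod}). If I could show that the family $(\phi_x\cdot\psi_y)_{(x,y)\in X\times Y}$ is jointly $\mu\times\nu$-measurable and then invoke Theorem~\ref{l_measfree} over the base space $X\times Y$ with the measure $\mu\times\nu$, I would obtain an isometry $V$ and a $(\mu\times\nu)$-measurable family of $^*$-representations $\sigma_{(x,y)}$ of $\cl A\otimes_{\max}\cl B$ on a Hilbert space $K$ dilating $\phi_x\cdot\psi_y$. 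The joint measurability is checked on elementary tensors $a\otimes b$ by the same orthonormal-basis expansion $\langle(\phi_x\cdot\psi_y)(a\otimes b)\xi,\xi\rangle=\sum_k\langle\phi_x(a)\xi,e_k\rangle\langle\phi_y(b)e_k,\xi\rangle$ used in the proof of Theorem~\ref{cqc_prod}, and extended to a dense subalgebra of $\cl A\otimes_{\max}\cl B$.

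The subtlety, and the place where I expect to do the real work, is that Theorem~\ref{l_measfree} as stated requires a \emph{non-atomic} measure, whereas the hypothesis here allows $\mu$ and $\nu$ to be (separately) continuous or totally atomic; moreover the product of a continuous and a totally atomic measure is neither, so I cannot apply the theorem to $\mu\times\nu$ directly in all cases. I would handle this by revisiting the free-product construction rather than quoting Theorem~\ref{l_measfree} as a black box: set $\cl C=\cl A\otimes_{\max}\cl B$, let $\cl B_0=\ast_{(y,z)\in X\times Y}\cl C_{(y,z)}$ be the amalgamated free product, use Boca's theorem \cite{boca} to glue the maps $\phi_y\cdot\psi_z$ into a single unital completely positive map $\phi:\cl B_0\to\cl B(H)$, run the Stinespring construction to get $K$, $V$ and $\pi=\ast\pi_{(x,y)}$, and then check $\mu\times\nu$-measurability of $(x,y)\mapsto\langle\pi_{(x,y)}(w)\tilde\xi,\tilde\eta\rangle$ word by word exactly as in Theorem~\ref{l_measfree}. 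The measurability argument there only needs that a certain finite set of ``bad'' coordinates is null; in the continuous case $(\mu\times\nu)(\{(y_1,z_1)\})=0$ still holds because at least one factor is continuous, and in the totally atomic case one invokes Remark~\ref{r_discrete}. A clean way to unify the two is to split $X=X_c\sqcup X_a$ into the continuous and atomic parts (similarly $Y$), treat each product piece $X_\bullet\times Y_\bullet$ separately — each piece is a product where at least one factor is continuous, except $X_a\times Y_a$ which is totally atomic — and patch the resulting representations together on a direct sum Hilbert space.

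Finally I would extract the two commuting families from the single representation $\sigma_{(x,y)}$ (equivalently $\pi_{(x,y)}$ restricted to the copy of $\cl C$): define $\pi_x:\cl A\to\cl B(K)$ and $\rho_y:\cl B\to\cl B(K)$ as the restrictions of $\sigma_{(x,y)}$ to the canonical copies of $\cl A$ and $\cl B$ inside $\cl A\otimes_{\max}\cl B$. Since $\sigma_{(x,y)}$ is a $^*$-homomorphism on $\cl A\otimes_{\max}\cl B$, its restrictions to $\cl A\otimes 1$ and $1\otimes\cl B$ are $^*$-homomorphisms with commuting ranges, and by the universal property of $\otimes_{\max}$ their product $\pi_x\cdot\rho_y$ equals $\sigma_{(x,y)}$; hence $V^*(\pi_x\cdot\rho_y)(w)V=V^*\sigma_{(x,y)}(w)V=(\phi_x\cdot\psi_y)(w)$ for all $w\in\cl A\otimes_{\max}\cl B$. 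It remains only to note that $\pi_x$ depends $\mu$-measurably on $x$ and $\rho_y$ depends $\nu$-measurably on $y$ — a small strengthening over mere $(\mu\times\nu)$-measurability of $\sigma$, which follows because in the free-product picture $\pi_x$ is literally the restriction $\pi_{(x,y)}|_{\cl A}$, and its matrix coefficients are $\mu$-measurable in $x$ for each fixed $y$ by the word-by-word computation, with the dependence on $y$ entering only through null sets of $\nu$. One subtlety to be careful about: to get genuine $\mu$-measurability of $x\mapsto\pi_x$ (rather than $(\mu\times\nu)$-measurability) one should build the free product over $X\times Y$ but keep track that coordinates from $X$ and from $Y$ never collide, so that the ``bad coordinate'' set in the measurability step is always a single point of $X$ (or of $Y$), which is null for the relevant continuous factor; the totally atomic pieces are handled by Remark~\ref{r_discrete} with no measurability obstruction at all.
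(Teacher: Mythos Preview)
There is a genuine gap in the extraction step. In your construction, $\sigma_{(x,y)}$ is the restriction of the dilated representation to the copy $\cl C_{(x,y)}$ inside $\ast_{(x',y')\in X\times Y}\cl C_{(x',y')}$. Restricting further to $\cl A\otimes 1\subseteq\cl C_{(x,y)}$ still depends on \emph{both} coordinates: for $y\neq y'$ the elements $a\otimes 1\in\cl C_{(x,y)}$ and $a\otimes 1\in\cl C_{(x,y')}$ are distinct elements of the free product and act differently on $K$ (e.g.\ on the vector $1\otimes h+\cl N$ their difference has norm governed by the Schwarz defect $\phi_x(a^*a)-\phi_x(a)^*\phi_x(a)$, which is generically nonzero). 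So your ``$\pi_x$'' is really $\pi_{(x,y)}$. Worse, commutation fails: for $(x,y)\neq(x',y')$ the images $\sigma_{(x,y)}(\cl A\otimes 1)$ and $\sigma_{(x',y')}(1\otimes\cl B)$ sit in distinct free factors, and the free product imposes no relation between them. The subtlety you flag in your last paragraph --- recovering $\mu$-measurability in $x$ alone --- is secondary to this; the construction does not produce a family indexed only by $x$ in the first place.

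The paper fixes this by reversing the order of the two operations. It forms the free products $\tilde{\cl A}=\ast_{x\in X}\cl A_x$ and $\tilde{\cl B}=\ast_{y\in Y}\cl B_y$ \emph{separately}, applies Boca's theorem to each to obtain $\phi:\tilde{\cl A}\to\cl B(H)$ and $\psi:\tilde{\cl B}\to\cl B(H)$, checks directly from the hypothesis that $\phi$ and $\psi$ have commuting ranges, and only then forms $\sigma=\phi\cdot\psi:\tilde{\cl A}\otimes_{\max}\tilde{\cl B}\to\cl B(H)$ and dilates. The Stinespring representation $\tilde\sigma$ now restricts to \emph{automatically commuting} representations $\pi=\tilde\sigma|_{\tilde{\cl A}\otimes 1}$ and $\rho=\tilde\sigma|_{1\otimes\tilde{\cl B}}$, and the universal property of the free product factors these as $\pi=\ast_x\pi_x$, $\rho=\ast_y\rho_y$; thus $\pi_x$ depends only on $x$, $\rho_y$ only on $y$, and they commute by construction. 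Measurability is then a one-variable check in $x$ (resp.\ $y$) exactly as in Theorem~\ref{l_measfree}, so no continuous/atomic decomposition of the product space is required.
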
 
 
	\begin{proof}
Let $\phi=\ast_{x\in X}\phi_x$ and $\psi=\ast_{y\in Y}\psi_y$ the 
unital completely positive maps obtained via Boca's Theorem, 
acting on $\tilde{\cl{A}}= \ast_{x\in X}\cl{A}_x$ and $\tilde{\cl{B}} = \ast_{y\in Y}\cl{B}_y$, respectively; here we have set $\cl{A}_x = \cl A$, $x\in X$, 
and $\cl{B}_y = \cl B$, $y\in Y$. 
We show that $\phi$ and $\psi$ have commuting ranges. Let $u_i\in\cl{A}_{y_i}$ for $i=1,\dots,k$ with $y_1\neq y_2\neq\dots\neq y_k$, 
and $v_j\in\cl{A}_{z_j}$ for $j=1,\dots,m$ with $z_1\neq z_2\neq\dots\neq z_m$. For $\xi,\eta\in H$, we have
		\begin{align*}
			\langle \phi(u_1\dots u_k)\psi(v_1\dots v_m)\xi,\eta\rangle & =\langle\phi_{y_1}(u_1)\cdots\phi_{y_k}(u_k)\psi_{z_1}(v_1)\cdots\psi_{z_m}(v_m)\xi,\eta\rangle \\
			& =\langle \psi_{z_1}(v_1)\cdots\psi_{z_m}(v_m)\phi_{y_1}(u_1)\cdots\phi_{y_m}(u_m)\xi,\eta\rangle\\
			& =\langle \psi(v_1\dots v_m)\phi(u_1\dots u_k)\xi,\eta\rangle.
		\end{align*}
Let  $\sigma : \tilde{\cl{A}}\otimes_{\max} \tilde{\cl{B}}\rightarrow\cl{B}(H)$
be the unital completely positive map, such that 	
		\begin{align*}
			\sigma(a\otimes b)=\phi(a)\psi(b),\hspace{0.3cm}
a\in \tilde{\cl{A}}, b\in\tilde{\cl{B}}.
		\end{align*}
After applying Stinespring's Dilation Theorem to $\sigma$, 
we obtain a Hilbert space $K$ as a completion of a quotient of 
the linear space 
$(\tilde{\cl A}\otimes_{\max}\tilde{\cl B})\odot H$ by a subspace $\cl N$, 
a $^*$-representation $\tilde{\sigma}$ of 
$\tilde{\cl{A}}\otimes_{\max}\tilde{\cl B}$ on $K$ and an isometry 
$V\in\cl{B}(H,K)$, such that
\begin{align*}
\sigma(t)=V^*\tilde{\sigma}(t)V,\hspace{0.3cm} t\in \tilde{\cl{A}}\otimes_{\max}\tilde{\cl{B}}.
\end{align*}
Let $\pi : \tilde{\cl{A}}\rightarrow\cl{B}(K)$ and 
$\rho : \tilde{\cl{B}}\rightarrow\cl{B}(K)$ be the $^*$-homomorphisms, given by 
		$\pi(a) = \tilde{\sigma}(a\otimes1)$, $a\in \tilde{\cl{A}}$, 
  and $\rho(b)=\tilde{\sigma}(1\otimes b)$, $b\in \tilde{\cl{B}}$. 
Further, let $\pi_x$, $x\in X$ (resp. $\rho_y$, $y\in Y$) be the $*$-representation 
of $\cl A$ (resp. $\cl B$) on $K$, such that $\pi = \ast_{x\in X} \pi_x$
(resp. $\rho = \ast_{y\in Y} \rho_y$).
Now similarly to the proof of Theorem \ref{l_measfree} one can verify the 
$\mu$-measurability of the family $(\pi_x)_{x\in X}$  if $\mu$ is continuous; 
by symmetry, we also obtain the $\nu$-measurability of the family the family 
$(\rho_y)_{y\in Y}$, if $\nu$ is continuous; for totally atomic measures the measurability is automatic. 
	\end{proof}

Another application of Theorem \ref{l_measfree} is the following measurable version of Naimark's Dilation Theorem.

\begin{corollary}\label{c_disformu}
Let $\mu\in P(X)$ (resp. $\nu\in P(Y)$) be a continuous measure and 
$E$ (resp. $F$) be 
a $\cl B(H)$-valued $\mu$-information channel over $(X,A)$
(resp. $\cl B(H)$-valued $\nu$-information channel over $(Y,B)$). 
Assume that $E$ and $F$ have commuting ranges. 
Then there exist a Hilbert space $K$, an isometry 
$V : H\rightarrow K$ and 
a $\cl B(K)$-valued $\mu$-information channel $P$ over $(X,A)$
(resp. $\cl B(K)$-valued $\nu$-information channel $Q$ over $(Y,B)$)
that have commuting ranges, whose values are projections, 
such that 
$$E(\alpha|x)F(\beta|y) = V^*P(\alpha|x)Q(\beta|y)V, \hspace{0.3cm}\alpha\in\frak{A}_A, \beta\in \frak{A}_B, x\in X, y\in Y.$$
\end{corollary}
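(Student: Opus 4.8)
The plan is to pass to the level of unital completely positive maps, apply the joint measurable dilation Theorem~\ref{disambiguation} to $C(A)$ and $C(B)$, and then return to quantum probability measures via the Spectral Theorem and the weak* extension provided by Lemma~\ref{l_ext}.

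First, by Remark~\ref{r_measfamm} the $\mu$-information channel $E$ corresponds to the $\mu$-measurable family $(\phi_x)_{x\in X}$ of unital completely positive maps $\phi_x:=\phi_{E_x}:C(A)\to\cl B(H)$, and $F$ corresponds similarly to a $\nu$-measurable family $(\psi_y)_{y\in Y}$, $\psi_y:=\phi_{F_y}:C(B)\to\cl B(H)$. Since $E$ and $F$ have commuting ranges, two applications of Lemma~\ref{cqpm} (exactly as in the opening lines of the proof of Theorem~\ref{qpmprod}) show that $\phi_x$ and $\psi_y$ have commuting ranges for every $(x,y)\in X\times Y$. The algebras $C(A)$ and $C(B)$ are separable (as $A$ and $B$ are second countable), and $\mu,\nu$ are continuous, so Theorem~\ref{disambiguation} applies and produces a Hilbert space $K$, an isometry $V:H\to K$, a $\mu$-measurable family $(\pi_x)_{x\in X}$ of $*$-representations of $C(A)$ on $K$, and a $\nu$-measurable family $(\rho_y)_{y\in Y}$ of $*$-representations of $C(B)$ on $K$ with commuting ranges, such that $\phi_x(f)\psi_y(g)=V^*\pi_x(f)\rho_y(g)V$ for all $f\in C(A)$, $g\in C(B)$, $x\in X$, $y\in Y$ (take $w=f\otimes g$).

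Next, by the Spectral Theorem (Theorem~\ref{qpm}) there are, for each $x$ and $y$, unique spectral measures $P(\cdot|x):\frak{A}_A\to\cl B(K)$ and $Q(\cdot|y):\frak{A}_B\to\cl B(K)$ with $\pi_x=\phi_{P(\cdot|x)}$ and $\rho_y=\phi_{Q(\cdot|y)}$. That $P$ is a projection-valued $\mu$-information channel is checked as in the converse direction of the proof of Theorem~\ref{th_Borel}: fixing $\alpha\in\frak{A}_A$ and choosing a uniformly bounded sequence $(f_n)\subseteq C(A)$ with $f_n\to\chi_\alpha$ pointwise, dominated convergence gives $\langle P(\alpha|x)\zeta,\zeta'\rangle=\lim_n\langle\pi_x(f_n)\zeta,\zeta'\rangle$ for all $\zeta,\zeta'\in K$, which is a pointwise limit of $\mu$-measurable functions of $x$ and hence $\mu$-measurable; note this argument is pointwise in $\zeta,\zeta'$ and so does not require $K$ to be separable. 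Symmetrically $Q$ is a projection-valued $\nu$-information channel. Finally, since $(\pi_x)$ and $(\rho_y)$ have commuting ranges, two further applications of Lemma~\ref{cqpm} show that $P(\alpha|x)$ commutes with $Q(\beta|y)$ for all $\alpha,\beta,x,y$, i.e.\ $P$ and $Q$ have commuting ranges.

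It remains to upgrade $\phi_x(f)\psi_y(g)=V^*\pi_x(f)\rho_y(g)V$ from continuous functions to Borel sets, which is the main obstacle. Fixing $x,y$ and $g\in C(B)$, both $f\mapsto\phi_x(f)\psi_y(g)$ and $f\mapsto V^*\pi_x(f)\rho_y(g)V$ extend, via Lemma~\ref{l_ext}, to weak* continuous maps on $\frak{B}(A)$ — here one uses that right multiplication by a fixed bounded operator and the compression $T\mapsto V^*TV$ are weak*-to-weak* continuous (they admit the preadjoints $\tau\mapsto S\tau$ and $\tau\mapsto V\tau V^*$). As the two extensions agree on $C(A)$, which is weak* dense in $\frak{B}(A)$ inside $C(A)^{**}$, they agree on $\frak{B}(A)$; evaluating at $\chi_\alpha$ and using $\phi_{E_x}(\chi_\alpha)=E(\alpha|x)$ and $\phi_{P(\cdot|x)}(\chi_\alpha)=P(\alpha|x)$ from Lemma~\ref{l_ext}, we obtain $E(\alpha|x)\psi_y(g)=V^*P(\alpha|x)\rho_y(g)V$ for all $g\in C(B)$. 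Running the same extension argument in the variable $g$ — now for the maps $g\mapsto E(\alpha|x)\psi_y(g)$ and $g\mapsto V^*P(\alpha|x)\rho_y(g)V$ on $C(B)$, extended to $\frak{B}(B)$ and evaluated at $\chi_\beta$ — yields $E(\alpha|x)F(\beta|y)=V^*P(\alpha|x)Q(\beta|y)V$, as required. Besides the bookkeeping of which maps are weak* continuous, the only genuinely delicate point is the possible non-separability of the Hilbert space $K$ delivered by Theorem~\ref{disambiguation}, which is why the measurability of $P$ and $Q$ is verified directly rather than by a bare appeal to Theorem~\ref{th_Borel}.
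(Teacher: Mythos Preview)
Your proof is correct and follows essentially the same route as the paper: pass to unital completely positive maps via Remark~\ref{r_measfamm}, apply Theorem~\ref{disambiguation}, and then return to quantum probability measures using Lemma~\ref{l_ext}. Your write-up is considerably more detailed than the paper's --- you explicitly verify the $\mu$- and $\nu$-measurability of $P$ and $Q$ (correctly flagging the non-separability of $K$ as the reason one cannot simply cite Theorem~\ref{th_Borel}), check that $P$ and $Q$ have commuting ranges via Lemma~\ref{cqpm}, and spell out the two-stage weak* extension argument that the paper compresses into a single sentence.
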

\begin{proof}
    Let $E$ be a $\cl{B}(H)$-valued $\mu$-information channel over $(X,A)$ and $F$ be a $\cl{B}(H)$-valued $\nu$-information channel over $(Y,B)$ such that $(E,F)$ is a commuting pair. By Remark \ref{r_measfamm} we obtain a $\mu$-measurable family $\phi_{E_x}:C(A)\rightarrow\cl{B}(H)$, $x\in X$ of unital completely positive maps and a $\nu$-measurable family $\phi_{F_y}:C(A)\rightarrow\cl{B}(H)$, $y\in Y$ of unital completely positive maps such that $\phi_{E_x}$ and $\phi_{F_y}$ have commuting ranges for every $(x,y)\in X\times Y$. By Theorem \ref{disambiguation} we obtain a Hilbert space $K$, a $\mu$-measurable family $(\pi_x)_{x\in X}$ of $^*$-representations of $C(A)$ on $K$, a $\nu$-measurable family $(\rho_y)_{y\in Y}$ of $^*$-representations of $C(B)$ on $K$  with commuting ranges for all $(x,y)\in X\times Y$, and an isometry $V : H\rightarrow K$, such that
$$(\phi_x\cdot\psi_y)(w)
= V^*(\pi_x\cdot\rho_y)(w)V,\hspace{0.3cm} w\in C(A)\otimes_{\max}C(B), \hspace{0.2cm}(x,y)\in X\times Y.$$
 To complete the proof apply the approximation arguments of Lemma \ref{l_ext} on $(\pi_x\cdot\rho_y)_{(x,y)\in X\times Y}$.
\end{proof}

We note that 
the dilation (\ref{eq_dilasi}) in Theorem \ref{l_measfree} holds at any point $x\in X$. 
At the same time, we have imposed a restriction on the measure $\mu$ (assumed to 
be continuous), and the dilating Hilbert space $K$, in the case where $X$ is not 
countable, is non-separable. 
In the next statement we offer a different result, where the dilations 
hold up to a negligible set, but there is no continuity restriction on the 
measure $\mu$, and the dilating Hilbert space can be chosen to be separable. 
If $X$ is a compact Hausdorff space, $\mu$ is a probability measure on $X$, 
and $H$ is a separable Hilbert space, we 
denote by $L^\infty (X,\mu, \cl B(H))$ the space of all 
(equivalence classes of) essentially
norm-bounded functions $F : X\to \cl B(H)$, such that the scalar-valued 
functions $x\to \langle F(x)\xi,\eta\rangle$ are $\mu$-measurable 
for all $\xi,\eta\in H$. 

\begin{theorem}\label{l_measfree_comm}
Let $X$ be a compact Hausdorff space, $\mu$ be a Borel probability measure on $X$, 
$A$ be a compact Hausdorff space, $H$ be a separable Hilbert space, 
and $\phi: C(A)\to 
L^\infty (X,\mu, \cl B(H))$ be a unital completely positive map.
Then there exist a separable 
  Hilbert space $K$, an isometry $V : H\to K$ and a $\mu$-measurable family $(\pi_x)_{x\in X}$ of 
		unital *-representations of $C(A)$ on $K$, such that if $u\in C(A)$ then
		$$\phi(u)(x) = V^*\pi_x(u) V, \ \ \text{ $\mu$-almost everywhere.}$$
	\end{theorem}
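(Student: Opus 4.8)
\noindent\emph{Proof proposal.} The plan is to realise $\phi$ as a single unital completely positive map into $\cl B(\hat H)$ on one separable Hilbert space, dilate it by the classical Stinespring theorem, and then disintegrate the dilation over $(X,\mu)$ via direct-integral theory. Set $\hat H:=L^2(X,\mu)\otimes H$, a separable Hilbert space, and identify $L^\infty(X,\mu,\cl B(H))$ with the von Neumann algebra of decomposable operators on $\hat H$ acting fibrewise; then $\phi$ becomes a unital completely positive map $C(A)\to\cl B(\hat H)$ whose range commutes with the diagonal algebra $\cl D:=\{M_g\otimes I_H:g\in L^\infty(X,\mu)\}$, and in particular with the unital $*$-representation $\iota:C(X)\to\cl B(\hat H)$, $\iota(g)=M_g\otimes I_H$. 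As in the proofs of Theorems \ref{qpmprod} and \ref{disambiguation}, the commuting pair $(\phi,\iota)$ yields a unital completely positive map $\tilde\phi:C(A)\otimes_{\max}C(X)\to\cl B(\hat H)$ with $\tilde\phi(u\otimes g)=\phi(u)\iota(g)$. Apply Stinespring's Dilation Theorem to $\tilde\phi$ to obtain a minimal dilation $(\hat K,\hat\Pi,\cl V)$; since $C(A)\otimes_{\max}C(X)$ and $\hat H$ are separable, $\hat K$ is separable. Write $\hat\pi:=\hat\Pi(\,\cdot\otimes1)$ and $\tilde\rho:=\hat\Pi(1\otimes\cdot\,)$: these are unital $*$-representations of $C(A)$, resp. $C(X)$, on $\hat K$, with commuting ranges, and $\cl V^*\hat\pi(u)\cl V=\phi(u)$ for $u\in C(A)$.

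The crucial step is to promote $\tilde\rho$ to a $\mu$-compatible normal representation of $L^\infty(X,\mu)$. Since $\tilde\phi|_{1\otimes C(X)}=\iota$ is a $*$-homomorphism, $1\otimes C(X)$ lies in the multiplicative domain of $\tilde\phi$; the standard multiplicative-domain argument then gives $\tilde\rho(g)\cl V=\cl V\,\iota(g)$ and shows that $\cl V\cl V^*$ commutes with $\tilde\rho(C(X))$. Extend $\tilde\rho$, via the Spectral Theorem (or Theorem \ref{qpm} and Lemma \ref{l_ext}), to a spectral measure $\tilde{\cl P}:\frak A_X\to\cl B(\hat K)$. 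Using $\tilde\rho(g)\cl V=\cl V\iota(g)$ and density of $C(X)\odot H$ in $\hat H$, the vectors $\eta=\hat\pi(u)\cl V\xi$ ($u\in C(A)$, $\xi\in\hat H$) span a dense subspace of $\hat K$; and for such $\eta$ a short computation using the commuting ranges and $0\le\hat\pi(u^*u)\le\|u\|^2I$ gives, for $f\in C(X)^+$, the estimate $\int_X f\,d\tilde{\cl P}_{\eta,\eta}\le\|u\|^2\int_X f(x)\,\|\xi(x)\|_H^2\,d\mu(x)$, whence $\tilde{\cl P}_{\eta,\eta}\le\|u\|^2\,\|\xi(\cdot)\|_H^2\,\mu$ as measures. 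Since $\eta\mapsto\tilde{\cl P}_{\eta,\eta}$ is continuous for the total-variation norm and a total-variation limit of $\mu$-absolutely continuous measures is $\mu$-absolutely continuous, we get $\tilde{\cl P}_{\eta,\eta}\ll\mu$ for every $\eta\in\hat K$. Hence $\tilde{\cl P}$ annihilates $\mu$-null sets, so it descends to a unital $*$-representation $\rho:L^\infty(X,\mu)\to\cl B(\hat K)$, $\rho(g)=\int_X g\,d\tilde{\cl P}$, which is normal (each vector functional is given by integration against a measure $\ll\mu$), faithful (as $\cl V^*\rho(g)\cl V=M_g\otimes I_H$), still commutes with $\hat\pi(C(A))$, and still satisfies $\rho(g)\cl V=\cl V(M_g\otimes I_H)$.

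Now disintegrate. Replacing $(\hat K,\hat\pi,\rho,\cl V)$ by $(\hat K\otimes\ell^2,\ \hat\pi\otimes I,\ \rho\otimes I,\ \cl V\otimes e_1)$ for a fixed unit vector $e_1\in\ell^2$, the representation $\rho\otimes I$ of $L^\infty(X,\mu)$ becomes faithful, normal, and of uniform multiplicity $\aleph_0$, hence spatially conjugate, by standard multiplicity theory, to the diagonal representation $g\mapsto M_g\otimes I_{\ell^2}$ on $\cl K:=L^2(X,\mu,\ell^2)$. Transporting everything along this spatial conjugacy, we obtain a unital $*$-representation $\Pi:C(A)\to\cl B(\cl K)$ commuting with the diagonal algebra, hence decomposable, $\Pi(u)=\int_X^\oplus\pi_x(u)\,d\mu(x)$ with $\pi_x(u)\in\cl B(\ell^2)$; and an isometry $V:\hat H\to\cl K$ intertwining the two diagonal algebras, hence decomposable, $V=\int_X^\oplus V_x\,d\mu(x)$ with $V_x:H\to\ell^2$ an isometry for $\mu$-a.e. $x$; moreover $V^*\Pi(u)V=\phi(u)$, which read fibrewise says $\phi(u)(x)=V_x^*\pi_x(u)V_x$ $\mu$-a.e., for each $u\in C(A)$.

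It remains to turn $(\pi_x)$ into a genuine $\mu$-measurable family of unital $*$-representations and to make the isometry independent of $x$. For the former, fix a countable $\bb{Q}[i]$-$*$-subalgebra $\cl A_0$ dense in $C(A)$, choose measurable representatives of the fibre maps $x\mapsto\pi_x(a)$, $a\in\cl A_0$, discard a $\mu$-null set off which all the countably many algebraic relations among elements of $\cl A_0$ hold and $\|\pi_x(a)\|\le\|a\|$, extend $\pi_x|_{\cl A_0}$ by continuity to a unital $*$-representation $\pi_x:C(A)\to\cl B(\ell^2)$, and set $\pi_x$ trivial on the exceptional set; $\mu$-measurability for arbitrary $u\in C(A)$ then follows by uniform approximation from $\cl A_0$. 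For the latter, fix an isometry $V_0:H\to\ell^2$, apply a measurable Gram--Schmidt procedure to the $\mu$-measurable field of orthonormal systems $\bigl(V_x h_j\bigr)_j$ (for an orthonormal basis $(h_j)$ of $H$) to obtain a $\mu$-measurable field of unitaries $U_x\in\cl B(\ell^2)$ with $U_xV_x=V_0$, and put $\pi_x':=U_x\pi_x(\cdot)U_x^*$; then $(\pi_x')_{x\in X}$ is again a $\mu$-measurable family of unital $*$-representations on $K:=\ell^2$ and $V_0^*\pi_x'(u)V_0=V_x^*\pi_x(u)V_x=\phi(u)(x)$ $\mu$-a.e. for every $u\in C(A)$, which is the assertion with $V:=V_0$. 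I expect the main obstacle to be the absolute-continuity step of the second paragraph: without it the direct-integral decomposition of $\hat K$ would be performed over the spectrum of $\tilde\rho(C(X))''$ rather than over $(X,\mu)$ itself, and the resulting family could not be indexed by $X$; the remaining measurable-selection bookkeeping (constant fibre via $\otimes\,\ell^2$, constant $V$ via a measurable field of unitaries) is routine but must be carried out with some care.
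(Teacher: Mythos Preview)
Your argument is correct and takes a genuinely different route from the paper's. The paper proceeds by a direct, fibrewise Stinespring/GNS construction: for each $x$ it forms the pre-Hilbert space $[\cl A]\odot H$ with the $x$-dependent sesquilinear form $\langle c\otimes\xi,d\otimes\eta\rangle(x)=\langle\phi(d^*c)(x)\xi,\eta\rangle$, quotients by the null space $I_x$, completes to $K_x$, and then assembles the $K_x$ into a measurable field via Takesaki's framework (a fundamental sequence $\cl F$, measurable Gram--Schmidt to get orthonormal bases of the $K_x$, and a partition of $X$ according to $\dim K_x$). The representation $\pi_x$ is the left regular action on $K_x$, transported to a fixed model space $K$ and extended by a fixed character on the complement. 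Your approach instead performs a \emph{single} Stinespring dilation of the bivariate map $C(A)\otimes C(X)\to\cl B(L^2(X,\mu,H))$, and then \emph{disintegrates} it: the multiplicative-domain observation $\tilde\rho(g)\cl V=\cl V\iota(g)$ and the resulting absolute-continuity estimate $\tilde{\cl P}_{\eta,\eta}\ll\mu$ are exactly what is needed to ensure that the commuting copy of $C(X)$ extends to a faithful normal representation of $L^\infty(X,\mu)$, so that after the $\otimes\,\ell^2$ amplification one lands in $L^2(X,\mu,\ell^2)$ and can read off $\pi_x$ and $V_x$ as fibres of decomposable operators. The paper's proof is more self-contained but longer; yours is shorter and more conceptual, at the cost of invoking multiplicity theory for normal representations of $L^\infty(X,\mu)$ and the direct-integral disintegration of representations commuting with the diagonal.

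One small point to tighten in your last step: to find unitaries $U_x$ with $U_xV_x=V_0$ you need $\dim(V_xH)^\perp=\dim(V_0H)^\perp$ for almost every $x$. If $\dim H=\aleph_0$ this is not automatic from the statement ``$V_x:H\to\ell^2$ is an isometry''. It does, however, follow from your construction: since $\cl V\otimes e_1$ has range inside $\hat K\otimes\bb{C}e_1$, the decomposable projections $Q_x^{(j)}$ obtained from $I_{\hat K}\otimes e_je_j^*$ are mutually unitarily equivalent for a.e.\ $x$ (via the decompositions of $I\otimes s_{jk}$), sum to $I_{\ell^2}$, and satisfy $V_xV_x^*\le Q_x^{(1)}$; hence $(V_xH)^\perp\supseteq\bigoplus_{j\ge2}Q_x^{(j)}\ell^2$ has infinite dimension, and one may take $V_0$ with infinite cokernel. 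With that check in place, your proof is complete.
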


\begin{proof}
Fix countable dense subsets $\cl A\subseteq C(A)$ and $\cl E\subseteq H$  in $C(A)$ and $H$, respectively,
with $1\in \cl A$. 
For $c_i, d_i\in C(A)$, $\xi_i$, $\eta_i\in H$,  
$i = 1,\dots, k$, set
$$\left\langle \sum_{i=1}^k c_i\otimes \xi_i,\sum_{j=1}^k d_j\otimes\eta_j
\right\rangle(x)
:=\sum_{i,j=1}^k \langle\phi(d_j^*c_i)(x)\xi_i,\eta_j\rangle.$$
We have that the function  
$$x\mapsto 
\left\langle \sum_{i=1}^k c_i\otimes \xi_i,\sum_{j=1}^k d_j\otimes\eta_j\right\rangle(x)$$
is $\mu$-measurable; for $\tilde{\xi} = \sum_{i=1}^k c_i\otimes \xi_i \in C(A)\otimes H$, let 
$X_{\tilde{\xi}}$ be a subset of $X$ such that $\mu\left(X\setminus X_{\tilde{\xi}}\right) = 0$ and 
\begin{equation}\label{eq_pdx}
\left\langle \tilde{\xi},\tilde{\xi}\right\rangle(x) = 
\left\langle \phi^{(k)}\left((c_j^*c_i)_{i,j}\right)(x)\xi,\xi\right\rangle\geq 0, \ \ \ x\in X_{\tilde{\xi}}
\end{equation}
(here we have set $\xi=(\xi_1,\ldots,\xi_k)\in H^{(k)}$). 
Let $X_0 = \cap\{X_{\tilde{\xi}} : \tilde{\xi}\in [\cl A\odot \cl E]_{\bb{Q}}\}$, 
where $[\cl A\odot \cl E]_{\bb{Q}}$ is the linear span of $\cl A\odot \cl E$ over the field $\bb{Q} + i\bb{Q}$ of complex rationals;
clearly, $X_0$ is $\mu$-measurable and $\mu\left(X\setminus X_0\right) = 0$. 
Let $x\in X_0$ and $\tilde{\xi} = \sum_{i=1}^k \lambda_k c_k\otimes \xi_k \in [\cl A]\odot H$; then there exists a sequence 
$(\tilde{\xi}_n)_{n\in \bb{N}} \subseteq [\cl A\odot \cl E]_{\bb{Q}}$, 
with 
$\tilde{\xi}_n = \sum_{i=1}^k \lambda_{k,n} c_{k}\otimes \xi_{k,n}$, where $\lambda_{k,n}\in \bb{Q}$ and 
$\xi_{k,n}\in \cl E$, such that $\lambda_{k,n} \to_{n\to\infty} \lambda_k$ and $\xi_{k,n}\to_{n\to\infty} \xi_k$. 
Equation (\ref{eq_pdx}) now shows that 
$\left\langle\tilde{\xi},\tilde{\xi} \right\rangle(x) \geq 0$ for each $x\in X_0$. 
We can further assume, after deleting a further set of measure zero if necessary, that $\Phi(1)(x)=1$ for all $x\in X_0$. 

Let 
$$I_x = \left\{a\in [\cl A]\odot H: \langle a,a\rangle (x)=0 \right\}, \ \ \ x\in X_0.$$ 
By the Cauchy-Schwartz inequality, 
$$I_x = \left\{a\in[ \cl A]\odot H:\langle a,b\rangle(x)=0, 
\mbox{ for all } b\in[\cl A]\odot H\right\},$$  
and hence $I_x$ is a subspace of $[\cl A]\odot H$.
	
	Let $K_x$  be  the completion of $[\cl A]\odot H/I_x$ with respect to the induced inner product and set  $(a\otimes\xi)(x)=a\otimes\xi+I_x$, $x\in X_0$, $a\in \cl A$, $\xi\in H$.
Then 

$$\cl F:=\{((a\otimes\xi)(x))_{x\in X_0}\in \textstyle\prod_{x\in X_0}K_{x}: a\in\cl A, \xi\in\cl E\}$$ is a countable subset such that,  
	for each $x\in X_0$, 
 the set $\{(a\otimes\xi)(x): a\in \cl A, \xi\in\cl E\}$ is total in $K_x$.
As the function 
$x\mapsto \langle (a\otimes\xi)(x), (b\otimes\eta)(x)\rangle=\langle\phi(b^*a)(x)\xi,\eta\rangle$ is $\mu$-measurable, by \cite[Lemma 8.10]{takesaki}, 
	\begin{eqnarray*}
\frak M 
& := & 
\{\zeta\in \textstyle\prod_{x\in X_0}K_x: x\mapsto\langle\zeta(x),(a\otimes\xi)(x)\rangle \\
& & 
\hspace{2cm} \text{ is $\mu$-measurable for all }a\in\cl A,\xi\in\cl E\}
	\end{eqnarray*}
	 is a measurable vector field (see \cite[Definition 8.9]{takesaki}). 
	 
Let 
$$\cl H = \left\{\zeta\in\frak M : \|\zeta\| = \left(\int_X\|\zeta(x)\|^2d\mu(x)\right)^{1/2} < \infty\right\}$$ 
be the Hilbert space, associated with the field $\frak M$, which we also denote, 
as usual, by $\int_X^{\oplus}K_xd\mu(x)$. 
Let 
$\zeta=((a\otimes\xi)(x))_x$, $a\in\cl A$, $\xi\in \cl E$. We have that 
$$\|\zeta\|^2=\int_X\langle\phi(a^*a)(x)\xi,\xi\rangle d\mu(x)\leq\|\|\phi(a^*a)(\cdot)\|\|_\infty\|\xi\|^2<\infty,$$
and hence $\zeta \in \cl H$.
The countable set $\cl F$ is a fundamental sequence of the vector field $\frak M$ as defined in \cite[Definition 8.9]{takesaki}.
By \cite[Lemma 8.12]{takesaki}, the function 
$x\mapsto n(x) := \text{dim} (K_x)$ is $\mu$-measurable  and hence 
the set $\{x\in X: n(x)=k\}$ is $\mu$-measurable for any $k=0,1,2,\ldots, +\infty$. 

Let $\{\xi_k\}_{k\in \bb{N}}$ be an orthonormal basis of $H$, which we assume without loss of 
generality, to be contained in $\cl E$. 
We observe that $\langle(1\otimes\xi_k)(x), (1\otimes\xi_m)(x)\rangle=\langle\phi(1)(x)\xi_k,\xi_m\rangle=\delta_{k,m}$, $x\in X_0$. 
Then arguments as in \cite[Lemma 8.12]{takesaki} give that we can find a fundamental sequence of measurable vector fields 
$\{\zeta_n\}_n=\{\tilde\zeta_k\}_{k=1}^{r}\cup\{\tilde{\tilde\zeta}_m\}_{m=1}^{\infty}$, where $r=\text{dim}(H)$, such that 
$\tilde\zeta_k=((1\otimes\xi_k)(x))_x$, and such that 
$\{\tilde\zeta_k(x)\}_{k=1}^r\cup
\{\tilde{\tilde\zeta}_m(x)\}_{m = 1}^{m(x)}$ 
is an orthonormal basis of $K_x$ for all $x\in X_0$ where, 
setting $H_x=\overline{\text{span}}\{\tilde\zeta_k(x)\}_{k=1}^{r}$, we 
have $m(x)=\text{dim} (H_x^\perp)$ and $\tilde{\tilde\zeta}_{m(x)+k}(x)=0$, 
$k\in \bb{N}$, if $m(x)<\infty$. We note that the construction of $\{\zeta_n(x)\}$  is nothing else than the Gram-Schmidt orthogonalization of $\cl F$ and hence each $\zeta_n(x)$ is a linear combination of elements in $\cl F$ with measurable coefficients: $\zeta_n(x)=\sum_k e_k(x)s_k(x)$, where $s_k\in\cl F$ and $x\mapsto e_k(x)\in\mathbb C$ is measurable.

	 Let $K$ be a separable Hilbert space  with $\{\varepsilon_n\}_n=\{\tilde\varepsilon_k\}_{k=1}^r\cup\{\tilde{\tilde\varepsilon}_m\}_{m=1}^{\infty}$ as its orthonormal basis, and define 
	 $$U(x)\zeta=\sum_{k=1}^r\langle\zeta,\tilde\zeta_k(x)\rangle\tilde\varepsilon_k
  + \sum_{m=1}^{m(x)} \langle\zeta,\tilde{\tilde\zeta}_m(x)\rangle\tilde{\tilde\varepsilon}_m , 
  \ \ \ \zeta\in K_x.$$
	 We have that the function 
  $x\mapsto U(x)\zeta(x)$ is measurable if so is the function $x\mapsto \zeta(x)$. 
	 
	 Let $E_k=\{x\in X_0:m(x)=k\}$ and $H_k=\ell^2(\mathbb N_{r+k})=\overline{\text{span}}(\{\tilde\varepsilon_k\}_{k=1}^r\cup\{\tilde{\tilde\varepsilon}_m\}_{m=1}^k)$, $\cl H_k=\int_{E_k}^{\oplus}K_xd\mu(x)$, and $U_k=\int_{E_k}^\oplus U(x)d\mu(x)$. We have that $\{E_k\}_k$ is a family of pairwise disjoint  measurable sets, $\cl H=\oplus_k\cl H_k$  and each $U_k: \cl H_k\to L^2(E_k,\mu)\otimes H_k$ is a unitary operator.  
  We may assume that $\cl A$ is closed under multiplication and consider for $x\in X_0$, $\pi_x: [\cl A]\to \cl B(K_x)$, $\pi_x(a)\zeta(x)=\sum_{i=1}^m (aa_i\otimes\xi_i)(x)$, for $\zeta(x)=\sum_{i=1}^m (a_i\otimes\xi_i)(x)$, $a\in[\cl A]$,  $a_i\in\cl A$, $\xi_i\in H$, $i = 1,\dots,m$. As
	 \begin{eqnarray*}
	&& \langle\pi_x(a)\zeta(x),\pi_x(a)\zeta(x)\rangle=\sum_{i,j=1}^m\langle\phi(a_j^*a^*aa_i)(x)\xi_i,\xi_j\rangle \\&&\leq\|a\|^2\sum_{i,j=1}^{m}\langle\phi(a_j^*a_i)(x)\xi_i,\xi_j\rangle =\|a\|^2\langle\zeta(x),\zeta(x)\rangle =\|a\|^2\|\zeta(x)\|^2
	 \end{eqnarray*}
	 and $\{(\sum_{i=1}^ma_i\otimes\xi_i)(x): a_i\in \cl A, \xi_i\in H\}$ is dense in $K_x$, the map 
  $\pi_x$ can be extended to a well-defined representation  of $C(A)$ on $K_x$. 
  
  We observe next that $x\in E_k\mapsto \tilde\pi_x(a):=U(x)\pi_x(a)U(x)^*\in \cl B(H_k)$, $a\in C(A)$, is weakly measurable for each $k$. In fact, 
  letting $\cl F(x) = \{\tilde{\xi}(x) : \tilde{\xi}\in \cl F\}$, for $x\in X_0$, 
  for $a\in[\cl A]$, we have that $\pi_x(a)\cl F(x)\subseteq\cl F(x)$, and $U(x)^*\varepsilon_n=\zeta_n(x)$, so that by the construction of $\zeta_n$, we obtain that $x\mapsto \langle U(x)\pi_x(a)U(x)^*\varepsilon_n,\varepsilon_m\rangle$ is measurable. That $x\mapsto\langle U(x)\pi_x(a)U^*(x)\xi,\eta\rangle$ is measurable for all $a\in C(A)$ and $\xi$, $\eta\in H_k$ follows from the separability. 
	 
	 
	 Let $V : L^2(X,\mu)\otimes H = \bigoplus_k L^2(E_k,\mu)\otimes H\to\cl H=\bigoplus_k\cl H_k$  be given by $V(\xi\otimes\eta)(x)=\xi(x)(1\otimes\eta)(x)$, $\xi\in L^2(X,\mu)$, $\eta\in H$,
	 and $U=\bigoplus_kU_k:  \oplus_k\cl H_k\to \bigoplus_kL^2(E_k,\mu,H_k)$. Then
	 $$(UV(\xi\otimes\eta))(x)=\sum_{k=1}^r\xi(x)\langle (1\otimes\eta)(x),\tilde\zeta_k(x)\rangle\tilde\varepsilon_k\in L^2(E_k,\mu, H_k)$$ if $\xi\in L^2(E_k,\mu)$, $\eta\in H$.
	 
  Extend
	 each $\tilde\pi_x(a)$, $x\in E_k$, to a representation on $K$ by letting for a fixed $\alpha\in A$, $\tilde\pi_x(a)\xi= a(\alpha)\xi$ for any $\xi\in H_k^\perp$ and set $\tilde\pi(a)=\int^\oplus_X \tilde \pi_x(a)d\mu(x)$, as a representation on $L^2(X,\mu)\otimes K$. 
	 For $\xi,\tilde{\xi}\in L^2(X,\mu)$ and $\eta,\tilde{\eta}\in H$, we have 
	 \begin{eqnarray*}
	 \langle V^*U^*\tilde\pi(a)UV(\xi\otimes\eta),\tilde\xi\otimes\tilde\eta\rangle
  & = & 
  \int_X \langle \pi_x(a)V(\xi\otimes\eta)(x),V(\tilde\xi\otimes\tilde\eta)(x)\rangle d\mu(x)\\
  & = & 
	 \int_X\langle\phi(a)(x)\eta,\tilde\eta\rangle\xi(x)\overline{\tilde\xi(x)}d\mu(x).
	 \end{eqnarray*}
	 On the other hand,
	 \begin{eqnarray*}
	 &&\langle V^*U^*\tilde\pi(a)UV(\xi\otimes \eta), \tilde\xi\otimes\tilde\eta\rangle\\&&= \int_X\langle \tilde\pi_x(a)UV(\xi\otimes\eta)(x), UV(\tilde\xi\otimes\tilde\eta)(x)\rangle d\mu(x)\\&&=\int_X\langle\tilde\pi_x(a)\xi(x)\sum_{k=1}^r\langle\eta,\xi_k\rangle\tilde\varepsilon_k,\tilde\xi(x)\sum_{k=1}^r\langle\tilde\eta,\xi_k\rangle\tilde\varepsilon_k\rangle d\mu(x)\\&&=\int_X\langle\tilde\pi_x(a)\tilde V\eta, \tilde V\tilde\eta\rangle \xi(x)\overline{\tilde\xi(x)}d\mu(x),
	 \end{eqnarray*}
	 where $\tilde V: H\to K$ is the operator, given by $\eta\mapsto \sum_{k=1}^r\langle\eta,\xi_k\rangle\tilde\varepsilon_k$,
implying that $\tilde V^*\tilde\pi_x(a)\tilde V=\phi(a)(x)$, $\mu$ almost everywhere. As $\phi$ is unital, $\tilde V$ is an isometry. 
\end{proof}

\begin{remark}\rm 
    Theorem \ref{l_measfree_comm} holds true for general unital separable 
    $C^*$-algebras in the place of $C(A)$. Indeed, by enlarging the space $K$ so that $H_k^\perp$ is infinite-dimensional, we can extend $\tilde\pi_x$, $x\in E_k$, to a representation on $K$ by letting $\tilde\pi_x|_{H_k^\perp}$ to be a fixed representation of the $C^*$-algebra.  
\end{remark}


\section{Measurable no-signalling correlations}\label{s_mnsc}

In this section, we define measurable no-signalling correlations, 
distinguishing three types thereof, define natural products between 
correlations, and 
show that the latter are type-preserving. 
Let $X$, $Y$, $A$ and $B$ be compact Hausdorff spaces. 
An information channel
$$p = (p(\cdot,\cdot|x,y))_{(x,y)\in X\times Y} : X\times Y \mapsto A\times B$$ 
is called a \emph{measurable no-signalling (NS) correlation} over the quadruple 
$(X,Y,A,B)$ if
\begin{equation}\label{eq_ns1}
p(\alpha\times B | x,y) = p(\alpha\times B | x,y') \ \mbox{ for all }
x\in X, y,y'\in Y \mbox{ and } \alpha\in \frak{A}_A,
\end{equation}
and 
\begin{equation}\label{eq_ns2}
p(A\times \beta | x,y) = p(A\times \beta | x',y)  \ \mbox{ for all } x,x'\in X, y\in Y  \ \mbox{ and } \beta\in \frak{A}_B.
\end{equation}
We denote the set of all no-signalling (measurable) correlations by 
$\cl C_{\rm ns} = \cl C_{\rm ns}(X,Y,A,B)$. 
When understood from the context, we skip the adjective \lq measurable'
and refer to these objects as NS correlations.

Suppose that $\mu$ (resp. $\nu$) is a Borel probability measure on $X$ (resp. $Y$). 
We define \emph{no-signalling $\mu,\nu$-correlations} 
by imposing conditions (\ref{eq_ns1}) and (\ref{eq_ns2}), but requiring
that $p$ be a $\mu\times\nu$-information channel.

\begin{remark}
\rm 
Let $p$ be an NS correlation over the quadruple 
$(X,Y,A,B)$. Then the marginal
$p_A(\cdot | x)$ (resp. $p_B(\cdot | y)$) of $p(\cdot,\cdot|x,y)$ is a well-defined probability measure on 
$A$ (resp. $B$), and the family 
$(p_A(\cdot | x))_{x\in X}$ (resp. $(p_B(\cdot | y))_{y\in Y}$) is an information channel from $X$ (resp. $Y$) to $A$ (resp. $B$). 
\end{remark}

We next define some subclasses of $\cl C_{\rm ns}$. 
If $H$ is a Hilbert space, $E : X\mapsto (A;H)$ and $\xi\in H$ is a unit vector, 
we let $p_{E,\xi} = (p_{E,\xi}(\cdot|x))_{x\in X}$ be the family of 
probability measures, given by 
$$p_{E,\xi}(\alpha|x) = \langle E(\alpha|x)\xi,\xi\rangle, \ \ \ \alpha\in \frak{A}_A, 
x\in X.$$
Trivially, $p_{E,\xi}$ is a classical information channel from $X$ to $A$.

\begin{definition}\label{d_subclasses}
Let $p = (p(\cdot | x,y))_{x,y}$ be a measurable NS correlation over $(X,Y,A,B)$. 
We call $p$ 
\begin{itemize}
\item[(i)]
\emph{local} if there exist $n\in \bb{N}$ and information channels 
$p_i^{(1)} : X\to A$ and $p_i^{(2)} : Y\to B$ and 
scalars $\lambda_i \in [0,1]$, $i = 1,\dots,n$,
such that $\sum_{i=1}^n\lambda_i = 1$ and 
$p = \sum_{i=1}^n \lambda_i p_i^{(1)}\otimes p_i^{(2)}$;
\item[(ii)]
\emph{quantum spatial} 
(resp. \emph{projective quantum spatial}) if there exist
Hilbert spaces $H$ and $K$, 
operator-valued (resp. projection-valued) 
channels $E : X\to (A;H)$ and $F : Y\to (B;K)$ and a unit
vector $\xi\in H\otimes K$, such that $p = p_{E\otimes F,\xi}$;

\item[(iii)]
\emph{quantum commuting} 
(resp. \emph{projective quantum commuting})
if there exist a Hilbert space $H$, pair 
$(E,F)$ of operator-valued 
(resp. projec\-tion-valued) channels with commuting ranges acting on $H$ and a unit
vector $\xi\in H$, such that $p = p_{E\cdot F,\xi}$.
\end{itemize}
\end{definition}

We denote the classes of all quantum commuting (resp. quantum spatial, local) measurable NS correlations by $\cl C_{\rm qc}$ (resp. $\cl C_{\rm qs}$, $\cl C_{\rm loc}$). 
We write $\cl C_{\rm qc}^{\rm pr}$ (resp. $\cl C_{\rm qs}^{\rm pr}$)
for the class of all projective quantum commuting (resp. projective quantum
spatial)
measurable NS correlations. 

Suppose that $\mu$ (resp. $\nu$) is a Borel probability measure on $X$ (resp. $Y$). 
We define $\mu,\nu$-versions of the correlation classes introduced in 
Definition \ref{d_subclasses}, but utilising 
(operator-valued) $\mu$-information channels and 
(operator-valued) $\nu$-information channels. 
We use the notations $\cl C_{\rm t}(\mu,\nu)$ and 
$\cl C_{\rm t}^{\rm pr}(\mu,\nu)$ for the corresponding 
sets of $\mu,\nu$-correlations of type ${\rm t}$. 
Further, we write $\cl C_{\rm qc,sep}(\mu,\nu)$
(resp. $\cl C_{\rm qs,sep}(\mu,\nu)$) 
for the set of quantum commuting $\mu,\nu$-correlations 
(resp. quantum spatial $\mu,\nu$-correlations) that admit 
a realisation as in Definition \ref{d_subclasses} (iii)
(resp. Definition \ref{d_subclasses} (ii)) with a 
separable Hilbert space $H$ (resp. separable Hilbert spaces 
$H$ and $K$).

\begin{remark}\label{r_coincides}
\rm 
\begin{itemize}
\item[(i)]
Suppose that $X$, $Y$, $A$ and $B$ are finite sets, equipped with 
discrete topology. 
The definitions of the 
classes of local, quantum spatial 
and quantum commuting correlations over 
$(X,Y,A,B)$ are in this case well-known; see e.g. \cite{psstw}. 
It is straightforward to see that, 
the classes $\cl C_{\rm t}$, ${\rm t}\in \{{\rm loc}, {\rm qs}, {\rm qc}\}$, 
defined herein, agree with these (finite) classical correlation classes.

\item[(ii)]
We have that 
\begin{equation}\label{eq_inclu}
\cl C_{\rm loc}\subseteq \cl C_{\rm qs} \subseteq \cl C_{\rm qc} \subseteq \cl C_{\rm ns}.
\end{equation}
Indeed, the third inclusions in (\ref{eq_inclu}) in implicit in Definition \ref{d_subclasses} and straightforward to verify. The second inclusion is 
a direct consequence of the definitions. 
To see the first inclusion, 
assume that $p$ has the form in Definition \ref{d_subclasses} (i). 
Let $E : X\to (A;\bb{C}^n)$ and $F : Y\to (B;\bb{C}^n)$ be given by 
$E(\alpha|x) = (p_i^{(1)}(\alpha|x))_{i=1}^n$ and 
$F(\beta|y) = (p_i^{(2)}(\beta|y))_{i=1}^n$, viewed as diagonal operators acting on 
$\bb{C}^n$, and $\xi = (\sqrt{\lambda_i}\delta_{i,j})_{i,j=1}^n$, viewed as a (unit) vector in 
$\bb{C}^n\otimes \bb{C}^n$. 
A direct verification now shows that $p = p_{E\otimes F,\xi}$, and 
(\ref{eq_inclu}) is established. 
We note, in addition, the inclusions 
\begin{equation}\label{eq_inclu12}
\cl C_{\rm qs}^{\rm pr} \subseteq \cl C_{\rm qc}^{\rm pr}, \ \ 
\cl C_{\rm qs}^{\rm pr}\subseteq \cl C_{\rm qs} \ \mbox{ and } \ 
\cl C_{\rm qc}^{\rm pr}\subseteq \cl C_{\rm qc}.
\end{equation}
\end{itemize}
We remark that inclusions, analogous to (\ref{eq_inclu})
and (\ref{eq_inclu12}), continue to hold for the corresponding classes of 
no-signalling $\mu,\nu$-correlations. 
\end{remark}

In the sequel, given 
sets $A$ and $B$, we sometimes abbreviate $A\times B$ to $AB$, and 
an ordered pair $(a,b)\in A\times B$ to $ab$. 
If $A$ and $B$ are topological spaces, $AB$ is equipped with the product 
topology. 
Further, if $A$, $A'$, $B$ and $B'$ are sets, $L\subseteq AB$ and 
$L'\subseteq A'B'$, we let 
$$L\dot{\times} L' = \left\{((aa',bb')\in AA'\times BB' : (a,b)\in L
\mbox{ and } (a',b')\in L'\right\}.$$
It is clear that, if $L$ and $L'$ are Borel subsets of 
$AB$ and $A'B'$, respectively, 
then $L\dot{\times} L'$ is a Borel subset of $AA'\times BB'$. 

Let $X$, $X'$, $Y$, $Y'$, $A$, $A'$, $B$ and $B'$ be compact Hausdorff spaces, and 
$p: XX'\times YY' \mapsto AA'\times BB'$ be an information channel. 
Fix $x'\in X'$ and $y'\in Y'$ and let 
$p_{x',y'} : X\times Y \mapsto A\times B$ be the family of measures over 
$(X,Y)$, given by 
$$p_{x',y'}(L|x,y) = p(L\dot{\times} A'B' | xx',yy'), \ \ \ x\in X, y\in Y, 
\ L\in \frak{A}_{AB}.$$
Then $p_{x',y'}$ is an information channel $XY\mapsto AB$; 
indeed, since the function 
$$(xx'',yy'')\mapsto p(L\dot{\times} A'B' | xx'',yy'')$$
is Borel, so is the function 
$(x,y)\mapsto p(L\dot{\times} A'B' | xx',yy')$. 

\begin{proposition}\label{l_reduction}
Let $X$, $X'$, $Y$, $Y'$, $A$, $A'$, $B$ and $B'$ be compact Hausdorff spaces, and 
$p: XX'\times YY' \mapsto AA'\times BB'$ be a measurable no-signalling correlation. 
Then the information channel
$p_{x',y'} : X\times Y \mapsto A\times B$ 
is a measurable no-signalling correlation, $x'\in X'$, $y'\in Y'$. 
Moreover, 
if ${\rm t}\in \{{\rm loc}, {\rm qs}, {\rm qc}\}$ and 
$p\in \cl C_{\rm t}$ then $p_{x',y'}\in \cl C_{\rm t}$. 
\end{proposition}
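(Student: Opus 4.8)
The plan is to prove the no-signalling property of $p_{x',y'}$ first, then handle the three type-preservations one at a time. For the no-signalling property, I would start from the no-signalling equations (\ref{eq_ns1}) and (\ref{eq_ns2}) satisfied by $p$, and observe that for $\alpha\in\frak{A}_A$ we have $(\alpha\times B)\dot\times A'B' = (\alpha\times A')\times(B\times B')$, which is a Borel rectangle of the form $\tilde\alpha\times BB'$ with $\tilde\alpha = \alpha\times A'\in\frak{A}_{AA'}$. Hence $p_{x',y'}(\alpha\times B|x,y) = p(\tilde\alpha\times BB'|xx',yy')$, and the no-signalling condition (\ref{eq_ns1}) for $p$ (applied with input pair $(xx',yy')$ versus $(xx',y''y')$ for $y''\in Y$) shows this is independent of $y$; the condition (\ref{eq_ns2}) for $p$ is used symmetrically for the $B$-marginal. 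That $p_{x',y'}$ is an information channel was already argued in the paragraph preceding the proposition.

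For the case ${\rm t} = {\rm loc}$, suppose $p = \sum_{i=1}^n\lambda_i q_i^{(1)}\otimes q_i^{(2)}$ with $q_i^{(1)} : XX'\to AA'$ and $q_i^{(2)} : YY'\to BB'$ information channels. I would set $r_i^{(1)}(\alpha|x) := q_i^{(1)}(\alpha\times A'|xx')$ for $\alpha\in\frak{A}_A$, $x\in X$, and similarly $r_i^{(2)}(\beta|y) := q_i^{(2)}(\beta\times B'|yy')$, which are information channels $X\to A$ and $Y\to B$ respectively (Borel measurability in the single variable follows from Borel measurability in the pair, as in the discussion before the proposition). A direct computation on Borel rectangles $\alpha\times\beta$ and the uniqueness clause of the product-measure construction give $p_{x',y'} = \sum_{i=1}^n\lambda_i r_i^{(1)}\otimes r_i^{(2)}$, so $p_{x',y'}\in\cl C_{\rm loc}$.

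For ${\rm t}\in\{{\rm qs},{\rm qc}\}$, the idea is the same: restrict the operator-valued channels to the slice. In the quantum spatial case, write $p = p_{E\otimes F,\xi}$ with $E : XX'\to(AA';H)$, $F : YY'\to(BB';K)$, $\xi\in H\otimes K$. Fixing $x'$, define $E'(\alpha|x) := E(\alpha\times A'|xx')$ for $\alpha\in\frak{A}_A$; since $E(\cdot\times A'|xx') : \frak{A}_A\to\cl B(H)^+$ inherits countable additivity in the weak* topology and normalisation $E'(A|x) = E(A\times A'|xx') = I_H$ from $E$, it is a quantum probability measure, and weak measurability in $x$ follows from weak measurability of $E$ in the pair. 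Thus $E' : X\to(A;H)$ is an operator-valued channel, and likewise $F' : Y\to(B;K)$; the identity $(E'\otimes F')(\alpha\times\beta|x,y) = E(\alpha\times A'|xx')\otimes F(\beta\times B'|yy') = (E\otimes F)((\alpha\times\beta)\dot\times A'B'|xx',yy')$ on rectangles, together with the uniqueness in Corollary \ref{q_prod}, yields $p_{x',y'} = p_{E'\otimes F',\xi}$. The quantum commuting case is verbatim the same with a single Hilbert space $H$ and the product $E\cdot F$ from Theorem \ref{cqc_prod}, noting that $E'$ and $F'$ inherit commuting ranges since $\ran E'(\alpha|x)\subseteq\ran E(\alpha\times A'|xx')\subseteq\{F(\beta\times B'|y'y) : \ldots\}'$.

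The main obstacle, such as it is, is bookkeeping rather than depth: one must be careful that $\alpha\mapsto\alpha\times A'$ is a $\sigma$-algebra map $\frak{A}_A\to\frak{A}_{AA'}$ taking disjoint sets to disjoint sets and $A$ to $AA'$ (so that the restricted objects really are QPM's / probability measures), and that the uniqueness assertions in Theorem \ref{qpmprod}, Corollary \ref{q_prod} and Theorem \ref{cqc_prod} apply because $p_{x',y'}$ and the proposed realisation agree on the generating Borel rectangles of $\frak{A}_{AB}$. The measurability transfer from the joint variable $(xx'',yy'')$ to the slice variable $(x,y)$ is exactly the argument already given in the text for $p_{x',y'}$ being a channel, and needs only to be invoked, not repeated.
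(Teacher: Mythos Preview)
Your proposal is correct and follows essentially the same approach as the paper: both verify no-signalling via the identity $(\alpha\times B)\dot\times A'B' = (\alpha\times A')\times BB'$ and the no-signalling of $p$, and both handle the type-preservation by slicing the underlying (operator-valued) channels, setting $E'(\alpha|x) = E(\alpha\times A'|xx')$ and similarly for $F'$. The paper only writes out the quantum commuting case and declares the others similar, whereas you spell out all three; your extra care about uniqueness on rectangles and the $\sigma$-algebra map $\alpha\mapsto\alpha\times A'$ is sound bookkeeping that the paper leaves implicit.
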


\begin{proof}
Let $y_1,y_2\in Y$ and $\alpha\in \frak{A}_A$. 
Since $p$ is no-signalling, we have 
\begin{eqnarray*}
p_{x',y'}(\alpha\times B|x,y_1) 
& = & 
p((\alpha\times B)\dot{\times} A'B' | xx',y_1y')\\
& = & 
p((\alpha\times A')\times BB' | xx',y_1y')\\
& = &  
p((\alpha\times A')\times BB' | xx',y_2y')
= 
p_{x',y'}(\alpha\times B|x,y_2). 
\end{eqnarray*}
It follows that $p_{x',y'}$ is a no-signalling correlation. 

Suppose that $p$ is a quantum commuting correlation, say $p = p_{E\cdot F,\xi}$, 
where $E : XX'\to (AA';H)$ and $F : YY'\to (BB';H)$ are 
operator valued channels with commuting ranges, for some Hilbert space $H$, 
and $\xi\in H$ is a unit vector; thus, 
$(E_{xx'}(\cdot))_{xx'\in XX'}$ and $(F_{yy'}(\cdot))_{yy'\in YY'}$
are Borel measurable families of quantum probability measures with commuting ranges. 
For fixed $x'\in X'$ and $y'\in Y'$, 
set $E_{x'}(\alpha) := ((E_{xx'}(\alpha\times A'))_{x\in X}$, $\alpha\in \frak{A}_A$; 
define $F_{y'}$ similarly. 
The families 
$E_{x'}$ and $F_{y'}$ of 
quantum probability measures over $A$ and $B$, respectively, 
are thus Borel measurable and have commuting ranges. 
Since $p_{x',y'} = p_{E_{x'}\cdot F_{y'},\xi}$, we have that $p_{x',y'}$ is a 
quantum commuting no-signalling correlation. 

The proof for the quantum and the local types is similar. 
\end{proof}

\begin{remark}\label{r_prodof2m}
\rm
Let $X$, $X'$, $Y$, $Y'$, $A$, $A'$, $B$ and $B'$ be compact Hausdorff spaces, and 
$\mu$ and $\nu$ (resp. $\mu'$ and $\nu'$) be Borel probability measures on $X$ and $Y$
(resp. $X'$ and $Y'$), respectively. 
If 
$p: XX'\times YY' \mapsto AA'\times BB'$ be a 
no-signalling $\mu\times \mu',\nu\times\nu'$-correlation then 
the information channel
$p_{x',y'} : X\times Y \mapsto A\times B$ 
is a no-signalling $\mu,\nu$-correlation, 
for almost all $(x',y')\in X' \times Y'$.
This follows from the fact that if a 
function $(x,x',y,y')\to h(xx',yy')$ is 
$\mu\times\mu'\times\nu\times\nu'$-measurable then 
for almost all $(x',y')$, the corresponding section 
$h_{x',y'}$ is $\mu\times\nu$-measurable
(see \cite[2.39]{folland}).

Suppose that ${\rm t}\in \{{\rm qs}, {\rm qc}\}$ 
and 
$p\in \cl C_{\rm t,sep}(\mu\times\mu',\nu\times\nu')$. 
We claim that $p_{x',y'}\in \cl C_{\rm t,sep}(\mu,\nu)$. 
The statement in the case where ${\rm t} = {\rm loc}$
is immediate. We consider the case ${\rm t} = {\rm qc}$; 
the case ${\rm t} = {\rm qs}$ is similar. 
Assume that $p = p_{E\cdot F,\xi}$, where $E$ (resp. $F$)
is an operator-valued $\mu\times\mu'$-channel
(resp. operator-valued $\nu\times\nu'$-channel) acting on $H$, 
and $\xi\in H$ is a unit vector. 
The separability of $H$ implies that for each Borel set $\alpha\subset A$ there exists a set 
$M(\alpha)\subseteq X'$ with $\mu'(M(\alpha)) = 0$, such  that 
for every $x'\not\in M(\alpha)$, the function 
$x\mapsto E_{xx'}(\alpha\times A')$ is weakly $\mu$-measurable. Choose  a countable base $\{\alpha_n\}_{n\in \mathbb N}$ for the second countable space $A$ and let $\cl R$ be the set of all finite intersections of the sets $\alpha_n$. If $M=\cup_{\alpha\in\cl R}M(\alpha)$, then $\mu(M)=0$ and the function $x\mapsto E_{xx'}(\alpha\times A')$, $\alpha\in\cl R$, is weakly $\mu$-measurable for any $x'\not\in M$. It is easy to see that the set $\cl E$ 
of all Borel subsets $\alpha\subseteq A$, such that $x\mapsto E_{xx'}(\alpha\times A')$ is weakly $\mu$-measurable  for each $x'\notin M$, is 
closed under countable increasing unions and set differences of 
the form $\alpha\setminus\alpha_0$, where $\alpha_0\subseteq \alpha$,
and hence contains the Dynkin system generated by $\cl R$. As $\cl R$ is closed under intersections, 
the Sierpi\'nski-Dynkin Theorem 
\cite[Theorem 1.13]{kn} shows that 
$\cl E$ contains the $\sigma$-algebra, generated by 
$\cl R$; thus, $\cl E = {\mathfrak A}_A$.
Similarly, there exists a set 
$N\subseteq N'$ with $\nu'(N) = 0$, such  that 
for every $y'\not\in N$, the function 
$y\to F_{yy'}(\beta\times B')$ is weakly $\nu$-measurable, 
for every Borel set $\beta\subseteq Y$. 
It follows that if $(x',y')\in M^c\times N^c$ then 
$p_{x',y'}\in \cl C_{\rm qc,sep}(\mu,\nu)$. 
\end{remark}

\begin{theorem}\label{l_prodnscor}
Let $X_i, Y_i, A_i$ and $B_i$ be compact Hausdorff spaces and
$p_i$ be a measurable no-signalling correlation over 
the quadruple $(X_i,Y_i,A_i,B_i)$, $i = 1,2$. 
\begin{itemize}
\item[(i)]
There exists a unique measurable no-signalling correlation 
$p_1\otimes p_2 = (p(\cdot|x_1x_2,y_1y_1))_{x_1x_2,y_1y_2}$ over
$(X_1X_2, Y_1Y_2, A_1A_2, B_1B_2)$, such that 
$$p(L_1\dot{\times}L_2|x_1x_2,y_1y_2) = p_1(L_1|x_1,y_1) p_2(L_2|x_2,y_2),$$
for all $L_i\in \frak{A}_{A_iB_i}$, $x_i\in X_i$, $y_i\in Y_i$, $i = 1,2$.

\item[(ii)] Let ${\rm t}\in \{{\rm loc}, {\rm qs}, {\rm qc}\}$. 
If $p_i\in \cl C_{\rm t}$, $i = 1,2$, then $p_1\otimes p_2\in \cl C_{\rm t}$. 
\end{itemize}
\end{theorem}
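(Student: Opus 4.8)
The plan is to build $p_1\otimes p_2$ pointwise as a product measure, bootstrap the required regularity (measurability and no-signalling) from its behaviour on product rectangles via Sierpi\'nski--Dynkin arguments, and then, for part (ii), transport the witnessing channels through the product constructions of Section \ref{s_prod}.

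\emph{Construction and uniqueness for (i).} For fixed $(x_1,x_2,y_1,y_2)$, let $p(\cdot\,|x_1x_2,y_1y_2)$ be the image of the product probability measure $p_1(\cdot|x_1,y_1)\times p_2(\cdot|x_2,y_2)$ under the canonical homeomorphism $(A_1\times B_1)\times(A_2\times B_2)\to (A_1\times A_2)\times(B_1\times B_2)$ that swaps the inner coordinates; equivalently, this is the information channel $E_1\otimes E_2$ furnished by Corollary \ref{q_prod} applied with $E_i=p_i$ and the scalar Hilbert spaces $H=K=\bb{C}$, read over the reordered quadruple. By construction $p(L_1\dot{\times}L_2|x_1x_2,y_1y_2)=p_1(L_1|x_1,y_1)p_2(L_2|x_2,y_2)$, and the measurability of $(x_1x_2,y_1y_2)\mapsto p(L|x_1x_2,y_1y_2)$ for every Borel $L$ is part of the conclusion of Corollary \ref{q_prod} (equivalently, it follows from the rectangle case by the Dynkin argument in the converse direction of the proof of Theorem \ref{th_Borel}). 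For uniqueness, note that $\{L_1\dot{\times}L_2:L_i\in\frak{A}_{A_iB_i}\}$ is closed under finite intersections, since $(L_1\dot{\times}L_2)\cap(L_1'\dot{\times}L_2')=(L_1\cap L_1')\dot{\times}(L_2\cap L_2')$, and generates the Borel $\sigma$-algebra of $(A_1A_2)\times(B_1B_2)$; hence any two probability measures agreeing on it coincide, forcing any correlation with the stated property to equal $p_1\otimes p_2$.

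\emph{No-signalling for (i).} Fix $x_1,x_2$ and $\alpha\in\frak{A}_{A_1A_2}$; we must show $(y_1,y_2)\mapsto p(\alpha\times(B_1\times B_2)|x_1x_2,y_1y_2)$ is constant. For $\alpha=\alpha_1\times\alpha_2$ the set $\alpha\times(B_1\times B_2)$ corresponds, under the reordering homeomorphism, to $(\alpha_1\times B_1)\dot{\times}(\alpha_2\times B_2)$, so the quantity equals $p_1(\alpha_1\times B_1|x_1,y_1)p_2(\alpha_2\times B_2|x_2,y_2)$, independent of $(y_1,y_2)$ by the no-signalling property of $p_1$ and $p_2$. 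The family of $\alpha\in\frak{A}_{A_1A_2}$ for which $p(\alpha\times(B_1\times B_2)|x_1x_2,\cdot)$ is constant is a Dynkin system (it contains the whole space and is closed under proper differences and countable increasing unions, as $p(\cdot|x_1x_2,y_1y_2)$ is a probability measure) containing the intersection-stable family of rectangles; by the Sierpi\'nski--Dynkin theorem \cite[Theorem 1.13]{kn} it is all of $\frak{A}_{A_1A_2}$. The symmetric argument with the roles of the $A$'s and $B$'s exchanged gives the second no-signalling identity, completing (i).

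\emph{Part (ii).} For ${\rm t}={\rm loc}$, writing $p_i=\sum_j\lambda_j^{(i)}\,q_j^{(i,1)}\otimes q_j^{(i,2)}$, one reorders tensor factors: over the reordered quadruple $(q_j^{(1,1)}\otimes q_j^{(1,2)})\otimes(q_k^{(2,1)}\otimes q_k^{(2,2)})$ becomes $(q_j^{(1,1)}\otimes q_k^{(2,1)})\otimes(q_j^{(1,2)}\otimes q_k^{(2,2)})$, so $p_1\otimes p_2=\sum_{j,k}\lambda_j^{(1)}\lambda_k^{(2)}(q_j^{(1,1)}\otimes q_k^{(2,1)})\otimes(q_j^{(1,2)}\otimes q_k^{(2,2)})$ is local, the weights being nonnegative with sum $1$ and the products of classical channels being channels by Corollary \ref{q_prod}. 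For ${\rm t}={\rm qs}$, take realisations $p_i=p_{E_i\otimes F_i,\xi_i}$ with $E_i:X_i\to(A_i;H_i)$, $F_i:Y_i\to(B_i;K_i)$, $\xi_i\in H_i\otimes K_i$; form $E=E_1\otimes E_2:X_1X_2\to(A_1A_2;H_1\otimes H_2)$ and $F=F_1\otimes F_2:Y_1Y_2\to(B_1B_2;K_1\otimes K_2)$ via Corollary \ref{q_prod}, and let $\xi$ be the image of $\xi_1\otimes\xi_2$ under the reordering unitary $W:(H_1\otimes K_1)\otimes(H_2\otimes K_2)\to(H_1\otimes H_2)\otimes(K_1\otimes K_2)$. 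Evaluating on a set $(\alpha_1\times\beta_1)\dot{\times}(\alpha_2\times\beta_2)$, two applications of (\ref{eq_EotFxy}) give $(E\otimes F)(\cdot|x_1x_2,y_1y_2)=(E_1(\alpha_1|x_1)\otimes E_2(\alpha_2|x_2))\otimes(F_1(\beta_1|y_1)\otimes F_2(\beta_2|y_2))$, and conjugating by $W$ shows $p_{E\otimes F,\xi}$ equals $p_1(\alpha_1\times\beta_1|x_1,y_1)p_2(\alpha_2\times\beta_2|x_2,y_2)$; agreeing on this generating $\pi$-system, $p_1\otimes p_2=p_{E\otimes F,\xi}\in\cl{C}_{\rm qs}$ by the uniqueness in (i). For ${\rm t}={\rm qc}$, take $p_i=p_{E_i\cdot F_i,\xi_i}$ with $E_i:X_i\to(A_i;H_i)$, $F_i:Y_i\to(B_i;H_i)$ having commuting ranges, $\xi_i\in H_i$; set $H=H_1\otimes H_2$, $\tilde E_1(\cdot|x_1)=E_1(\cdot|x_1)\otimes I$, $\tilde F_1(\cdot|y_1)=F_1(\cdot|y_1)\otimes I$, $\tilde E_2(\cdot|x_2)=I\otimes E_2(\cdot|x_2)$, $\tilde F_2(\cdot|y_2)=I\otimes F_2(\cdot|y_2)$, and use Theorem \ref{cqc_prod} to form $E=\tilde E_1\cdot\tilde E_2:X_1X_2\to(A_1A_2;H)$ and $F=\tilde F_1\cdot\tilde F_2:Y_1Y_2\to(B_1B_2;H)$. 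On rectangles $E(\alpha_1\times\alpha_2|x_1x_2)=E_1(\alpha_1|x_1)\otimes E_2(\alpha_2|x_2)$ and likewise for $F$, so these commute for all rectangles; since commutation with a fixed operator is weak*-closed and $E(\cdot|x_1x_2)$ is weak*-countably additive, a Dynkin argument upgrades this to commutation of $E(\gamma|x_1x_2)$ with $F(\gamma'|y_1y_2)$ for all Borel $\gamma,\gamma'$. Taking $\xi=\xi_1\otimes\xi_2$ and evaluating $p_{E\cdot F,\xi}$ on $(\alpha_1\times\beta_1)\dot{\times}(\alpha_2\times\beta_2)$ via (\ref{eq_EcdotF}), the commutation lets us split $\langle(E_1(\alpha_1|x_1)\otimes E_2(\alpha_2|x_2))(F_1(\beta_1|y_1)\otimes F_2(\beta_2|y_2))(\xi_1\otimes\xi_2),\xi_1\otimes\xi_2\rangle$ into a product, recovering $p_1(\alpha_1\times\beta_1|x_1,y_1)p_2(\alpha_2\times\beta_2|x_2,y_2)$; uniqueness from (i) then yields $p_1\otimes p_2=p_{E\cdot F,\xi}\in\cl{C}_{\rm qc}$. (That $p_{E\otimes F,\xi}$ and $p_{E\cdot F,\xi}$ are themselves information channels is automatic, being scalar compressions of operator-valued channels.)

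\emph{Main obstacle.} The conceptual content is light once the product constructions of Section \ref{s_prod} are in hand; the points that need care are all of ``extend from rectangles'' type --- the global measurability of $(x,y)\mapsto p(L|x,y)$, the no-signalling identities, and (in the commuting case) the commutation of the product channels --- each dispatched by a Sierpi\'nski--Dynkin argument exploiting that the relevant set functions are probability measures, respectively that commutants are weak*-closed, together with the bookkeeping of tensor-leg reorderings in part (ii).
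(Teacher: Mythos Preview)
Your proof is correct and follows essentially the same route as the paper: construct $p_1\otimes p_2$ from the scalar case of the product channel results in Section~\ref{s_prod}, verify no-signalling, and for (ii) tensor the witnessing channels and check the product realises $p_1\otimes p_2$. The only differences are cosmetic: where the paper verifies no-signalling via Fubini and extends commutation via the correspondence with completely positive maps (Lemma~\ref{cqpm}), you instead run Sierpi\'nski--Dynkin arguments directly on the measures, and you are slightly more explicit about the tensor-leg reordering in the ${\rm qs}$ case (which the paper dismisses as ``similar'').
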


\begin{proof}
(i)
By Theorem \ref{cqc_prod} (applied for the case where $H = \bb{C}$), 
there exists an information channel 
$$p_1\cdot p_2 : X_1Y_1\times X_2Y_2\to A_1B_1\times A_2B_2,$$
such that 
$$(p_1\cdot p_2)(L_1\times L_2|x_1y_1,x_2y_2) = p_1(L_1|x_1,y_1) p_2(L_2|x_2,y_2),$$
for all $L_i\in \frak{A}_{A_iB_i}$, $x_i\in X_i, y_i\in Y_i$, $i = 1,2$. 
Letting $\frak{f} : A_1B_1\times A_2B_2\to A_1A_2\times B_1B_2$ be the 
flip map, given by 
$\frak{f}(a_1b_1,a_2b_2) = (a_1a_2,b_1b_2)$, define 
$$(p_1\otimes p_2)\left(L | x_1x_2,y_1y_2\right) = 
(p_1\cdot p_2)\left(\frak{f}^{-1}(L)|x_1y_1,x_2y_2\right), 
\ \ L\in \frak{A}_{A_1A_2\times B_1B_2}.$$
It is clear that $p_1\otimes p_2$ is a well-defined information channel from
$X_1X_2\times Y_1Y_2$ to $A_1A_2\times B_1B_2$; we check that it is 
a no-signalling correlation. 
Fix $x_i\in X_i, y_i,y_i'\in Y_i$, $i = 1,2$, 
$L\in \frak{A}_{A_1A_2}$. 
Write $p_i(\alpha_i|x_i)$ for the well-defined marginal value of $p_i$; 
here $\alpha_i\in \frak{A}_{A_i}$ and $x_i\in X_i$, $i = 1,2$. 
Letting $L_{a_2} = \{a_1\in A_1 : (a_1,a_2)\in L\}$, we have, by Fubini's Theorem, 
\begin{eqnarray*}
& & 
(p_1\otimes p_2)\left(L\times B_1B_2 | x_1x_2,y_1y_2\right)\\
& = & 
\int p_1(L_{a_2}\times B_1 | x_1,y_1) d p_2(a_2,b_2|x_2,y_2)\\
& = & 
\int p_1(L_{a_2} | x_1) d p_2(a_2,b_2|x_2,y_2)
= 
\int p_1(L_{a_2} | x_1) d p_2(a_2|x_2)\\
& =  & 
(p_1\otimes p_2)\left(L\times B_1B_2 | x_1x_2,y_1'y_2'\right).
\end{eqnarray*}
This shows that $p_1\otimes p_2$ is no-signalling with respect to $Y_1Y_2$; 
by symmetry, $p_1 \otimes p_2$ is no-signalling.

(ii) 
Let $H_i$ be a Hilbert space, $\xi_i\in H_i$ be a 
unit vector, and $E_i:X_i\rightarrow(A_i;H_i)$ and 
$F_i : Y_i\rightarrow(B_i;H_i)$ be operator-valued channels 
with commuting ranges, such that $p_i = p_{E_i\cdot F_i,\xi_i}$,
$i=1,2$. 
Let $E = E_1\otimes E_2$ and $F = F_1\otimes F_2$ be the 
operator-valued channels, arising from Corollary \ref{q_prod}; thus, 
$$ E : X_1X_2\mapsto (A_1A_2; H_1\otimes H_2) 
\ \mbox{ and } \ 
F : Y_1Y_2\mapsto (B_1B_2; H_1\otimes H_2).$$  
We check that $E$ and $F$ have commuting ranges;
to this end, let 
$\alpha_i\in\frak{A}_{A_i}$, $\beta_i\in\frak{A}_{B_i}$, 
$x_i\in X_i$ and $y_i\in Y_i$; then 
\begin{eqnarray*}
& & 
E(\alpha_1\times\alpha_2|x_1,x_2)F(\beta_1\times \beta_2|y_1,y_2)\\
& = & 
\Big(E_1(\alpha_1|x_1)\otimes E_2(\alpha_2|x_2)\Big)\Big(F_1(\beta_1|y_1)\otimes F_2(\beta_2|y_2)\Big)\\
& = & 
E_1(\alpha_1|x_1)F_1(\beta_1|y_1)\otimes E_2(\alpha_2|x_2)F_2(\beta_2|y_2)\\
& = & 
F_1(\beta_1|y_1)E_1(\alpha_1|x_1)\otimes F_2(\beta_2|y_2)E_2(\alpha_2|x_2)\\
& = & 
F(\beta_1\times \beta_2|y_1,y_2)E(\alpha_1\times\alpha_2|x_1,x_2).
\end{eqnarray*}
Let $\phi_{E_{x_1,x_2}} : C(A_1\times A_2)\to \cl B(H)$ and 
$\phi_{F_{y_1,y_2}} : C(B_1\times B_2)\to \cl B(H)$ be the unital completely positive 
maps, arising from the quantum probability measures $E(\cdot|x_1,x_2)$
and $F(\cdot|y_1,y_2)$ via Theorem \ref{qpm}.
Using an approximation argument as in the proof of Theorem \ref{qpm}, 
we conclude that 
$$\phi_{E_{x_1,x_2}}(f_1\otimes f_2)\phi_{F_{y_1,y_2}}(g_1\otimes g_2)=\phi_{F_{y_1,y_2}}(g_1\otimes g_2)\phi_{E_{x_1,x_2}}(f_1\otimes f_2),$$
for all $f_i\in C(A_i)$, $g_i\in C(B_i)$, $i = 1,2$.
It follows that $\phi_{E_{x_1,x_2}}$ and $\phi_{F_{y_1,y_2}}$ have commuting ranges, 
and hence, by Lemma \ref{cqpm}, 
$E(\alpha|x_1,x_2)F(\beta|y_1,y_2) = F(\beta|y_1,y_2)E(\alpha|x_1,x_2)$
for all $\alpha\in \frak{A}_{A_1}\otimes\frak{A}_{A_2}$ and all 
$\beta\in \frak{A}_{B_1}\otimes\frak{A}_{B_2}$.

Let $E\cdot F : X_1 X_2\times Y_1Y_2 \mapsto (A_1A_1\times B_1B_2; H_1\otimes H_2)$
be the operator-valued channel, arising from $E$ and $F$ via Theorem \ref{cqc_prod}. 
We have 
$$(p_1\otimes p_2)(\alpha\times \beta|x_1x_2,y_1y_2) = 
\langle E(\alpha|x_1,x_2)F(\beta|y_1,y_2)\xi_1\otimes\xi_2,\xi_1\otimes\xi_2\rangle,$$
whenever $\alpha$ and $\beta$ are Borel sets of the form 
$\alpha = \alpha_1\times\alpha_2$ and $\beta = \beta_1\times\beta_2$.
Thus, $p_1\otimes p_2 = p_{E\cdot F,\xi_1\otimes\xi_2}$, and hence 
$p_1\otimes p_2\in\cl{C}_{\rm qc}$.


The case ${\rm t} = {\rm qs}$ is similar to the case ${\rm t} = {\rm qc}$. 
Let ${\rm t} = {\rm loc}$, and assume that 
$p_1=\sum_{i=1}^{n_1}\lambda_ip_i^{(1)}\otimes p_i^{(2)}$ and $p_2=\sum_{j=1}^{n_2}\mu_jq_j^{(1)}\otimes q_j^{(2)}$ with $\lambda_i,\mu_j\in[0,1]$, $(i,j)\in\{1,\dots, n_1\}\times\{1,\dots, n_2\}$ and $\sum_{i=1}^{n_1}\lambda_i=\sum_{j=1}^{n_2}\mu_j=1$. It follows that $p_1\otimes p_2\in\cl{C}_{\rm loc}$ since
\begin{align*}
    p_1\otimes p_2 & = \Big(\sum_{i=1}^{n_1}\lambda_ip_i^{(1)}\otimes q_i^{(2)}\Big)\otimes\Big(\sum_{j=1}^{n_2}\mu_jq_j^{(1)}\otimes q_j^{(2)}\Big)\\
    & = \sum_{i=1}^{n_1}\sum_{j=1}^{n_2}\lambda_i\mu_j\Big(p_i^{(1)}\otimes p_i^{(2)}\otimes q_j^{(1)}\otimes q_j^{(2)}\Big),
\end{align*}
and $\sum_{i=1}^{n_1}\sum_{j=1}^{n_2}\lambda_i\mu_j=1$.
\end{proof}

\begin{remark}\label{r_tenpfomn}
\rm 
In the notation of Theorem \ref{l_prodnscor}, let 
$\mu_i$ (resp. $\nu_i$) be a Borel probability measure on 
$X_i$ (resp. $Y_i$), $i = 1,2$. 
It follows from Remark \ref{r_muprod} that 
if ${\rm t}\in \{{\rm loc}, {\rm qs}, {\rm qc}, {\rm ns}\}$ and 
$p_i\in \cl C_{\rm t}(\mu_i,\nu_i)$, $i = 1,2$, then $p_1\otimes p_2\in \cl C_{\rm t}(\mu_1\times\mu_2,\nu_1\times\nu_2)$.
\end{remark}


\section{Values of measurable games}\label{s_val}

In this section, we introduce measurable non-local games, 
and define two kinds of values thereof, which are 
measurable counterparts of the one-shot and the asymptotic values of a finite 
non-local game.
We show how the latter values for a finite game can be viewed as special cases 
of the former values of a measurable game, canonically arising from the given 
finite game. We start with describing the general setup, which we use
to host measurable game values.


\subsection{General setup}\label{ss_gens}

Let $X$ and $A$ be compact Hausdorff spaces.  
A \emph{measurable probabilistic hypergraph} is a pair $(\kappa,\pi)$, where 
$\kappa\in \frak{A}_{X}\otimes\frak{A}_A$ and $\pi\in P(X)$. 
A \emph{resource} over the pair $(X,A)$ is a 
convex set of information channels $p : X\to A$. 

Let $T_X : X\to X$ and $T_A : A\to A$ be homeomorphisms. 
Given $p : X\to A$, let $p^{(T_X,T_A)} : X\to A$ be the channel, given by 
$$p^{(T_X,T_A)}(\alpha | x) := p\left(T_A^{-1}(\alpha) | T_X^{-1}(x)\right), \ \ \ x\in X, \alpha\in \frak{A}_A.$$
We call a resource $\cl R$ over $(X,A)$ \emph{$(T_X,T_A)$-invariant} if 
$p^{(T_X,T_A)}\in \cl R$ whenever $p\in \cl R$.

Let 
$(\kappa,\pi)$ be a probabilistic hypergraph, where $\pi$
is $T_X$-invariant, and 
$\cl R$ be a $(T_X,T_A)$-invariant resource over the pair $(X,A)$.  
Let $p : X\mapsto A$ be an information channel.
For a susbet $M\subseteq X\times A$, let 
$M_x = \{a\in A : (x,a)\in M\}$ is the $x$-section of $M$ (note that $M_x\in \frak{A}_A$ for every $x\in X$), and 
$h_M : X\to \bb{R}$ be the function,
given by $h_M(x) = p(M_x | x)$.
Further, let 
$$\cl M = \{M\in \frak{A}_{X}\otimes\frak{A}_A : 
\mbox{ the function } h_M
\mbox{ is }\mbox{Borel}\}.
$$
It is trivial to see that, if
$\cl R$ is the set of all measurable rectangles, 
that is, the sets of the form $\chi\times\alpha$, where
$\chi\in \frak{A}_{X}$
and $\alpha\in \frak{A}_{A}$, then $\cl R\subseteq \cl M$. 
On the other hand, a direct verification shows that the set 
$\cl M$ is closed under increasing countable unions and 
set differences of the form $M\setminus N$, where $N\subseteq M$. 
By the Sierpi\'nski-Dynkin Theorem 
(see e.g. \cite[Theorem 1.13]{kn}), 
$\cl M = \frak{A}_{X}\otimes\frak{A}_A$. 
Let $\pi\otimes p$ be the probability measure on 
$(X\times A, \frak{A}_{X}\otimes\frak{A}_A)$, given by 
\begin{equation}\label{eq_piotimep}
(\pi\otimes p)(M) = \int_X p(M_x | x)d\pi(x), \ \ \ 
M\in \frak{A}_{X}\otimes\frak{A}_A.
\end{equation}
The \emph{$\cl R$-value} $\omega_{\cl R}(\kappa,\pi)$ of $(\kappa,\pi)$ is defined by letting
$$\omega_{\cl R}(\kappa,\pi) = \sup_{p\in \cl R} (\pi\otimes p)(\kappa).$$
Given $n\in \bb{N}$, let 
$\kappa^{(n)} = \cap_{k=0}^{n-1} (T_X\times T_A)^{-k}(\kappa)$,
and define \emph{inner $\cl R$-value} $\tilde{\omega}_{\cl R}(\kappa,\pi)$ 
of $(\kappa,\pi)$ by letting
\begin{equation}\label{eq_Rvalue}
\tilde{\omega}_{\cl R}(\kappa,\pi) = \limsup_{n\in \bb{N}} 
\omega_{\cl R}(\kappa^{(n)},\pi)^{1/n}.
\end{equation}

\begin{remark}\label{r_alme}
\rm 
Let $\pi$ be a Borel probability measure on $X$, and 
assume that $p : X\mapsto A$ is a $\pi$-information channel. 
Write $(p_x)_{x\in X}$ for the family of Borel measures on $A$, 
canonically associated with $p$. If $p' : X\mapsto A$ is 
another $\pi$-information channel, and if $p_x = p'_x$ for $\pi$-almost 
all $x$, then equation (\ref{eq_piotimep}) shows that 
$\pi\otimes p = \pi\otimes p'$; in other words, 
the measure $\pi\otimes p$ depends only
on the 
equivalence class of $p$ with respect to the measure $\pi$. 
\end{remark}

Remark \ref{r_alme} leads to a more 
general version of values of the probabilistic hypergraph 
$(\kappa,\pi)$ that will be useful in the sequel. 
Suppose that $p$ is a $\pi$-information channel.
Similar arguments to the ones in the paragraph before 
Remark \ref{r_alme} 
(considering the set 
$$\cl M_{\pi} = \{M\in \frak{A}_{X}\otimes\frak{A}_A : 
\mbox{ the function } h_M
\mbox{ is }\pi\mbox{-measurable}\}
$$
in the place of the set $\cl M$), 
show that, for every $M\in \frak{A}_{X}\otimes\frak{A}_A$, 
the integral (\ref{eq_piotimep}) continues to be well-defined, 
yielding a Borel measure, still denoted by $p\otimes\pi$. 

We consider a special case of particular interest. 
Let $\bb{X}$ and $\bb{A}$ be finite sets, and 
$X = \prod_{k\in \bb{Z}} \bb{X}$ and $A = \prod_{k\in \bb{Z}} \bb{A}$ 
be the Cantor spaces over $\bb{X}$ and $\bb{A}$, respectively. 
We let $X_n = \prod_{i = -n}^n \bb{X}$ and $A_n = \prod_{i = -n}^n \bb{A}$. 
Further, let $T_X : X\to X$ and $T_A : A\to A$ be the backward 
shifts.
Let $\bb{H}\subseteq \bb{X}\times\bb{A}$, considered as a hypergraph with vertex set $\bb{X}$ 
and  edges $\bb{H}_x = \{a\in \bb{A} : (x,a)\in \bb{H}\}$, indexed by $\bb{X}$. 
We consider $\bb{X}$ as embedded into $X$ as the projection onto its $0$-th coordinate.

Let 
$\kappa_{\bb{H}} = \{((x_k)_k,(a_k)_k) : (x_0,a_0)\in \bb{H}\}$; thus, 
$$\kappa_{\bb{H}}^{(n)} = \{((x_k)_k,(a_k)_k) : (x_i,a_i)\in \bb{H}, i = 0,1,\dots,n-1\}.$$
So, in this case, $\kappa_{\bb{H}}^{(n)}$ is the hypergraph 
product of $n$ copies of $\bb{H}$.
Let $\pi_0$ be a probability measure on $\bb{X}$. 
The pair $(\bb{H},\pi_0)$ gives rise to a probabilistic quantum hypergraph
in the sense of \cite[Subsection 4.3]{cltt}, by associating to $\bb{H}$
the map $\nph_{\bb{H}}$ from the projection lattice 
$\cl P_{\bb{X}}$ of the diagonal matrix algebra 
$\cl D_{\bb{X}}$ over $\bb{X}$ to the analogous projection lattice $\cl P_{\bb{A}}$, given by
$\nph_{\bb{H}}(\epsilon_{x,x}) = \sum_{a\in \bb{H}_x} \epsilon_{a,a}$
(here $(\epsilon_{x,x})_{x\in \bb{X}}$ (resp. $(\epsilon_{a,a})_{a\in \bb{A}}$)
is the set of diagonal matrix units over $\bb{X}$ (resp. $\bb{A}$)). 
One can easily see that, letting 
$\pi = \otimes_{k\in \bb{Z}} \pi_0$ be the product measure on $X$, 
the parameter $\omega_{\cl R}(\kappa_{\bb{H}},\pi)$ coincides 
with the value of the probabilistic quantum hypergraph 
$(\nph_{\bb{H}},\pi_0)$ defined in \cite{cltt}.


\subsection{Definitions and properties of game values}\label{s_defgameval}

We now specialise the concepts from Subsection \ref{s_ovic} to the case where the hypergraphs are 
non-local games. 
Let $X$, $Y$, $A$ and $B$ be compact Hausdorff spaces and $T_X$, $T_Y$, $T_A$ and $T_B$ be 
homeomorphisms of $X$, $Y$, $A$ and $B$, respectively. 
We note that the resources $\cl C_{\rm t}$ and $\cl C_{\rm t}^{\rm pr}$, 
for ${\rm t}\in \{{\rm loc}, {\rm qs}, {\rm qc}, {\rm ns}\}$ are 
$(T_X\times T_Y, T_A\times T_B)$-invariant. 
A \emph{measurable game} over $(X,Y,A,B)$ is a pair $(\kappa, \pi)$, where 
$\kappa\in \frak{A}_X\otimes \frak{A}_Y\otimes \frak{A}_A\otimes \frak{A}_B$ 
and $\pi$ is a 
$T_X\times T_Y$-invariant probability measure on $X\times Y$.
Given a measurable game $(\kappa,\pi)$ over $(X,Y,A,B)$ and 
${\rm t}\in \{{\rm loc}, {\rm qs}, {\rm qc}, {\rm ns}\}$, 
one can define several types of ${\rm t}$-values of $(\kappa,\pi)$:
the expressions 
$$\omega_{\rm t}(\kappa,\pi) = \omega_{\cl C_{\rm t}}(\kappa,\pi) 
\ \mbox{ and } \  
\tilde{\omega}_{\rm t}(\kappa,\pi) = \tilde{\omega}_{\cl C_{\rm t}}(\kappa,\pi)$$
are obtained using correlations 
of type ${\rm t}$ as the corresponding resource, while
the expressions 
$$\omega_{\rm t,sep}(\kappa,\pi) 
= \omega_{\cl C_{\rm t,sep}}(\kappa,\pi) 
\ \mbox{ and } \  
\tilde{\omega}_{\rm t,sep}(\kappa,\pi) 
= \tilde{\omega}_{\cl C_{\rm t,sep}}(\kappa,\pi)$$
are obtained using correlations 
of type ${\rm t}$ that admit realisations on 
separable Hilbert spaces.
Further, if 
$\pi = \pi_X\times\pi_Y$ is a product probability measure
(here $\pi_X$ and $\pi_Y$ are Borel probability measures on 
$X$ and $Y$, respectively), 
and ${\rm t}\in \{{\rm qs}, {\rm qc}\}$, one may consider
the values
$$\omega_{\rm t}(\kappa,\pi_X,\pi_Y) 
= \omega_{\cl C_{\rm t}(\pi_X,\pi_Y)}(\kappa,\pi) 
\ \mbox{ and } \  
\tilde{\omega}_{\rm t}(\kappa,\pi_X,\pi_Y) = 
\tilde{\omega}_{\cl C_{\rm t}(\pi_X,\pi_Y)}(\kappa,\pi),$$
and
$$\omega_{\rm t}^{\rm pr}(\kappa,\pi_X,\pi_Y) 
= \omega_{\cl C_{\rm t}^{\rm pr}(\pi_X,\pi_Y)}(\kappa,\pi) 
\ \mbox{ and } \  
\tilde{\omega}_{\rm t}^{\rm pr}(\kappa,\pi_X,\pi_Y) = 
\tilde{\omega}_{\cl C_{\rm t}^{\rm pr}(\pi_X,\pi_Y)}(\kappa,\pi).$$

We recall that a \emph{non-local game} over the quadruple 
$(\bb{X},\bb{Y},\bb{A},\bb{B})$ of finite sets is a pair
$(\bb{G},\pi_0)$, where $\pi_0$ is a 
probability measure on $\bb{X}\times \bb{Y}$ and 
$\bb{G}\subseteq \bb{X}\times \bb{Y}\times \bb{A}\times \bb{B}$, 
the latter identified with the set of 
admissible quadruples of inputs from $\bb{X}\times \bb{Y}$ and 
outputs from $\bb{A}\times \bb{B}$ in a cooperative two-player game 
(see e.g. \cite{jnvwy} or \cite{lmprsstw}).
Recall (see e.g. \cite{cltt}) that, given ${\rm t}\in \{{\rm loc}, {\rm qs}, {\rm qc}, {\rm ns}\}$, 
the \emph{${\rm t}$-value} of the game $(\bb{G},\pi_0)$ 
is the parameter 
\begin{equation}\label{eq_cprsame}
\omega_{\rm t}(\bb{G},\pi_0) = \sup_{p\in \cl C_{\rm t}} \sum_{(x,y,a,b)\in \bb{G}} \pi_0(x,y) p(a,b|x,y).
\end{equation}
Further, letting $\bb{G}^n$ be the product of $n$ copies of $\bb{G}$, that is, 
$$\bb{G}^n = \{((x_i)_{i=1}^n,(y_i)_{i=1}^n,(a_i)_{i=1}^n,(b_i)_{i=1}^n) : (x_i,y_i,a_i,b_i)\in \bb{G} \mbox{ for all } i\in [n]\},$$
the \emph{asymptotic ${\rm t}$-value} of $(\bb{G},\pi_0)$ is given by 
$$\tilde{\omega}_{\rm t}(\bb{G},\pi_0) = \lim_{n\to \infty} 
\omega_{\rm t}(\bb{G}^n,\pi_0^{\otimes n})^{1/n}$$
(supermultiplicativity and Fekete's Lemma imply that the limit on the 
right hand side exists).
One of the fundamental features of finite non-local game 
values is the value defined in (\ref{eq_cprsame}), for ${\rm t} = {\rm qs}$
or ${\rm t} = {\rm qc}$, can equivalently be defined by optimising over the sets 
$\cl C_{\rm t}^{\rm pr}$, that is, utilising PVMS's, as opposed to POVM's, 
in the representations of the corresponding resources
(this is referred to as \emph{disambiguation}). 
We next show that the same holds 
true for measurable non-local games, 
provided the corresponding probability distributions are 
non-mixed product distributions.

\begin{proposition}\label{c_disamb}
Let $X$, $Y$, $A$ and $B$ be compact Hausdorff spaces, equipped 
with homeomorphisms $T_X$, $T_Y$, $T_A$ and $T_B$, respectively, 
let $\pi_X$ (resp. $\pi_Y$) be a $T_X$-invariant (resp. 
$T_Y$-invariant) Borel probability measure on $X$ (resp. $Y$), and set $\pi = \pi_X\otimes \pi_Y$. 
Let $(\kappa, \pi_X\otimes \pi_Y)$ be a measurable game over $(X,Y,A,B)$. 
The following hold true:
\begin{itemize}
\item[(i)]
$\omega_{\rm qs}^{\rm pr}(\kappa,\pi_X,\pi_Y) = 
\omega_{\rm qs}(\kappa,\pi_X,\pi_Y)$ and
\item[(ii)]
$\tilde{\omega}_{\rm qs}^{\rm pr}(\kappa,\pi_X,\pi_Y) = \tilde{\omega}_{\rm qs}(\kappa,\pi_X,\pi_Y)$
\end{itemize}
Further, if $\pi_X$ and $\pi_Y$ are either non-atomic or totally atomic then 
\begin{itemize}
\item[(iii)]
$\omega_{\rm qc}^{\rm pr}(\kappa,\pi_X,\pi_Y) = 
\omega_{\rm qc}(\kappa,\pi_X,\pi_Y)$ and
\item[(iv)]
$\tilde{\omega}_{\rm qc}^{\rm pr}(\kappa,\pi_X,\pi_Y) = \tilde{\omega}_{\rm qc}(\kappa,\pi_X,\pi_Y)$.
\end{itemize}
\end{proposition}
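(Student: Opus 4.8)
The plan is to prove (i) and (iii) directly and to deduce (ii) and (iv) from them. For the deduction I would note that, for every $n\in\bb{N}$, the pair $(\kappa^{(n)},\pi)$ is again a measurable game over $(X,Y,A,B)$, since $\kappa^{(n)}=\cap_{k=0}^{n-1}(T_X\times T_Y\times T_A\times T_B)^{-k}(\kappa)$ is Borel and $\pi=\pi_X\otimes\pi_Y$ is $T_X\times T_Y$-invariant; applying (i) and (iii) to $(\kappa^{(n)},\pi)$ and passing to $\limsup_n(\cdot)^{1/n}$ in \eqref{eq_Rvalue} then yields (ii) and (iv). In each of (i) and (iii) the inequality $\omega_{\rm t}^{\rm pr}\le\omega_{\rm t}$ is immediate from $\cl C_{\rm t}^{\rm pr}(\pi_X,\pi_Y)\subseteq\cl C_{\rm t}(\pi_X,\pi_Y)$, so only the reverse inequality has content; and, since a correlation of the form $p_{E\cdot F,\xi}$ (resp. $p_{E\otimes F,\xi}$) is automatically a measurable no-signalling correlation, Remark \ref{r_alme} reduces the task to the following: given $p\in\cl C_{\rm t}(\pi_X,\pi_Y)$, produce $p'\in\cl C_{\rm t}^{\rm pr}(\pi_X,\pi_Y)$ with $p'=p$ $\pi$-almost everywhere.

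For (iii) I would start from a representation $p=p_{E\cdot F,\xi}$ with $E:X\to(A;H)$, $F:Y\to(B;H)$ operator-valued $\pi_X$- and $\pi_Y$-information channels with commuting ranges and $\xi\in H$ a unit vector, and pass, via Theorem \ref{th_Borel} and Remark \ref{r_measfamm}, to the $\pi_X$-measurable family $(\phi_{E_x})_{x\in X}$ of unital completely positive maps $C(A)\to\cl B(H)$ and the $\pi_Y$-measurable family $(\phi_{F_y})_{y\in Y}$ of unital completely positive maps $C(B)\to\cl B(H)$. As in the proof of Theorem \ref{qpmprod} (two applications of Lemma \ref{cqpm}), $\phi_{E_x}$ and $\phi_{F_y}$ have commuting ranges for all $(x,y)$. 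Since $C(A)$ and $C(B)$ are separable and $\pi_X,\pi_Y$ are non-atomic or totally atomic, Theorem \ref{disambiguation} then supplies a Hilbert space $K$, an isometry $V:H\to K$, a $\pi_X$-measurable family $(\pi_x)_{x\in X}$ of $*$-representations of $C(A)$ on $K$ and a $\pi_Y$-measurable family $(\rho_y)_{y\in Y}$ of $*$-representations of $C(B)$ on $K$ with commuting ranges, such that $(\phi_{E_x}\cdot\phi_{F_y})(w)=V^*(\pi_x\cdot\rho_y)(w)V$ for all $w\in C(A)\otimes_{\max}C(B)$ and all $(x,y)$. Translating $(\pi_x)$ and $(\rho_y)$ back through the projection-valued, $\mu$-measurable form of Theorem \ref{th_Borel} (Remark \ref{r_measfamm}) and Lemma \ref{cqpm} gives projection-valued $\pi_X$- and $\pi_Y$-information channels $P:X\to(A;K)$, $Q:Y\to(B;K)$ with commuting ranges; I would then set $\xi'=V\xi$ and $p'=p_{P\cdot Q,\xi'}$, which is well-defined and lies in $\cl C_{\rm qc}^{\rm pr}(\pi_X,\pi_Y)$ by Theorem \ref{cqc_prod} together with Remark \ref{r_muprod}(i). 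Extending the dilation identity to indicator functions via Lemma \ref{l_ext} and invoking the Sierpi\'nski-Dynkin theorem to pass from measurable rectangles to $\frak{A}_A\otimes\frak{A}_B$, one finds $(E\cdot F)(\gamma|x,y)=V^*(P\cdot Q)(\gamma|x,y)V$ for all $\gamma$ and all $(x,y)$, hence $p'=p$ at every point, which proves (iii).

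For (i) no restriction on $\pi_X,\pi_Y$ is available, so in place of Theorem \ref{disambiguation} I would use the almost-everywhere dilation of Theorem \ref{l_measfree_comm}, which however requires separable Hilbert spaces. Starting from $p=p_{E\otimes F,\xi}$, I would first reduce to the separable case: writing $\xi=\sum_i c_i\, e_i\otimes f_i$ in Schmidt form and letting $H_0,K_0$ be the (separable) closed linear spans of $\{e_i\}_i$ and $\{f_i\}_i$, with $P_0,Q_0$ the orthogonal projections onto them, the compressions $E'_x(\cdot)=P_0E_x(\cdot)P_0|_{H_0}$ and $F'_y(\cdot)=Q_0F_y(\cdot)Q_0|_{K_0}$ are again operator-valued $\pi_X$- and $\pi_Y$-information channels (a compression of a quantum probability measure is a quantum probability measure, and weak $\pi_X$-/$\pi_Y$-measurability is inherited), $\xi\in H_0\otimes K_0$, and $p_{E'\otimes F',\xi}=p$ because the correlation involves only matrix coefficients of $E_x,F_y$ against the $e_i,f_i$. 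With $H,K$ now separable, I would apply Theorem \ref{l_measfree_comm} to the unital completely positive maps $C(A)\to L^\infty(X,\pi_X,\cl B(H))$ and $C(B)\to L^\infty(Y,\pi_Y,\cl B(K))$ attached to $E$ and $F$, obtaining separable Hilbert spaces $K_E,K_F$, isometries $V_E:H\to K_E$, $V_F:K\to K_F$, and $\pi_X$-/$\pi_Y$-measurable families $(\pi^E_x)$, $(\pi^F_y)$ of $*$-representations with $\phi_{E_x}=V_E^*\pi^E_x(\cdot)V_E$ off a $\pi_X$-null set $N_X$ and $\phi_{F_y}=V_F^*\pi^F_y(\cdot)V_F$ off a $\pi_Y$-null set $N_Y$ (single null sets, using separability of $C(A),C(B)$ to merge the countably many function-dependent exceptional sets). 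Then, with $P:X\to(A;K_E)$, $P':Y\to(B;K_F)$ the projection-valued channels of $(\pi^E_x),(\pi^F_y)$, $\xi'=(V_E\otimes V_F)\xi$ a unit vector, and $p'=p_{P\otimes P',\xi'}\in\cl C_{\rm qs}^{\rm pr}(\pi_X,\pi_Y)$ (Corollary \ref{q_prod} and Remark \ref{r_muprod}(ii)), the same Lemma \ref{l_ext}/Sierpi\'nski-Dynkin argument shows $p'=p$ at every $(x,y)$ outside the $\pi$-null set $(N_X\times Y)\cup(X\times N_Y)$, which proves (i).

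I expect the main obstacle to be the quantum spatial case (i) in the stated generality, for three interlocking reasons: one must (a) make the Schmidt-subspace reduction to separable Hilbert spaces work for operator-valued — rather than projection-valued — channels; (b) use separability of $C(A)$ and $C(B)$ to collapse the function-dependent exceptional sets produced by Theorem \ref{l_measfree_comm} into a single null set; and (c) upgrade the equality of correlations from measurable rectangles to the full product $\sigma$-algebra, which affects neither $\omega_{\rm qs}$ nor $\omega_{\rm qs}^{\rm pr}$ by Remark \ref{r_alme}. The quantum commuting case (iii) is comparatively routine: its only delicate point — transferring the commuting-ranges condition between quantum probability measures and their associated unital completely positive maps — is already handled by Lemma \ref{cqpm}, exactly as in the proof of Theorem \ref{qpmprod}.
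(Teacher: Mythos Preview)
Your proposal is correct and follows essentially the same route as the paper: for (iii) you invoke the joint dilation Theorem~\ref{disambiguation} (the paper uses its consequence, Corollary~\ref{c_disformu}), and for (i) you invoke the almost-everywhere dilation Theorem~\ref{l_measfree_comm}, then conclude via Remark~\ref{r_alme}. Your explicit Schmidt-decomposition reduction to separable $H,K$ before applying Theorem~\ref{l_measfree_comm}, and your merging of the function-dependent null sets via separability of $C(A),C(B)$, supply details that the paper's proof leaves implicit.
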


\begin{proof}
(iii), (iv)
Assume that the measures $\pi_X$ and $\pi_Y$ are both non-atomic. 
Since, trivially, $\cl C_{\rm qc}^{\rm pr} \subseteq \cl C_{\rm qc}$, 
we have that 
$\tilde{\omega}_{\rm qc}^{\rm pr}(\kappa,\pi) \leq \tilde{\omega}_{\rm qc}(\kappa,\pi)$.
Let $p\in \cl C_{\rm qc}(\pi_X,\pi_Y)$, and let $H$ be a Hilbert space, 
$\xi\in H$ be a unit vector, and 
$E$ (resp. $F$) is an operator-valued $\pi_X$-information channel 
(resp. an operator-valued $\pi_Y$-information channel) acting on 
$H$, 
such that $E$ and $F$ have commuting ranges and
$p = p_{E\cdot F,\xi}$.
By Corollary \ref{c_disformu}, $E$ (resp. $F$) admits
a dilation $\tilde{E}$ (resp. $\tilde{F}$) 
that is a projection-valued 
operator-valued $\pi_X$-information channel
(resp. a projection-valued 
operator-valued $\pi_Y$-information channel), acting on $K$, 
with commuting ranges. 
Let 
$\tilde{p} = p_{\tilde{E}\cdot \tilde{F},\tilde{\xi}}$. 
We have that $\tilde{p}\in \cl C_{\rm qc}^{\rm pr}(\pi_X,\pi_Y)$. 
We have that 
\begin{equation}\label{eq_p=ptil}
\tilde{p}(\alpha\times\beta | x,y) = p(\alpha\times\beta | x,y), 
\alpha\in \frak{A}_A, \beta\in \frak{A}_B, (x,y)\in X\times Y.
\end{equation}
Since the sets of the form $\alpha\times\beta$, for 
$\alpha\in \frak{A}_A$ and $\beta\in \frak{A}_B$, generate the $\sigma$-algebra
$\frak{A}_A\otimes\frak{A}_B$, equation (\ref{eq_p=ptil}) implies that 
$$\tilde{p}(L | x,y) = p(L | x,y), \ \ 
L\in \frak{A}_A\otimes \frak{A}_B, (x,y)\in X\times Y,$$
that is, $p = \tilde{p}$. 
Thus, 
\begin{eqnarray*}
\omega_{\rm t}(\kappa,\pi_X,\pi_Y) 
&\geq& 
(\pi\otimes \tilde{p})(\kappa)
= 
\int_X\int_Y \tilde{p}(\kappa_{(x,y)} | x,y) d\pi_Y(y) d\pi_X(x)\\
&=& \int_X\int_Y p(\kappa_{(x,y)} | x,y) d\pi_Y(y) d\pi_X(x).
\end{eqnarray*}
Taking the supremum over all $p$, we obtain the inequality 
$\tilde{\omega}_{\rm qc}^{\rm pr}(\kappa,\pi_X,\pi_Y) \geq \tilde{\omega}_{\rm qc}(\kappa,\pi_X,\pi_Y)$. The totally atomic case is treated similarly.

(i), (ii)
Let $p\in \cl C_{\rm qs}(\pi_X,\pi_Y)$, and write $p = p_{E\otimes F,\xi}$, where $E$ (resp. $F$)
is an operator-valued $\pi_X$-information 
channel 
(resp. an operator-valued $\pi_Y$-information 
channel), acting on the Hilbert space $H$ (resp. $K$), 
and $\xi\in H\otimes K$ is a unit vector. 
Using Theorem \ref{l_measfree_comm}, let $\tilde{E}$ (resp. $\tilde{F}$) be a projection-valued
$\pi_X$-information channel
(resp. a projection-valued $\pi_Y$-information channel), acting on a Hilbert space $\tilde{H}$ (resp. $\tilde{K}$),
and $V : H\to \tilde{H}$ (resp. $W : K\to \tilde{K}$) be an isometry, such that 
$E(\alpha|x) = V^* \tilde{E}(\alpha|x)V$ 
(resp. $F(\beta|y) = W^* \tilde{F}(\beta|y)W$)
for $\pi_X$-almost all $x$ (resp. $\pi_Y$-almost all $y$). 
Letting $\tilde{\xi} = (V\otimes W)\xi$, we now have that 
the NS correlation $p_{\tilde{E}\otimes \tilde{F},\tilde{\xi}}$
is in the same equivalence class with $p$ with respect to the measure 
$\pi_X\times\pi_Y$; the statement now follows from Remark \ref{r_alme}.

\end{proof}



Given measure spaces $X_i$, $Y_i$ and Borel probability measures
$\pi_i$ on $X_i\times Y_i$, $i = 1,2$, we let 
$\pi_1\dot{\times}\pi_2$ be the Borel probability measure on 
$X_1X_2\times Y_1Y_2$, with the property 
$$(\pi_1\dot{\times}\pi_2)(\alpha_1\times\alpha_2\times\beta_1\times\beta_2)
= \pi_1(\alpha_1\times\beta_1) \pi_1(\alpha_2\times\beta_2),$$
for all $\alpha_i\in \frak{A}_{X_i}$ and all $\beta_i\in \frak{A}_{Y_i}$, $i = 1,2$.

\begin{proposition}\label{p_prodva}
Let $X_i, Y_i, A_i$ and $B_i$ be compact Hausdorff spaces, 
$(\kappa_i,\pi_i)$ be a measurable game over $(X_i,Y_i,A_i,B_i)$, $i = 1,2$, and 
${\rm t}\in \{{\rm loc}, {\rm qs}, {\rm qc}, {\rm ns}\}$. 
Then 
\begin{equation}\label{eq_eqprodco0}
\omega_{\rm t}(\kappa_1\dot{\times} \kappa_2,\pi_1\dot{\times}\pi_2) 
\geq \omega_{\rm t}(\kappa_1,\pi_1)
\omega_{\rm t}(\kappa_2,\pi_2).
\end{equation}
Furthermore, if $\kappa_2 = X_2Y_2\times A_2B_2$ then 
\begin{equation}\label{eq_eqprodco}
\omega_{\rm t}(\kappa_1\dot{\times} \kappa_2,\pi_1\dot{\times}\pi_2) 
= \omega_{\rm t}(\kappa_1,\pi_1).
\end{equation}
\end{proposition}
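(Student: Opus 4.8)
The plan is to establish the two parts separately, with the inequality \eqref{eq_eqprodco0} being essentially a product construction at the level of strategies and \eqref{eq_eqprodco} requiring in addition a marginalisation (reduction) argument. For \eqref{eq_eqprodco0}, first I would fix $\varepsilon>0$ and choose near-optimal correlations $p_i\in\cl C_{\rm t}(X_i,Y_i,A_i,B_i)$ with $(\pi_i\otimes p_i)(\kappa_i)\geq \omega_{\rm t}(\kappa_i,\pi_i)-\varepsilon$, for $i=1,2$. By Theorem \ref{l_prodnscor}, the product $p_1\otimes p_2$ is a measurable no-signalling correlation over $(X_1X_2,Y_1Y_2,A_1A_2,B_1B_2)$ lying in $\cl C_{\rm t}$ (using part (ii) of that theorem for the type preservation). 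The key computation is then to check that
\begin{equation*}
\bigl((\pi_1\dot{\times}\pi_2)\otimes(p_1\otimes p_2)\bigr)(\kappa_1\dot{\times}\kappa_2)
= (\pi_1\otimes p_1)(\kappa_1)\cdot(\pi_2\otimes p_2)(\kappa_2).
\end{equation*}
This follows by expanding the left-hand side using \eqref{eq_piotimep}, observing that the $(x_1x_2,y_1y_2)$-section of $\kappa_1\dot{\times}\kappa_2$ in $A_1A_2\times B_1B_2$ is (under the canonical flip identification) the product of the $(x_1,y_1)$-section of $\kappa_1$ with the $(x_2,y_2)$-section of $\kappa_2$, then applying the defining multiplicativity of $p_1\otimes p_2$ on such product sets together with Fubini's Theorem for the product measure $\pi_1\dot{\times}\pi_2$. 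Taking $\varepsilon\to 0$ and then the supremum over all such $p_1,p_2$ yields \eqref{eq_eqprodco0}.

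For \eqref{eq_eqprodco}, the inequality $\geq$ is immediate from \eqref{eq_eqprodco0}, since $\kappa_2 = X_2Y_2\times A_2B_2$ gives $(\pi_2\otimes p_2)(\kappa_2)=1$ for every channel $p_2$, and such $p_2$ of type ${\rm t}$ always exist (e.g.\ any local deterministic channel), so $\omega_{\rm t}(\kappa_2,\pi_2)=1$. For the reverse inequality, I would take an arbitrary $p\in\cl C_{\rm t}$ over the big quadruple and fix a point $(x_2',y_2')\in X_2\times Y_2$ in order to form the reduced channel $p_{x_2',y_2'}:X_1\times Y_1\mapsto A_1\times B_1$ as in Proposition \ref{l_reduction}; that proposition guarantees $p_{x_2',y_2'}\in\cl C_{\rm t}$. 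Since $\kappa_2$ is the full space, $\kappa_1\dot{\times}\kappa_2 = \kappa_1\dot{\times}(X_2Y_2\times A_2B_2)$, and the $(x_1x_2',y_1y_2')$-section of this set is exactly $(\kappa_1)_{(x_1,y_1)}\dot{\times} A_2B_2$; hence $p\bigl((\kappa_1\dot{\times}\kappa_2)_{(x_1x_2',y_1y_2')}\,\big|\,x_1x_2',y_1y_2'\bigr) = p_{x_2',y_2'}\bigl((\kappa_1)_{(x_1,y_1)}\,\big|\,x_1,y_1\bigr)$. Integrating against $\pi_1$ and using $\tilde\omega$-free Fubini, for \emph{every} fixed $(x_2',y_2')$ we get
\begin{equation*}
\int_{X_1\times Y_1} p\bigl((\kappa_1\dot{\times}\kappa_2)_{(x_1x_2',y_1y_2')}\mid x_1x_2',y_1y_2'\bigr)\,d(\pi_1)(x_1,y_1)
= (\pi_1\otimes p_{x_2',y_2'})(\kappa_1)\leq \omega_{\rm t}(\kappa_1,\pi_1).
\end{equation*}
Finally, integrating this bound over $(x_2',y_2')$ against $\pi_2$ and applying Fubini once more (legitimate since the full section function is jointly measurable, being a section of a genuine NS correlation) shows $\bigl((\pi_1\dot{\times}\pi_2)\otimes p\bigr)(\kappa_1\dot{\times}\kappa_2)\leq \omega_{\rm t}(\kappa_1,\pi_1)$; taking the supremum over $p$ gives the reverse inequality.

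The main obstacle I anticipate is purely bookkeeping rather than conceptual: carefully tracking the flip identification $\frak{f}:A_1B_1\times A_2B_2\to A_1A_2\times B_1B_2$ through the section computations, so that the ``$\dot{\times}$'' of hypergraphs and the ``$\otimes$'' of correlations line up correctly, and confirming that all the measurability hypotheses needed for the applications of Fubini's Theorem are in place (this is where one invokes that $p$, being a no-signalling correlation, has genuinely jointly measurable sections, and that $\pi_1\dot\times\pi_2$ is, up to the canonical shuffle of coordinates, an honest product measure on $X_1X_2\times Y_1Y_2$). The type-preservation inputs (Theorem \ref{l_prodnscor}(ii) and Proposition \ref{l_reduction}) do the real work for the ${\rm t}\in\{{\rm loc},{\rm qs},{\rm qc}\}$ cases, while ${\rm t}={\rm ns}$ is covered directly by the same computations without needing any structural realisation.
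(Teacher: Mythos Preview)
Your proposal is correct and follows essentially the same route as the paper: the supermultiplicativity \eqref{eq_eqprodco0} via Theorem \ref{l_prodnscor} applied to a product of (near-)optimal strategies, and the reverse inequality in \eqref{eq_eqprodco} via Proposition \ref{l_reduction} combined with Fubini to bound the inner integral by $\omega_{\rm t}(\kappa_1,\pi_1)$ before integrating out $(x_2,y_2)$. The paper's write-up is slightly terser (it skips the $\varepsilon$-approximation and the explicit flip bookkeeping you flag), but the logical skeleton is identical.
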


\begin{proof}
By Theorem \ref{l_prodnscor}, for any $p_i\in\cl C_{\rm t}$, $i=1,2$,
$$\omega_{\rm t}(\kappa_1\dot{\times} \kappa_2,\pi_1\dot{\times}\pi_2)\geq ((\pi_1\dot{\times}\pi_2)\otimes (p_1\otimes p_2))(\kappa_1\dot{\times} \kappa_2)
= (\pi_1\otimes p_1)(\kappa_1)(\pi_2\otimes p_2)(\kappa_2),$$
giving the inequality.

To see the equality, using Proposition \ref{l_reduction}, we observe that 
\begin{eqnarray*}
&&
\hspace{-0.3cm}
\omega_{\rm t}(\kappa_1\dot{\times} \kappa_2, \pi_1\dot{\times}\pi_2)
=
\sup_{p\in\cl C_{\rm t}}(\pi_1\dot{\times}\pi_2\otimes p)(\kappa_1\dot{\times}\kappa_2)\\
&&
\hspace{-0.3cm} =\sup_{p\in\cl C_{\rm t}}\int_{X_1X_2\times Y_1Y_2}p((\kappa_1\dot{\times} \kappa_2)_{(x_1x_2, y_1y_2)}|x_1x_2,y_1y_2)d(\pi_1\dot{\times}\pi_2)(x_1x_2,y_1y_2)\\
&&
\hspace{-0.3cm} =\sup_{p\in\cl C_{\rm t}}\int_{X_1X_2\times Y_1Y_2}p((\kappa_1)_{x_1,y_1}\times A_2B_2|x_1x_2,y_1y_2)d(\pi_1\dot{\times}\pi_2)(x_1x_2,y_1y_2)\\
&&
\hspace{-0.3cm}=
\sup_{p\in\cl C_{\rm t}}\int_{X_2\times Y_2}\int_{X_1\times Y_1}p_{x_2,y_2}((\kappa_1)_{x_1,y_1}|x_1,y_1)d\pi_1(x_1,y_1))d\pi_2(x_2,y_2)\\
&&\hspace{-0.3cm} 
\leq\omega_{\rm t}(\kappa_1,\pi_1)\int_{X_2\times Y_2}d\pi_2(x_2,y_2)
= \omega_{\rm t}(\kappa_1,\pi_1).
\end{eqnarray*}
The reverse inequality follows from (\ref{eq_eqprodco0}). 
\end{proof}

\begin{remark}\label{r_sepao}
\rm 
Using Remark \ref{r_prodof2m}, we note that,
in the notation of Proposition \ref{p_prodva}, we have that 
\begin{equation}\label{eq_eqprodco01}
\omega_{\rm t,sep}(\kappa_1\dot{\times} \kappa_2,\pi_1\dot{\times}\pi_2) 
\geq \omega_{\rm t,sep}(\kappa_1,\pi_1)
\omega_{\rm t,sep}(\kappa_2,\pi_2)
\end{equation}
and, if $\kappa_2 = X_2Y_2\times A_2B_2$, then 
\begin{equation}\label{eq_eqprodco23}
\omega_{\rm t,sep}(\kappa_1\dot{\times} \kappa_2,\pi_1\dot{\times}\pi_2) 
= \omega_{\rm t,sep}(\kappa_1,\pi_1).
\end{equation}
\end{remark}

We next show that the value and the inner value of a 
measurable game include as special cases 
the value and the asymptotic value of a non-local game over a quadruple of 
finite sets. 
Suppose that $\bb{X}$, $\bb{Y}$, $\bb{A}$ and $\bb{B}$ are finite sets,
and $(\bb{G},\pi_0)$ is a non-local game over the quadruple 
$(\bb{X},\bb{Y},\bb{A},\bb{B})$.
Set $X = \prod_{k\in \bb{Z}} \bb{X}$, $Y = \prod_{k\in \bb{Z}} \bb{Y}$, 
$A = \prod_{k\in \bb{Z}} \bb{A}$ and $B = \prod_{k\in \bb{Z}} \bb{B}$. 
We associate with the set $\bb{G}$ the subset $\kappa_{\bb{G}}\subseteq X\times Y \times A\times B$
as in Subsection \ref{s_ovic}, and we let $\pi = \otimes_{k\in \bb{Z}} \pi_0$; 
clearly, $\pi$ is a shift-invariant Borel probability measure on $X\times Y$.

\begin{theorem}\label{p_coincides}
For 
${\rm t}\in \{{\rm loc},{\rm qs}, {\rm qc},{\rm ns}\}$, we have that
$\omega_{\rm t}(\kappa_{\bb{G}},\pi) = \omega_{\rm t}(\bb{G},\pi_0)$ and 
$\tilde{\omega}_{\rm t}(\kappa_{\bb{G}},\pi) = \tilde{\omega}_{\rm t}(\bb{G},\pi_0)$. 
Moreover, if
${\rm t}\in \{{\rm qs}, {\rm qc}\}$ then 
$\omega_{\rm t,sep}(\kappa_{\bb{G}},\pi) = \omega_{\rm t}(\bb{G},\pi_0)$ and 
$\tilde{\omega}_{\rm t,sep}(\kappa_{\bb{G}},\pi) = \tilde{\omega}_{\rm t}(\bb{G},\pi_0)$.
\end{theorem}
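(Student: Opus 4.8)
The plan is to prove four equalities that relate the value (and asymptotic value) of a finite game $(\bb{G},\pi_0)$ to the value (and inner value) of the associated measurable game $(\kappa_{\bb{G}},\pi)$ over the Cantor spaces, for each correlation type ${\rm t}$. First I would observe that it suffices to prove $\omega_{\rm t}(\kappa_{\bb{G}}^{(n)},\pi) = \omega_{\rm t}(\bb{G}^n,\pi_0^{\otimes n})$ for every $n\in\bb{N}$: once this is established, the first-value equality is the case $n=1$, and the inner-value equality follows by taking $n$-th roots and the appropriate $\limsup$ (noting, via Fekete's lemma, that the limit on the finite side exists, so the $\limsup$ is a genuine limit). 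Moreover, since $\kappa_{\bb{G}}^{(n)} = \kappa_{\bb{G}^n}$ under the natural identification of $\prod_{k\in\bb{Z}}\bb{X}$ with $\prod_{k\in\bb{Z}}(\bb{X}^n)$ after blocking coordinates, it is in fact enough to prove the single equality $\omega_{\rm t}(\kappa_{\bb{G}},\pi) = \omega_{\rm t}(\bb{G},\pi_0)$ and then apply it with $\bb{G}$ replaced by $\bb{G}^n$.

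To prove $\omega_{\rm t}(\kappa_{\bb{G}},\pi) = \omega_{\rm t}(\bb{G},\pi_0)$, I would argue both inequalities. For $\geq$: given a finite correlation $q\in\cl C_{\rm t}(\bb{X},\bb{Y},\bb{A},\bb{B})$, pull it back to a measurable correlation $p$ on $X\times Y\mapsto A\times B$ by setting $p(\cdot|(x_k)_k,(y_k)_k)$ to be the measure on $A\times B = \prod_k(\bb{A}\times\bb{B})$ that uses $q(\cdot|x_0,y_0)$ on the $0$-th coordinate and, say, fixed point masses (or any fixed channels) on the remaining coordinates; one checks $p\in\cl C_{\rm t}$ using the explicit operator descriptions in Definition \ref{d_subclasses} (tensoring the finite representation with rank-one pieces on the other coordinates), and that $(\pi\otimes p)(\kappa_{\bb{G}}) = \sum_{(x,y,a,b)\in\bb{G}}\pi_0(x,y)q(a,b|x,y)$ because $\kappa_{\bb{G}}$ and $\pi$ depend only on the $0$-th coordinate. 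For $\leq$: given $p\in\cl C_{\rm t}$ on the Cantor spaces, I would restrict attention to the $0$-th coordinate. The point is that $\kappa_{\bb{G}}$ and the measure $\pi$ only see $(x_0,y_0,a_0,b_0)$, so $(\pi\otimes p)(\kappa_{\bb{G}})$ depends only on the ``$0$-th marginal'' of $p$, namely the channel $q(\cdot,\cdot|x_0,y_0) := p(\{a_0=\cdot\}\times\{b_0=\cdot\} \mid \text{any }(x_k)_k\text{ with this }x_0, \text{ any }(y_k)_k\text{ with this }y_0)$. The no-signalling condition on $p$ guarantees the relevant marginals are well-defined (independent of the other input coordinates), and one must check that this $0$-th marginal $q$ lies in the corresponding finite class $\cl C_{\rm t}$ — this is where Proposition \ref{l_reduction} (restriction of a correlation of type ${\rm t}$ to a sub-quadruple is again of type ${\rm t}$) does the work, applied with $X'=\prod_{k\neq 0}\bb{X}$ etc. and an arbitrary choice of the frozen coordinates.

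The separable refinement for ${\rm t}\in\{{\rm qs},{\rm qc}\}$ requires a small additional remark: in the pull-back direction, the finite correlation $q$ has a realisation on a finite-dimensional (hence separable) Hilbert space, and tensoring with fixed rank-one/finite-dimensional data on the other coordinates keeps the ambient space separable, so $p\in\cl C_{\rm t,sep}$; the reverse inequality is automatic since $\cl C_{\rm t,sep}\subseteq\cl C_{\rm t}$ and the finite value already equals $\omega_{\rm t}(\kappa_{\bb{G}},\pi)\geq\omega_{\rm t,sep}(\kappa_{\bb{G}},\pi)$. Combining: $\omega_{\rm t}(\bb{G},\pi_0) = \omega_{\rm t,sep}(\kappa_{\bb{G}},\pi) = \omega_{\rm t}(\kappa_{\bb{G}},\pi)$, and likewise for the asymptotic/inner values.

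I expect the main obstacle to be the $\leq$ direction: making precise that the value $(\pi\otimes p)(\kappa_{\bb{G}})$ genuinely factors through a finite correlation of the same type. The subtlety is that a general measurable $p\in\cl C_{\rm t}$ need not be a ``product over coordinates'' — it may have arbitrary memory — so one cannot literally decompose $p$; instead one must argue that only the $0$-th marginal matters for this particular $\kappa_{\bb{G}}$ and $\pi$, extract that marginal as a finite channel via the section construction of Proposition \ref{l_reduction} (choosing, e.g., the frozen tail coordinates arbitrarily, since by no-signalling the marginal does not depend on them), and verify type preservation. For the inner value, the extra bookkeeping is that $\kappa_{\bb{G}}^{(n)}$ involves coordinates $0,1,\dots,n-1$ rather than a single coordinate, handled cleanly by the re-blocking identification $\kappa_{\bb{G}}^{(n)}\cong\kappa_{\bb{G}^n}$ noted above, after which the single-coordinate argument applies verbatim to the game $\bb{G}^n$.
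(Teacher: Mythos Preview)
Your overall architecture is right and mirrors the paper's approach: both arguments reduce to the section construction of Proposition \ref{l_reduction}, and the paper simply packages your two inequalities as the single identity $\omega_{\rm t}(\kappa_1\dot{\times}\kappa_2,\pi_1\dot{\times}\pi_2)=\omega_{\rm t}(\kappa_1,\pi_1)$ for trivial $\kappa_2$ (Proposition \ref{p_prodva}), applied to the factorisation $\kappa_{\bb G}^{(n)} = \bb G^n \dot{\times} (X_nY_n\times A_nB_n)$. Your re-blocking reduction to the case $n=1$ is valid and is a legitimate alternative to the paper's direct treatment of each $n$.

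There is, however, a genuine gap in your $\leq$ direction. You assert that ``by no-signalling the marginal does not depend on'' the frozen tail coordinates $(x_k)_{k\neq 0}$, $(y_k)_{k\neq 0}$. This is false: no-signalling only says that the $A$-marginal of $p$ is independent of the \emph{entire} $Y$-input and the $B$-marginal is independent of the $X$-input; it says nothing about the joint $(a_0,b_0)$-marginal being independent of $(x_k)_{k\neq 0}$ or $(y_k)_{k\neq 0}$. A measurable $p\in\cl C_{\rm t}$ may well have $p_{x',y'}$ varying nontrivially with the tail $(x',y')$, so there is no single finite correlation $q$ to extract.

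The fix is not to claim independence but to \emph{average}. Writing $\pi = \pi_0 \dot{\times} \pi'$ with $\pi'$ the product measure on the tail, Fubini gives
\[
(\pi\otimes p)(\kappa_{\bb G}) = \int_{X'\times Y'} (\pi_0\otimes p_{x',y'})(\bb G)\, d\pi'(x',y'),
\]
and since each section $p_{x',y'}\in \cl C_{\rm t}(\bb X,\bb Y,\bb A,\bb B)$ by Proposition \ref{l_reduction}, every integrand is bounded by $\omega_{\rm t}(\bb G,\pi_0)$. This is exactly the computation inside the proof of Proposition \ref{p_prodva}, and it is what your argument needs in place of the incorrect no-signalling claim.
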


\begin{proof}
Let  $X_n=\prod_{k\in\mathbb Z\setminus\{0,1,\ldots,n\}}\mathbb X$ and similar define $Y_n$, $A_n$ and $B_n$.
Then $\kappa_{\bb G} = \bb G\dot{\times} (X_0Y_0\times A_0B_0)$. By Proposition \ref{p_prodva}, $\omega_{\rm t}(\kappa_{\bb G},\pi)=\omega_{\rm t}(\bb G,\pi_0)$ in the case 
${\rm t} \in \{{\rm loc},{\rm ns}\}$ and, by Remark \ref{r_sepao}, 
$\omega_{\rm t,sep}(\kappa_{\bb G},\pi)=\omega_{\rm t}(\bb G,\pi_0)$ in the case 
${\rm t} \in \{{\rm qs},{\rm qc}\}$.
Similarly, $\kappa_{\bb G}^{(n)} = {\bb G}^{(n)}\dot{\times}(X_nY_n\times A_nB_n)$, giving $\omega_{\rm t}(\kappa_{\bb G}^{(n)},\pi)=\omega_{\rm t}({\bb G}^{(n)},\pi_0^{\otimes n})$
for ${\rm t} \in \{{\rm loc},{\rm ns}\}$, and 
$\omega_{\rm t,sep}(\kappa_{\bb G}^{(n)},\pi)=\omega_{\rm t}({\bb G}^{(n)},\pi_0^{\otimes n})$
for ${\rm t} \in \{{\rm qs},{\rm qc}\}$, 
and the claim is proved.
\end{proof}


\subsection{An example: a game of Markov type}\label{ss_examples}

In this subsection, we give a class of examples of 
non-local games with a one-site memory, 
whose asymptotic values are strictly smaller than their inner values. 
Let 
$\bb{X}, \bb{Y}, \bb{A}$ and $\bb{B}$ be finite sets 
and 
$\bb{G} \subseteq \bb{X}\times \bb{Y}\times \bb{A}\times \bb{B}$. 
Set 
$$\hspace{-4cm}\frak{m}_{\bb{G}} = \{(x_i)_{i\in \bb{Z}},
(y_i)_{i\in \bb{Z}},(a_i)_{i\in \bb{Z}},(b_i)_{i\in \bb{Z}}) : $$
$$ \hspace{4cm} (x_0,y_0,a_0,b_0)\in \bb{G} \mbox{ or } (x_1,y_1,a_1,b_1)\in \bb{G}\}.$$

\begin{theorem}\label{th_markov}\label{th_markov}
Let 
$\bb{X}, \bb{Y}, \bb{A}$ and $\bb{B}$ be finite sets, 
$\pi_0 : \bb{X}\times \bb{Y} \to [0,1]$ be a probability distribution, 
$\bb{G} \subseteq \bb{X}\times \bb{Y}\times \bb{A}\times \bb{B}$, and 
${\rm t}\in \{{\rm loc}, {\rm qs}, {\rm qc}, {\rm ns}\}$.
Set $\pi = \otimes_{i\in \bb{Z}} \pi_0$. 
Assume that $\tilde{\omega}_{\rm t}(\bb{G},\pi_0) < 1$. 
Then 
$\tilde{\omega}_{\rm t}(\frak{m}_{\bb{G}},\pi) 
\geq 
\sqrt{\tilde{\omega}_{\rm t}(\bb{G},\pi_0)}$. 
\end{theorem}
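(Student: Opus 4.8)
The plan is to exploit the combinatorial shape of the iterated hypergraph $\frak{m}_{\bb G}^{(n)}$. Writing $W_j := \{((x_k)_k,(y_k)_k,(a_k)_k,(b_k)_k) : (x_j,y_j,a_j,b_j)\in \bb G\}$, one checks directly that $(T_X\times T_Y\times T_A\times T_B)^{-k}(\frak{m}_{\bb G}) = W_k\cup W_{k+1}$, so that $\frak{m}_{\bb G}^{(n)} = \bigcap_{k=0}^{n-1}(W_k\cup W_{k+1})$. A point lies in this set precisely when the set of coordinates at which it ``wins $\bb G$'' is a vertex cover of the path on $\{0,1,\dots,n\}$ with edges $\{k,k+1\}$; in particular it suffices to win $\bb G$ at all $m:=\lceil n/2\rceil$ odd coordinates $1,3,\dots,2m-1$. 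Hence
\[
\frak{m}_{\bb G}^{(n)}\ \supseteq\ \bigcap_{i=1}^{m} W_{2i-1},
\]
and, since $\omega_{\rm t}(\cdot,\pi)$ is monotone in its hypergraph argument, $\omega_{\rm t}(\frak{m}_{\bb G}^{(n)},\pi)\ge \omega_{\rm t}\big(\bigcap_{i=1}^m W_{2i-1},\pi\big)$.

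Next I would identify the latter with the value of the honest $m$-fold product game. The set $\bigcap_{i=1}^m W_{2i-1}$ is the image of $\kappa_{\bb G}^{(m)}$ under the homeomorphisms of $X,Y,A,B$ induced by any permutation $\sigma$ of $\bb Z$ with $\sigma(i-1)=2i-1$, $i=1,\dots,m$. Since $\pi=\otimes_{k\in\bb Z}\pi_0$ is invariant under every coordinate permutation, and each resource $\cl C_{\rm t}$ is invariant under arbitrary homeomorphisms of the alphabets, not just the shifts (the verification is identical), the value $\omega_{\rm t}(\cdot,\pi)$ is unchanged under a common coordinate permutation of $X,Y,A,B$; thus $\omega_{\rm t}\big(\bigcap_{i=1}^m W_{2i-1},\pi\big)=\omega_{\rm t}(\kappa_{\bb G}^{(m)},\pi)$, which equals $\omega_{\rm t}(\bb G^{m},\pi_0^{\otimes m})$ by Proposition~\ref{p_prodva}, exactly as in the proof of Theorem~\ref{p_coincides}. (If one prefers to avoid the permutation-invariance remark, the same conclusion follows by taking a near-optimal $p\in\cl C_{\rm t}$ for $(\bb G^m,\pi_0^{\otimes m})$, pulling it back along the continuous coordinate projection $X\times Y\to\bb X^m\times\bb Y^m$, and padding the outputs at the remaining coordinates by a fixed symbol; the resulting measurable correlation is of type ${\rm t}$ and realises the same success probability.) Combining, for $m=\lceil n/2\rceil$,
\[
\omega_{\rm t}(\frak{m}_{\bb G}^{(n)},\pi)\ \ge\ \omega_{\rm t}\big(\bb G^{\lceil n/2\rceil},\pi_0^{\otimes \lceil n/2\rceil}\big).
\]

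Finally I would pass to exponential rates. Put $v:=\tilde\omega_{\rm t}(\bb G,\pi_0)$; we may assume $v>0$, as otherwise the claim is trivial. Since $\omega_{\rm t}(\bb G^{m},\pi_0^{\otimes m})^{1/m}\to v$, for every $\varepsilon\in(0,v)$ and all large $n$ we obtain $\omega_{\rm t}(\frak{m}_{\bb G}^{(n)},\pi)\ge (v-\varepsilon)^{\lceil n/2\rceil}$, whence
\[
\omega_{\rm t}(\frak{m}_{\bb G}^{(n)},\pi)^{1/n}\ \ge\ (v-\varepsilon)^{\lceil n/2\rceil/n}\ \longrightarrow\ (v-\varepsilon)^{1/2}
\]
as $n\to\infty$, because $\lceil n/2\rceil/n\to 1/2$. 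Taking $\limsup$ over $n$ and then letting $\varepsilon\to 0$ yields $\tilde\omega_{\rm t}(\frak{m}_{\bb G},\pi)\ge v^{1/2}=\sqrt{\tilde\omega_{\rm t}(\bb G,\pi_0)}$. The main obstacle is the middle step: making precise that the ``spread-out'' win event $\bigcap_{i=1}^m W_{2i-1}$ carries exactly the ${\rm t}$-value of the $m$-fold product of $\bb G$; once the correlation transport (via permutation-invariance of $\pi$ and $\cl C_{\rm t}$, or via the explicit pullback) is settled, the combinatorics and the limit computation are routine. Note that the hypothesis $\tilde\omega_{\rm t}(\bb G,\pi_0)<1$ is not actually used in this inequality; it is what makes the conclusion interesting, since then $\sqrt v>v$.
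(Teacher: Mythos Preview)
Your proof is correct and follows essentially the same strategy as the paper: observe that winning $\bb G$ at roughly half of the coordinates $0,\dots,n$ already forces membership in $\frak m_{\bb G}^{(n)}$, and then identify the value of that event with $\omega_{\rm t}(\bb G^{m},\pi_0^{\otimes m})$. The paper restricts to even $n=2k$ and packages the reduction through the auxiliary product game $\bb G\dot\times\bb F$ together with Proposition~\ref{p_prodva} and Theorem~\ref{p_coincides}, whereas you argue directly via coordinate-permutation invariance of $\pi$ and of $\cl C_{\rm t}$; your choice of the odd coordinates $1,3,\dots,2m-1$ also handles the boundary edge $\{n-1,n\}$ cleanly for every $n$. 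Your remark that the hypothesis $\tilde\omega_{\rm t}(\bb G,\pi_0)<1$ is not needed for the inequality itself is correct.
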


\begin{proof}
Write $\bb{F} = \bb{X}\times \bb{Y} \times \bb{A} \times \bb{B}$. 
We have that 
$$\frak{m}_{\bb{G}}^{(n)} = \{(x_i)_{i\in \bb{Z}},
(y_i)_{i\in \bb{Z}},(a_i)_{i\in \bb{Z}},(b_i)_{i\in \bb{Z}}) : 
\mbox{ for every } i = 0,\dots,n-1, $$
$$
\hspace{3cm} \mbox{ either } (x_i,y_i,a_i,b_i)\in \bb{G} \mbox{ or } 
(x_{i+1},y_{i+1},a_{i+1},b_{i+1})\in \bb{G}\}.$$
Thus, for an even $n\in \bb{N}$, say $n = 2k$, we have 
$$\kappa_{\bb{G}\dot{\times}\bb{F}}^{(k)} \subseteq \frak{m}_{\bb{G}}^{(n)}.$$
By Proposition \ref{p_prodva} and Theorem \ref{p_coincides}, 
$$
\omega_{\rm t}(\bb{G}^k,\pi_0^{\otimes k})
\leq 
\omega_{\rm t}(\frak{m}_{\bb{G}}^{(n)},\pi);$$
thus, 
$$
\sqrt{\omega_{\rm t}(\bb{G}^k,\pi_0^{\otimes k})^{\frac{1}{k}}} = 
\omega_{\rm t}(\bb{G}^k,\pi_0^{\otimes k})^{\frac{1}{n}}
\leq 
\omega_{\rm t}(\frak{m}_{\bb{G}}^{(n)},\pi)^{\frac{1}{n}}.$$
It follows that 
\begin{eqnarray*}
\sqrt{\tilde{\omega}_{\rm t}(\bb{G},\pi_0)} 
& = &  
\lim_{k\to\infty} \sqrt{\omega_{\rm t}(\bb{G}^k,\pi_0^{\otimes k})^{\frac{1}{k}}} 
\leq 
\limsup_{k\in \bb{N}}\omega_{\rm t}(\frak{m}_{\bb{G}}^{(n)},\pi)^{\frac{1}{n}}\\
& \leq & 
\limsup_{m\in \bb{N}}\omega_{\rm t}(\frak{m}_{\bb{G}}^{(m)},\pi)^{\frac{1}{m}}
= 
\tilde{\omega}_{\rm t}(\frak{m}_{\bb{G}},\pi).
\end{eqnarray*}
\end{proof}


\section{Questions}\label{s_questions}

The disambiguation result Proposition \ref{c_disamb}, in the case of quantum commuting values, 
is currently only available for continuous or totally atomic measures. In addition, 
the passage to the projective realisation of the correlations, utilising spectral measures as 
opposed to arbitrary quantum probability measures, requires a non-separable Hilbert space. 
For the resolution of these issues, one would need to establish 
an analogue of Theorem \ref{disambiguation} hold for not necessarily continuous measures $\mu$ and $\nu$; 
such an analogue can also be viewed as a bivariate version of Theorem \ref{l_measfree_comm}.

\begin{question}
\rm 
Let $X$ and $Y$ be compact Hausdorff spaces, and $\mu$ and $\nu$ be Borel probability measures on $X$ and $Y$,
respectively. 
Further, let $\cl A$, $\cl B$ separable $C^*$-algebras, and $\phi:\cl A\to L^\infty(X,\mu,\cl B(H))$  and $\psi:\cl B\to L^\infty(Y,\nu,\cl B(H))$ be unital completely positive maps with commuting ranges. 
Do there exist a 
  Hilbert space $K$, an isometry $V : H\to K$, a $\mu$-measurable family $(\pi_x)_{x\in X}$  and a $\nu$-measurable family $(\rho_y)_{y\in Y}$ of 
		unital *-representations of $\cl A$  and $\cl B$ on $K$ respectively with commuting ranges, such that if $u\in \cl A$ and $v\in\cl B$, then
		$$\phi(u)(x)\psi(v)(y) = V^*\pi_x(u)\rho_y(v) V, \ \ \text{ $\mu\times\nu $-almost everywhere?}$$
\end{question}

\smallskip

In the case of finite non-local games without memory, an application of Fekete's Lemma shows that the asymptotic 
value of the game can be obtained as a limit of an increasing sequence 
of normalised values of $n$-fold product games. 
Currently we do not know of this is the case in the 
general case we propose in Section \ref{s_val}:

\begin{question}\label{q_limit}
\rm 
Under what conditions can we replace the limsup in the definition of the inner 
$\cl R$-value of probabilistic measurable 
hypergraph in (\ref{eq_Rvalue}) with a limit?
It suffices to have supermultiplicativity in the sense that 
$$\omega_{\cl R}(\kappa^{(n+m)},\pi) \geq \omega_{\cl R}(\kappa^{(n)},\pi)\omega_{\cl R}(\kappa^{(m)},\pi).$$
We note that this would involve properties both of the resource $\cl R$
and the measure $\pi$. 
\end{question}

\smallskip

It has been of substantial interest, in both the classical \cite{raz} and 
the quantum \cite{csuu} case to study the growth behaviour of the values of 
iterated copies of a finite non-local game. In some cases, a perfect parallel 
repetition result holds for the quantum value. 

\begin{question}
\rm 
Is there a parallel repetition theorem for any of the value types of a 
non-local game with memory? 
\end{question}

\smallskip

\medskip

\subsection*{Acknowledgements}
The authors would like to thank 
Andreas Winter for useful discussions on games with memory. 
The first two authors were supported by 
NSF grants CCF-2115071 and DMS-2154459. The third author was supported by the Swedish Research Council. 
The authors thank the Department of Mathematical Sciences 
of the University of Delaware for funding allowing the visit of the third author
to Delaware in Spring 2024.


\end{document}